\documentclass[12pt]{article}
\usepackage[a4paper,margin=0.8in]{geometry}

\usepackage[round]{natbib}

\usepackage{array}
\newcolumntype{P}[1]{>{\raggedright\arraybackslash}p{#1}}

\usepackage{makecell}
\usepackage{appendix}
\usepackage{booktabs} 
\usepackage{pifont}
\usepackage[table]{xcolor}
\usepackage{amsmath, amssymb, amsthm}
\usepackage{algorithm}
\usepackage{algpseudocode}
\theoremstyle{plain}
\newtheorem{theorem}{Theorem}
\usepackage{tabularx}
\usepackage{newunicodechar}
\newunicodechar{★}{\ding{72}}
\usepackage[table]{xcolor}
\usepackage{xr-hyper}
\usepackage[colorlinks=true,linkcolor=blue,citecolor=blue,urlcolor=blue]{hyperref} 
\usepackage{cleveref}
\usepackage{caption}
\usepackage{subcaption}
\usepackage{graphicx}
\usepackage{enumitem}
\usepackage{dashrule}
\usepackage{authblk}

\usepackage{setspace}

\algrenewcommand\alglinenumber[1]{}

\definecolor{darkcherry}{HTML}{450314}
\definecolor{cherry}{HTML}{5A051A}
\definecolor{dark_purple}{HTML}{6E3D6F}
\definecolor{purple}{HTML}{7c397d}
\definecolor{red}{HTML}{D20A2E}
\definecolor{pink}{HTML}{EE1E52}
\definecolor{blue}{HTML}{1f77b4}
\definecolor{light_light_pink}{HTML}{F2BDCD}
\definecolor{grey}{HTML}{808080}
\definecolor{darkblue}{RGB}{31,119,180}

\newcommand{\trueposterior}{\tilde{\Pi}_\text{ref}}

\newcommand{\candidateposterior}{\tilde{\Pi}^\lambda}

\newcommand{\trueprior}{\Pi_\text{ref}}

\newcommand{\candidateprior}{\Pi^\lambda}

\newcommand{\datadim}{d_\mathcal{X}}
\newcommand{\paramdim}{d_\Theta}

\newcommand{\natparamdim}{d_\Lambda}
\newcommand{\natparamdimprior}{d_{{\Lambda}_{\pi}}}
\newcommand{\natparamdimloss}{d_{{\Lambda}_{L}}}

\newcommand{\estimatedFD}{\widehat{\mathrm{FD}}_m}

\newcommand{\FD}{\mathrm{FD}}
\newcommand{\estimatedFDposteriors}{\estimatedFD(\trueposterior \| \candidateposterior)}

\newtheorem{proposition}{Proposition}
\newtheorem{assumption}{Assumption}
\newtheorem{lemma}{Lemma}

\newtheorem{corollary}{Corollary}

\title{A computationally-tractable measure of global sensitivity for sampling-based Bayesian inference}
\author[1]{Arina Odnoblyudova\thanks{Corresponding author: arina.odnoblyudova.24@ucl.ac.uk}}
\author[1]{Charita Dellaporta}
\author[1]{Fran\c{c}ois-Xavier Briol}
\affil[1]{Department of Statistical Science, University College London}

\begin{document}

\date{\vspace{-6ex}}

\maketitle

% Using \doublespacing in the preamble 
% changes the text to double-line spacing
 \doublespacing

\begin{abstract}
    Bayesian inference can often be sensitive to the choice of hyperparameters of the prior or likelihood, yet defining and quantifying this sensitivity in a principled and computationally feasible way remains challenging in practice. Unfortunately, existing sensitivity methods are rarely applicable in modern Bayesian workflows due to their high computational cost and poor performance in moderate to high dimensions. To address these limitations, we introduce a new approach to global sensitivity analysis based on the Fisher divergence. Our method only requires a set of samples from a reference posterior and the ability to evaluate score functions, making it broadly computationally tractable. Under  regularity conditions, it controls changes in the whole posterior, and provides a bound on the impact of perturbations on the first two moments. We demonstrate these strengths on challenging Bayesian inference problems which are practically out of reach of existing approaches, including generalised Bayesian inference for unnormalised models, inference in Bayesian models of time series, and neural simulation-based inference. 
\end{abstract}

 \noindent\textbf{Keywords:}
 Bayesian sensitivity analysis, Global sensitivity, Fisher divergence, Robust Bayesian inference, Score-based methods.

\section{INTRODUCTION}

Bayesian posteriors can be highly sensitive to modelling choices, particularly prior and likelihood hyperparameters. This sensitivity may substantially affect uncertainty quantification and predictive performance \citep{berger1994overview, ruggeri2005robust, Kleijn2006,doss2024scalable}, and may arise even when modelling choices vary within a plausible range. This challenge is further amplified in generalisations \citep{bissiri2016general, knoblauch2022generalized} or approximations \citep{Cranmer2020} of Bayesian inference, which frequently introduce additional hyperparameters whose influence is hard to assess. Consequently, evaluating sensitivity is a key component of the Bayesian workflow \citep{gelman2020bayesian}.

This task falls within the scope of Bayesian sensitivity analysis \citep{berger1994overview,ruggeri2005robust}, with existing methods generally categorised into local and global approaches, both of which analyse sensitivity relative to a reference prior/likelihood pair. The local approach \citep{ruggeri1993infinitesimal, gustafson1996local,roos2015sensitivity, al2021measuring, giordano2023evaluating, kallioinen2024detecting,doss2024scalable, di2025likelihood} employs differential techniques to assess the effect of infinitesimal perturbations around this reference. It is well studied and often computationally tractable, but its interpretation can be less intuitive and its relevance limited to small perturbations. 

In contrast, global Bayesian sensitivity analysis \citep{berger1986robust, ruggeri2005robust, kurtek2015bayesian, ghaderinezhad2022wasserstein, ho2023global}, the main focus of this paper, aligns more closely with practical concerns: it evaluates the maximum change in a quantity of interest within a neighbourhood of the reference choice. Practitioners often have a sense of a plausible range of alternative priors and likelihoods, and this approach directly quantifies how much their inferences could change within that range. 
Existing global sensitivity approaches can be categorised according to three main design choices: a measure used to quantify sensitivity, such as a posterior functional or the discrepancy to a reference posterior, a  neighbourhood in which sensitivity is measured, and a computational tool through which the method is implemented, typically combining an estimator for the sensitivity measure and a numerical optimiser. These algorithmic components interact, leading to significant trade-offs between three key desiderata: (i) \textit{the strength} of the sensitivity measure, i.e. whether the measure controls changes in the entire posterior or only some functionals such as the mean, (ii) the  \textit{interpretability} of the measure, and (iii) the \textit{computational tractability} of the method.

To date, no existing method has successfully balanced these desiderata in a way that is practically relevant for most modern Bayesian modelling problems, where the dimensionality of the posterior can be large, and posterior computation is expensive. The biggest limitation has been computational, with many existing methods either being limited to low dimensions \citep{moreno1991robust},  conjugate posteriors \citep{ruggeri2005robust}, requiring a large number of runs of a sampling method \citep{kurtek2015bayesian, ghaderinezhad2022wasserstein}, or requiring the solution of challenging  non-convex optimisation problems \citep{wasserman1993linearization, lavine2000linearization}.
Consequently, global sensitivity analysis is rarely applied in practice. Instead, many practitioners adopt an informal approach: they evaluate a few variants and assess their impact on the posterior. While straightforward, this can easily overlook other plausible modelling choices that may lead to substantially different inferences.

Our goal in this paper is to address the computational challenges that have limited the practical use of global sensitivity analysis. To this end, we propose a new sensitivity measure based on the \textit{Fisher divergence} (FD) \citep{hyvarinen2005estimation}. This measure is both strong and interpretable. Under mild regularity conditions, it is a statistical divergence, and under slightly stronger conditions, it controls sensitivity measures based on the difference in the first two moments, or constructed through the total variation, Kullback-Leibler and Wasserstein distance. The key advantage of the FD, however, is its computational tractability. It can be estimated using only a single set of samples from the reference posterior, without requiring repeated posterior sampling for each perturbed model in the sensitivity neighbourhood. This is achieved through evaluations of the score of these perturbed posteriors,  a key quantity used for many methods in Bayesian computation, including algorithms based on Langevin dynamics and Hamiltonian Monte Carlo \citep{Hoffman2014,Barp2018HMCReview,Fearnhead2024}. Furthermore, the FD has computational complexity which is linear in the number of samples and in the dimensionality of the posterior, and can be  approximated at a rate depending on the square root of the number of samples regardless of dimensionality. This makes it much better suited for measuring sensitivity in high-dimensional posteriors. Armed with this sensitivity measure, we consider two settings. The first, most general, consists of performing global optimisation over the space of hyperparameters, allowing us to consider a broad range of problems of practical relevance. The second, more limited, case considers neighbourhoods constructed through bounded convex polytopes of hyperparameters in exponential family priors. The latter setting does however allow us to recast the optimisation problem in a tractable form which can be solved exactly in a finite number of evaluations of the FD. 

We demonstrate the widespread applicability of our FD-based global sensitivity measure through extensive experiments on challenging problems including generalised Bayesian inference for an Ising model, Bayesian inference for autoregressive time-series, and amortised neural simulation based inference for a simulator in communications engineering.

\section{BACKGROUND}

\paragraph{Notation} We write $\mathcal{P}(\mathcal{Z})$ for the set of probability measures on some set $\mathcal{Z} \subseteq \mathbb{R}^{d_\mathcal{Z}}$. We denote by $L^2(P)$ the Lebesgue space of functions $f: \mathcal{Z} \rightarrow \mathbb{R}$ with respect to some measure $P \in \mathcal{P}(\mathcal{Z})$ such that $\int_{\mathcal{Z}} |f(z)|^2 \, d P (z) < \infty$. Moreover, we write $C^r(\mathcal{Z})$ for the set of $r$-times continuously differentiable functions $f: \mathcal{Z} \rightarrow \mathbb{R}$. For any $f \in C^1(\mathcal{Z})$ we denote its gradient by $\nabla f(z)=(\partial_1f(z), \dots, \partial_Df(z))^\top$, with $\partial_i=\partial/\partial z_i$. 
We use \textit{capital} letters (e.g. $P \in \mathcal{P}(\mathcal{Z})$) to denote probability measures and the corresponding \textit{lowercase} letters (e.g. $p: \mathcal{Z} \rightarrow \mathbb{R}^{+}$) to denote the associated probability density or probability mass functions. Finally, for any density $p: \mathcal{Z} \rightarrow \mathbb{R}^{+}$, we define the support $\text{supp}(p)\subseteq \mathcal{Z}$ as the domain where $p(z)>0$, and use $s_p := \nabla_z \log p(z)$ to denote the \textit{score function} of $p$. Throughout, we use $\lambda \in \Lambda \subseteq\mathbb{R}^{\natparamdim}$ to denote  hyperparameters of the inference procedure.

\subsection{Bayesian Inference and Sensitivity Analysis}

Consider some observed dataset $x_{1:n} = \{x_i\}_{i=1}^n$ consisting of (possibly dependent) realisations from some data-generating distribution $Q \in \mathcal{P}(\mathcal{X})$ for some continuous or discrete domain $\mathcal{X} \subseteq \mathbb{R}^{\datadim}$. We posit a parametric family $\{P_\theta^\lambda\}_{\theta \in \Theta} \subset \mathcal{P}(\mathcal{X})$ with corresponding likelihood $p_\theta(\cdot|\lambda):\mathcal{X}^n \to \mathbb{R}^+$, where $\theta \in \Theta \subseteq \mathbb{R}^{\paramdim}$ denotes the parameters of the model and will be assumed to be continuous, and $\lambda \in \Lambda$ are hyperparameters. To perform Bayesian inference, the modeller must select a prior distribution $\candidateprior \in \mathcal{P}(\Theta)$ which represents initial beliefs about $\theta$, and whose density will be written $\pi(\cdot|\lambda):\Theta \to \mathbb{R}^+$. Given this prior, inference is done through the posterior distribution with density $
    \tilde{\pi}_{\text{Bayes}}^\lambda (\theta|x_{1:n}) \propto   p_\theta(x_{1:n}|\lambda) \pi(\theta|\lambda)$, 
where $\propto$ denotes equality up to a multiplicative normalisation constant. Depending on the specific model and observed data, the posterior can be more or less impacted by changes in the hyperparameters $\lambda$. This is also the case for a plethora of approximations or generalisations of Bayesian inference, which we will refer to as posterior belief distributions. To cover all of these cases simultaneously, we will denote by $\candidateposterior \in \mathcal{P}(\Theta)$ any belief distribution constructed as a generalised, also called Gibbs, posterior \citep{bissiri2016general, knoblauch2022generalized} with density: 
\begin{align}\label{eq:gen_bayes}
    \tilde{\pi}^\lambda (\theta|x_{1:n}) \propto \exp\left(- L(\theta; x_{1:n},\lambda)\right) \pi(\theta|\lambda). 
\end{align}
where $L(\theta; x_{1:n},\lambda)$ is an empirical loss, and we will refer to these as `posteriors' for simplicity. A very common setting is $L(\theta; x_{1:n},\lambda) = \lambda l(\theta; x_{1:n})$ for some empirical loss $l: \Theta \times \mathcal{X}^n \to \mathbb{R}$, assumed to be bounded in $\theta$ from below, in which case the hyperparameter is called the learning rate $\lambda >0$ and controls the influence of the loss relative to the prior. 
Our paper will focus on several important instances of this general framework. This includes standard Bayes, which is recovered through $L(\theta; x_{1:n})= - \log p_\theta(x_{1:n})$. It also includes divergence-based posteriors \citep{jewson2018principles,matsubara2022robust,matsubara2024generalized}, where the loss is obtained through a statistical divergence between the model distribution and the empirical distribution of the data, and likelihood-free methods such as neural-likelihood estimation, which use a log-loss based on a neural surrogate of the likelihood \citep{Papamakarios2019}. However, the framework goes far beyond these examples, and also covers power posteriors \citep{grunwald2017inconsistency}, pseudo- or composite-likelihood posteriors \citep{Ribatet2012} and approximate Bayesian computation posteriors \citep{Schmon2020} amongst others.

Our aim in this paper is to measure the sensitivity of  $\candidateposterior$ to hyperparameters $\lambda \in \Lambda$. To formalise this, we follow the classical formulation of Bayesian global sensitivity analysis as reviewed by \citet{ ruggeri2005robust}. 
This requires the choice of a sensitivity measure, which is a functional 
$\rho: \mathcal{P}(\Theta) \to \mathbb{R}$ with respect to which we will assess changes. Examples include posterior expectations, i.e. $\rho(\tilde{\Pi}) := \mathbb{E}_{\theta \sim \tilde{\Pi}}[\tau(\theta)]$ for some $\tau: \Theta \to \mathbb{R}$  \citep{lavine1991sensitivity,ruggeri2005robust,ho2023global}, or the discrepancy between $\candidateposterior$ and some reference $\trueposterior\in \mathcal{P}(\Theta)$, i.e. $\rho(\tilde{\Pi}) := D(\trueposterior||\tilde{\Pi})$ where $D:\mathcal{P}(\Theta)\times \mathcal{P}(\Theta) \rightarrow \mathbb{R}$ is a statistical divergence   \citep{kurtek2015bayesian, ghaderinezhad2022wasserstein}.  
We also require a neighbourhood consisting of posteriors corresponding to plausible hyperparameters, which will be denoted $
    \mathcal{P}_{\Gamma} := \{\candidateposterior \in \mathcal{P}(\Theta) : \lambda \in  \Gamma \subseteq \Lambda\}$. 
The global Bayesian sensitivity over $\mathcal{P}_\Gamma$ is then defined as the largest possible change in 
$\rho(\candidateposterior)$ for hyperparameters in the neighbourhood: 
\begin{align}
\label{eq:posterior_imprecision}
S(\Gamma) := \sup_{ \tilde{\Pi} \in \mathcal{P}_\Gamma} \rho(\tilde{\Pi}) - \inf_{\tilde{\Pi} \in \mathcal{P}_\Gamma} \rho(\tilde{\Pi}) = \sup_{ \lambda \in \Gamma} \rho(\candidateposterior) - \inf_{\lambda \in \Gamma} \rho(\candidateposterior).
\end{align}
From a computational viewpoint, $S(\Gamma)$  must be estimated through a statistical estimator 
for $\rho(\tilde{\Pi})$, and a numerical optimiser to solve the supremum and infimum problems. 
Although a large number of methods have been proposed over the last four decades, these tend to suffer from the same three computational challenges: 
\begin{itemize}
 \item Firstly, for a given candidate posterior $\candidateposterior$ with fixed $\lambda \in \Lambda$, estimating $\rho(\candidateposterior)$ can be computationally challenging. For instance, when $\rho$ is defined through a statistical divergence, its estimation typically requires sampling from both  $\tilde{\Pi}_{\text{ref}}$ and $\candidateposterior$. If this estimator converges slowly in the number of samples or is computationally expensive, this will quickly become prohibitive. For example, \citep{ghaderinezhad2022wasserstein,mingo2026bayesian} quantify sensitivity through the Wasserstein distance, for which the most common estimator incurs a cost of $\mathcal{O}(m^3)$ and converges at a rate of $\mathcal{O}(m^{-1/\paramdim})$ in the number of samples $m$ when $\paramdim \geq 2$ \citep{Fournier2015}. 
 Similarly, \cite{kurtek2015bayesian} use the Fisher-Rao distance, for which common sample-based estimators would also suffer from a large computational cost and curse of dimensionality due to the need to perform (nonparametric) density estimation.
 For this reason, most Bayesian global sensitivity methods rely on simpler posterior functionals, such as posterior moments \citep{ruggeri2005robust,ho2023global} which have a cost of $\mathcal{O}(m)$ and convergence rate of $\mathcal{O}(m^{-1/2})$, but are not sufficient statistics for the entire posterior.

 \item  Secondly, beyond the difficulty of estimating $\rho$ once, most methods require estimating it repeatedly for candidates $\tilde{\Pi}^{\lambda_1},\tilde{\Pi}^{\lambda_2},\ldots$ for $\lambda_1,\lambda_2,\ldots \in \Lambda$ throughout the optimisation procedure. This often leads to the use of computationally expensive sampling schemes \citep{zhu2011bayesian,kurtek2015bayesian,giacomini2021robust,ghaderinezhad2022wasserstein}. One alternative to avoid this is through importance sampling using samples from $\trueposterior$, or to use a few iterations of sequential Monte Carlo; see e.g. \cite{ho2023global}. However, such approaches can lead to high-variance estimators of $\rho(\candidateposterior)$.

 \item  Thirdly, solving the optimisation problems in (\ref{eq:posterior_imprecision}) can be challenging, particularly when these are non-convex. 
Several approaches have been proposed; see e.g. the linearisation algorithm of \citep{lavine1991sensitivity} which approaches this trough a sequence of linear optimisation problems. However, most papers bypass the issue by taking a neighbourhood with a finite number of elements \citep{kurtek2015bayesian,ghaderinezhad2022wasserstein,ho2023global}, allowing for a direct solution to the supremum and infimum but leading to something more akin to the informal approach, thereby severely limiting their applicability.
\end{itemize}
Put together, these computational challenges mean that existing global sensitivity methods are not well-suited for modern Bayesian inference problems.

\subsection{The Fisher Divergence} \label{sec:fisherbackground}

This paper will tackle some of the aforementioned computational limitations by introducing a sensitivity measure based on the \textit{Fisher divergence} (FD; \cite{hyvarinen2005estimation}), also called the Hyv\"{a}rinen or score-matching divergence. The FD compares two distributions through their score functions. Given  $\Pi_1,\Pi_2 \in \mathcal{P}(\Theta)$ with corresponding scores $s_{\pi_1},s_{\pi_2}$, it is given by 
\begin{align}
    \FD(\Pi_1 \|\Pi_2) &:= \mathbb{E}_{\theta \sim \Pi_1} \left[ \| s_{\pi_1}(\theta) - s_{\pi_2}(\theta) \|_2^2 \right],
\end{align}
where $\| \cdot \|_2$ denotes the Euclidean norm. 
For the FD to be well-defined and a statistical divergence, that is, $\FD(\Pi_1\|\Pi_2) = 0 \iff \Pi_1 \equiv \Pi_2$ for all $\Pi_1,\Pi_2 \in \mathcal{P}(\Theta)$, certain regularity conditions are required, which we discuss in Section \ref{sec:method-FD}. 
As we will see in \Cref{subsec:interpretability}, the FD can be used to upper bound (under certain regularity conditions) the (1 or 2)-Wasserstein or total variation distances between $\Pi_1$ and $\Pi_2$,
making it a particularly strong notion of discrepancy \citep{huggins2018practical}. The FD is also computationally convenient as it can be expressed as an expectation and can therefore be estimated through samples from $\Pi_1$.

The FD has been used extensively, including for parameter estimation \citep{hyvarinen2005estimation,hyvarinen2007some,barp2019minimum,Yu2019,Scealy2022}, generalised Bayesian inference \citep{altamirano2023robust,Altamirano2024}, scoring rules \citep{Parry2012}, model comparison \citep{dawid2015bayesian,shao2019bayesian,jewson2022general}, and score-based generative modelling  \citep{Song2021}. 
In many of these settings, the FD is evaluated indirectly through an integration-by-parts identity that yields an equivalent objective up to an additive constant. 
The primary motivation is that it eliminates the need to evaluate the score of $\Pi_1$, which is typically unavailable or computationally intractable.
Our setting differs fundamentally from these applications. Because the posterior density is available in unnormalised form, its score can be evaluated directly and the FD can therefore be computed without resorting to integration by parts. Moreover, whereas most existing applications compute the FD between distributions on the data space $\mathcal{X}$, we apply it to distributions on the parameter space $\Theta$.

\section{METHODOLOGY}
We are now ready to propose our novel measure of global sensitivity based on the FD, explore optimisation routines, and discuss interpretability of the measure. 

\subsection{The Fisher Divergence as a Sensitivity Measure} \label{sec:method-FD}

Let $\trueposterior \in \mathcal{P}(\Theta)$
denote a reference posterior which is induced by a reference prior $\trueprior \in \mathcal{P}(\Theta)$ and loss function $L_{\text{ref}}:\Theta \times \mathcal{X}^n \rightarrow \mathbb{R}$ bounded from below so that $
\tilde{\pi}_{\text{ref}}(\theta|x_{1:n}) \propto \exp(-L_\text{ref}(\theta;x_{1:n}))\pi_{\text{ref}}(\theta)$.
Throughout, we will take $\trueposterior$ as corresponding to a default choice of prior and loss function made by the Bayesian modeller, and will aim to study sensitivity relative to it. We will also make the following assumption, which will ensure that the FD acts as a statistical divergence and will be sufficient to ensure our sensitivity measure is well-defined. 
\begin{assumption}\label{assump:fd} \;
\begin{itemize}
    \item 
    The reference posterior $\tilde{\Pi}_\text{ref}$ has support, $\text{supp}(\tilde{\pi}_{\text{ref}})$, on an open connected subset of $\Theta$.
    \item The functions $\pi_{\text{ref}}, L_\text{ref}(\cdot;x_{1:n}) \in C^1(\Theta)$ and $s_{\pi_{\text{ref}}}, \nabla_\theta L_\text{ref}(\cdot;x_{1:n}) \in L^2(\tilde{\Pi}_{\text{ref}})$. 
 
    \item The posterior neighbourhood $\mathcal{P}_{\Gamma}$ is a subset of $\mathcal{P}_\FD(\Theta)$, the space of distributions for which the FD to $\tilde{\Pi}_{\text{ref}}$ is well-defined: 
\begin{equation*}
\begin{aligned}
\mathcal{P}_\FD(\Theta) := \Big\{ &\tilde{\Pi} \in \mathcal{P}(\Theta) : \tilde{\pi}(\theta|x_{1:n}) \propto \exp(-L(\theta;x_{1:n}))\pi(\theta),   \text{supp}(\tilde{\pi}) = \text{supp}(\tilde{\pi}_{\text{ref}}), \\
& \ \quad  \pi,L(\cdot;x_{1:n}) \in C^1(\Theta)\text{ and } s_{\pi}, \nabla L(\cdot;x_{1:n}) \in L^2(\tilde{\Pi}_{\text{ref}}) \Big\}.
\end{aligned}
\end{equation*}
\end{itemize}
\end{assumption}
These assumptions are mild and will be satisfied in all our experiments. They can be further simplified if we only consider prior or learning rate sensitivity separately; see \Cref{appsec:div_derivation_quadratic_form}. The assumptions on the support are required because scores only measure rates of change of a density, rather than absolute mass, and two densities could have the same rate of change but very different mass if the support is disjointed. The differentiability and integrability conditions are needed to ensure the scores are well defined and square-integrable. 
Given these, we now propose to use the FD as our measure of sensitivity:
\begin{equation}\label{eq:FD_definition}
\begin{aligned}
 \rho^{\FD}(\tilde{\Pi}) := \FD(\tilde{\Pi}_{\text{ref}}\|\tilde{\Pi}) = \mathbb{E}_{\theta \sim \trueposterior}\left[ 
        \big\| s_{\tilde{\pi}_\text{ref}}(\theta) - s_{\tilde{\pi}}(\theta) \big\|_2^2 
        \right].
\end{aligned}
\end{equation}
As highlighted below, this is a strong measure of sensitivity since it controls changes in the entire posterior and can distinguish distributions in  $\mathcal{P}_{\Gamma} \subset \mathcal{P}_\FD(\Theta)$ from $\tilde{\Pi}_\text{ref}$.
\begin{proposition}[The FD distinguishes posteriors] \label{prop:fd-div}
    Suppose \Cref{assump:fd} holds.
    Then, for any $\tilde{\Pi} \in \mathcal{P}_\FD(\Theta)$, $\FD(\tilde{\Pi}_\text{ref}\|\tilde{\Pi}) = 0 \Leftrightarrow \tilde{\Pi}_{\text{ref}} \equiv \tilde{\Pi}$.   
\end{proposition}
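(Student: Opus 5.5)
The direction ($\Leftarrow$) is immediate. If $\tilde{\Pi}_{\text{ref}} \equiv \tilde{\Pi}$, then the densities agree $\tilde{\Pi}_{\text{ref}}$-almost everywhere on the common support. Since both are of the form $\exp(-L)\pi$ with $C^1$ ingredients, they are continuous and in fact agree everywhere on the open support. Their log-densities therefore coincide, so do their gradients $s_{\tilde{\pi}_\text{ref}} = s_{\tilde{\pi}}$, and the integrand in \eqref{eq:FD_definition} vanishes identically.

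For ($\Rightarrow$), I would first check that the FD is finite and well defined under \Cref{assump:fd}. On $U := \text{supp}(\tilde{\pi}_{\text{ref}}) = \text{supp}(\tilde{\pi})$, the normalising constants disappear on taking gradients, so
\[
s_{\tilde{\pi}_\text{ref}}(\theta) - s_{\tilde{\pi}}(\theta) = \bigl(s_{\pi_{\text{ref}}}(\theta) - \nabla_\theta L_{\text{ref}}(\theta;x_{1:n})\bigr) - \bigl(s_{\pi}(\theta) - \nabla_\theta L(\theta;x_{1:n})\bigr).
\]
Each of the four terms lies in $L^2(\tilde{\Pi}_{\text{ref}})$, so the difference does too, and the FD is finite. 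Note that $\pi$ and $\pi_{\text{ref}}$ are positive on $U$, because $\exp(-L)>0$ and the posteriors are positive there, so the prior scores make sense on $U$.

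Now suppose $\FD(\tilde{\Pi}_{\text{ref}}\|\tilde{\Pi}) = 0$. Then the nonnegative integrand vanishes $\tilde{\Pi}_{\text{ref}}$-almost everywhere. Since $\tilde{\pi}_{\text{ref}}>0$ on $U$, this means it vanishes Lebesgue-almost everywhere on $U$. The integrand is continuous on $U$, because all of $\pi, \pi_{\text{ref}}, L, L_{\text{ref}}$ are $C^1$ and the priors are positive there. A continuous function vanishing almost everywhere on an open set vanishes everywhere on it, since a point where it is nonzero would have a positive-measure neighbourhood where it is nonzero. Hence $\nabla_\theta\bigl(\log \tilde{\pi}_{\text{ref}} - \log\tilde{\pi}\bigr) \equiv 0$ on $U$.

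The key step, and the only one that uses the connectedness hypothesis, is passing from zero gradient to a constant function. The map $g := \log\tilde{\pi}_{\text{ref}} - \log\tilde{\pi}$ is $C^1$ on the open connected set $U$ with vanishing gradient, so $g$ is constant there. I would argue this via the mean value theorem along line segments inside balls, which shows $g$ is locally constant; the set $\{g = g(\theta_0)\}$ is then open and closed in $U$, hence equal to $U$. This gives $\tilde{\pi}_{\text{ref}} = c\,\tilde{\pi}$ on $U$ for some $c>0$. Both densities vanish off $U$, so integrating over $\Theta$ gives $1 = c$, and therefore $\tilde{\Pi}_{\text{ref}} \equiv \tilde{\Pi}$.

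The main obstacle is bookkeeping rather than depth: the argument needs the scores to be continuous and the priors to be positive on $U$, so that the almost-everywhere statement can be upgraded to an everywhere statement. Without connectedness the conclusion fails, since a different constant on each component would give a counterexample.
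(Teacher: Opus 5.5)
Your proof is correct, but it takes a different route from the paper. The paper does not argue the implication directly. It checks the three hypotheses of Theorem 1 of \citet{zhang2022towards}: that $\tilde{\pi}_{\text{ref}}, \tilde{\pi} \in C^1(\Theta)$, that both have support on a common open connected set, and that $s_{\tilde{\pi}_{\text{ref}}} - s_{\tilde{\pi}} \in L^2(\tilde{\Pi}_{\text{ref}})$. It then invokes that theorem. You instead reprove the relevant content of that theorem in this setting. You use continuity of the scores on the open support, which relies on positivity of the priors there, to upgrade almost-everywhere vanishing of the score difference to vanishing everywhere. Connectedness then makes the log-ratio constant, and normalisation forces the constant to be one. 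The paper's route is shorter and delegates the analysis to the literature. Yours is self-contained and shows exactly where each hypothesis enters. It also treats both directions explicitly, and it makes visible that the $L^2$ conditions matter only for the FD to be finite and well defined in general, not for the implication $\FD = 0 \Rightarrow \tilde{\Pi}_{\text{ref}} \equiv \tilde{\Pi}$ itself.
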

The proof is in \Cref{appsubsec:fd-divergence}. We can now define the FD-based global sensitivity as
\begin{equation}\label{eq:FD_measure}
S^{\FD}(\Gamma)
:=
\sup\limits_{\lambda \in \Gamma} \rho^{\FD}(\candidateposterior)
-
\inf\limits_{\lambda \in \Gamma} \rho^{\FD}(\candidateposterior ).
\end{equation}
If the reference posterior is itself in the neighbourhood (i.e. $\tilde{\Pi}_{\text{ref}} \in \mathcal{P}_{\Gamma}$, or equivalently if $ \exists \lambda_{\text{ref}} \in \Gamma$ with $\tilde{\Pi}_{\text{ref}} = \tilde{\Pi}^{\lambda_{\text{ref}}}$) then the infimum term is $0$ since $\FD(\trueposterior||\trueposterior)=0$ following \Cref{prop:fd-div}. Throughout the remainder of this paper, we do not always assume $\tilde{\Pi}_{\text{ref}} \in \mathcal{P}_{\Gamma}$ as one may wish to study sensitivity to distributional properties, such as heavy tails or multimodality, that cannot conveniently be captured by a neighbourhood which includes $\trueposterior$. 

Before concluding, we make two important points relating to score functions. First, distances between scores depend on the parametrisation of $\Theta$, and so our approach is not invariant to reparametrisations. For simplicity, we propose to work with the same parametrisation as used for Markov chain Monte Carlo (MCMC) sampling, though we will see in our experiments that other choices are sometimes preferable. Second, using score functions for sensitivity analysis has a long history in Bayesian analysis. For example, \citet{West1984, Haro-Lopez1999} used scores to assess sensitivity of the posterior to individual data points. However, these methods focus mainly on local rather than global sensitivity, and do not use a divergence. 

\subsection{Estimation of the Sensitivity Measure}

Now that we have defined our measure of sensitivity, we discuss how to estimate it from samples. Since $\rho^\FD(\candidateposterior)$ is defined as an expectation under the reference $\tilde{\Pi}_\text{ref}$, it can be estimated through (possibly approximate) samples $\theta_1, \dots, \theta_m$ from $\tilde{\Pi}_\text{ref}$, leading to a natural estimator for $S^{\FD}(\Gamma)$:
\begin{align}\label{eq:est_sensitivity}
    \widehat{S}^{\FD}_m(\Gamma) &=
   \sup\limits_{\lambda \in \Gamma} \hat{\rho}^{\FD}_m(\candidateposterior) - \inf\limits_{\lambda \in \Gamma}
   \hat{\rho}^{\FD}_m(\candidateposterior), \\
   \text{with } \quad \hat{\rho}^{\FD}_m(\candidateposterior) & = \estimatedFD(\tilde{\Pi}_\text{ref}\|\candidateposterior) := \frac{1}{m} \sum_{i=1}^m \left\| s_{\tilde{\pi}_\text{ref}}(\theta_i) - s_{\tilde{\pi}^\lambda}(\theta_i) \right\|_2^2.
\end{align} 
While i.i.d. samples would lead to a standard Monte Carlo estimator, independent samples are rarely available in practice. Instead, inference will typically rely on samples generated by an MCMC algorithm targeting $\trueposterior$. 

The key advantage of this estimator is that it only relies on the score functions $s_{\tilde{\pi}_{\text{ref}}}$ and $s_{\tilde{\pi}^ \lambda}$, and therefore does not require knowledge of the corresponding normalisation constants of $\tilde{\pi}_{\text{ref}}$ and $\tilde{\pi}^ \lambda$, which are typically unavailable. This is in contrast with many other divergences used in sensitivity analysis, such as the Kullback-Leibler, Fisher-Rao \citep{kurtek2015bayesian} or Jensen-Shannon divergence \citep{kallioinen2024detecting}, who all require density evaluations. Assuming access to score functions is mild; in fact, many MCMC samplers based on Langevin dynamics or Hamiltonian Monte Carlo \citep{Hoffman2014,Barp2018HMCReview,Fearnhead2024} make use of scores to guide the Markov chain, and some of these quantities may therefore have already been pre-computed when obtaining the samples $\theta_1,\ldots,\theta_m$. 

Furthermore, another advantage of the FD is that we can estimate this divergence from a single set of samples from $\trueposterior$, without ever sampling from $\candidateposterior$. This is in contrast with virtually all of the existing divergence-based methods for sensitivity, which would require MCMC sampling for every new candidate under consideration or the use of importance sampling methods which may have high variance.

This estimator of the FD is also very computationally attractive:  the computational cost scales linearly in the number of samples $m$, with overall computational complexity $\mathcal{O}(m \paramdim)$. When the samples are obtained from a geometrically ergodic and reversible Markov chain, $\hat{\rho}^{\FD}_m(\candidateposterior)$ will satisfy a central limit theorem guaranteeing convergence to $\rho^{\FD}(\candidateposterior)$ at a rate of $\mathcal{O}(m^{-\frac{1}{2}})$ regardless of the dimension $\paramdim$ \citep[Theorem 4]{roberts2004general}. As we now show below, this result can be refined to a finite-sample complexity result:
\begin{assumption}\label{assumption:MCMC}
The samples $\{\theta_i\}_{i \in \mathbb{N}}$ are realisations from a Markov chain with invariant distribution $\trueposterior$, and there exists a Lyapunov function $V:\Theta \to [e,\infty)$ such that the chain is $V$-uniformly geometrically ergodic (A1 \& A2 in \citet{durmus2024probability}). Furthermore, the chain is initialised at some distribution $\xi$ and $\mathbb{E}_{\theta \sim \xi}[V(\theta)]<\infty$.
\end{assumption}
Many commonly used MCMC methods satisfy geometric ergodicity; see, Theorem 2.1 in \cite{livingstone2019geometric} and Theorem 9 in \cite{durmus2017convergence} for Hamiltonian Monte Carlo, and Theorem 16 from \cite{durmus2023convergence} for the No-U-Turn sampler. 
\begin{theorem}[Finite-sample complexity of the FD estimator under MCMC sampling]\label{prop:mcmc-complexity} 
Suppose Assumptions \ref{assump:fd} and \ref{assumption:MCMC} hold and fix some $\lambda \in \Gamma$. Suppose further that the squared score difference grows at most like $\sqrt{V(\theta)}$; i.e. $sup_{\theta\in\Theta} V(\theta)^{-\frac{1}{2}} \| s_{\tilde{\pi}_\mathrm{ref}}(\theta) - s_{\tilde{\pi}^\lambda}(\theta) \|_2^2 
        < \infty$.  
Then, there exists $C(\candidateposterior;\xi) < \infty$ independent of $m$ such that
$\forall \tau>0$,
\begin{align*}
    \mathbb{P}\left(
        \big| \estimatedFD(\tilde{\Pi}_\text{ref}\|\candidateposterior)
            - \FD(\tilde{\Pi}_\text{ref}\|\candidateposterior) \big|
        \geq \tau
    \right)
    \le \frac{C(\candidateposterior;\xi)}{m \tau^2}.
\end{align*}
\end{theorem}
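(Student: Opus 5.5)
The plan is to apply Chebyshev's inequality after bounding the mean-squared error of the ergodic average.

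Set
\[
f(\theta) := \| s_{\tilde{\pi}_\mathrm{ref}}(\theta) - s_{\tilde{\pi}^\lambda}(\theta)\|_2^2 .
\]
Then $\estimatedFD(\tilde{\Pi}_\text{ref}\|\candidateposterior) = \frac{1}{m}\sum_{i=1}^m f(\theta_i)$ and $\FD(\tilde{\Pi}_\text{ref}\|\candidateposterior) = \mathbb{E}_{\trueposterior}[f]$. Under \Cref{assump:fd} both scores lie in $L^2(\trueposterior)$, because each is a difference of a prior score and a loss gradient. Hence $f \in L^1(\trueposterior)$ and the target is finite.

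By Markov's inequality applied to the squared deviation,
\[
\mathbb{P}\bigl(|\estimatedFD - \FD| \ge \tau\bigr) \le \tau^{-2}\,\mathbb{E}_\xi\Bigl[\Bigl(\tfrac{1}{m}\textstyle\sum_{i=1}^m \bar f(\theta_i)\Bigr)^2\Bigr],
\]
where $\bar f := f - \mathbb{E}_{\trueposterior}[f]$. So it suffices to show that this second moment is at most $C/m$, with $C$ independent of $m$.

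The growth hypothesis gives $|f| \le c_f V^{1/2}$, so $|\bar f| \le c'_f V^{1/2}$ (using $V \ge e$). This is exactly the setting of $V^{1/2}$-bounded functions for $V$-uniformly geometrically ergodic chains in \citet{durmus2024probability}. Two standard facts are needed.
\begin{itemize}
\item If the chain is $V$-uniformly ergodic, it is also $V^{1/2}$-uniformly ergodic, by Jensen's inequality applied to the drift condition. This gives $|P^k \bar f(\theta)| \le C_1 c'_f \rho^k V^{1/2}(\theta)$ for some $\rho<1$.
\item $\sup_k \mathbb{E}_\xi[V(\theta_k)] \le \mathbb{E}_\xi[V] + b/(1-\gamma) < \infty$, obtained by iterating the drift inequality $PV \le \gamma V + b$.
\end{itemize}

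Next, expand the square:
\[
\mathbb{E}_\xi\Bigl[\bigl(\textstyle\sum_i \bar f(\theta_i)\bigr)^2\Bigr] = \textstyle\sum_i \mathbb{E}_\xi[\bar f(\theta_i)^2] + 2\sum_{i<j}\mathbb{E}_\xi[\bar f(\theta_i)\, P^{j-i}\bar f(\theta_i)],
\]
using the Markov property. For the diagonal terms, $\bar f^2 \le c_f'^2 V$, so each is bounded by $c_f'^2\sup_k\mathbb{E}_\xi V(\theta_k)$. For the cross terms, $|\bar f\, P^{k}\bar f| \le C_1 c_f'^2 \rho^k V$, and summing the geometric series over $k=j-i$ bounds each row by $C_1 c_f'^2 \sup_k \mathbb{E}_\xi V(\theta_k)/(1-\rho)$. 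The total is therefore $O(m)$, and dividing by $m^2$ gives $C(\candidateposterior;\xi)/m$ with
\[
C = c_f'^2\Bigl(1+\tfrac{2C_1}{1-\rho}\Bigr)\sup_k \mathbb{E}_\xi V(\theta_k).
\]
This constant depends on $\lambda$ only through $c_f$ and on $\xi$ only through $\mathbb{E}_\xi V$.

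Alternatively, the bound can be obtained by a Poisson-equation martingale decomposition, or by citing the variance bound in \citet{durmus2024probability} directly. The main obstacle is checking carefully that the cited assumptions A1 and A2 yield $V^{1/2}$-geometric ergodicity with explicit constants. The point of the hypothesis that $f$ grows like $\sqrt{V}$ rather than like $V$ is that the product $\bar f \cdot P^k\bar f$ is then controlled by $V$, whose moments along the chain are uniformly bounded. The nonstationary initialisation $\xi$ is handled by this uniform-in-time moment bound, so no burn-in argument is required.
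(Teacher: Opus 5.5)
Your proof is correct, and its outer structure matches the paper's. You centre $f$, note that the centred function is still $\sqrt V$-bounded because $V\ge e$, show that the mean-squared error of the ergodic average is at most $C/m$, and conclude with Chebyshev.

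The difference is in the key step. The paper, through \Cref{lem:vector-valued-mcmc}, does not prove the second-moment bound itself. It imports $\mathbb{E}_\xi[S_m^2]\le C_V\,[\,m+\mathbb{E}_\xi V+\mathbb{E}_{\trueposterior}V\,]\,\|\bar f\|_{\sqrt V}^2$ from Theorem~2 (with $q=1$) and Lemma~23 of \citet{durmus2024probability}.

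You derive this bound directly by the classical covariance-summation argument:
\begin{itemize}
\item the Markov property turns the cross terms into $\mathbb{E}_\xi[\bar f(\theta_i)P^{j-i}\bar f(\theta_i)]$;
\item $\sqrt V$-geometric ergodicity gives $|P^k\bar f|\lesssim\rho^k\sqrt V$, so each product is $\lesssim\rho^kV$;
\item iterating the drift bounds $\sup_k\mathbb{E}_\xi V(\theta_k)$.
\end{itemize}
This gives a self-contained argument with explicit constants in terms of $(C_1,\rho)$ and the drift constants. The cost is two standard facts you still need to extract from A1--A2. First, $P\sqrt V\le\sqrt\gamma\sqrt V+\sqrt b$ follows from $PV\le\gamma V+b$ by Jensen. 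Second, with small sublevel sets this yields $\sqrt V$-uniform ergodicity (e.g.\ Meyn and Tweedie, Lemma~15.2.9 and Theorem~16.0.1).

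The citation route is shorter, and it extends to higher moments, which the authors remark could sharpen the tail bound. Its bound also isolates the effect of initialisation as an additive $O(\mathbb{E}_\xi V)$ term, rather than multiplying $m$ by $\sup_k\mathbb{E}_\xi V(\theta_k)$ as yours does. However, the paper then uses $m^{-2}\le m^{-1}$, so its final constant has the same form as yours.

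Finally, the paper states the lemma for vector-valued $f$ so it can be reused for \Cref{prop:mcmc-complexity-uniform}. Your argument extends coordinatewise without change.
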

The proof is in \Cref{appsubsec:mcmc-complexity} and the probability statement accounts for randomness in the Markov chain, including the distribution $\xi$ used for initialisation.  \Cref{prop:mcmc-complexity} shows that the estimated FD is close to the exact FD with high probability. Unlike most existing divergence-based sensitivity measures, we emphasise that the result has a rate in $m$ which is independent of $d_\Theta$. This will be essential to ensure that the method is widely applicable for Bayesian inference. We note that this a pointwise result, and does not directly imply a concentration inequality for $S^{\text{FD}}(\Gamma)$. For this, a stronger uniform will be required, and we will do this in \Cref{prop:mcmc-complexity-uniform} in the next section for a broad class of problems including prior sensitivity with exponential-family priors, or sensitivity analysis for the learning rate of generalised posteriors.

Although we now have an estimator $\widehat{S}^{\FD}_m(\Gamma)$ for $S^{\text{FD}}(\Gamma)$, this estimator still relies on an exact solution to two optimisation problems. This can be made explicit by writing
\begin{align}
\widehat{S}_{m}^\FD(\Gamma) =
   \sup\limits_{\lambda \in \Gamma} \hat{\rho}^{\FD}_m(\candidateposterior) - \inf\limits_{\lambda \in \Gamma}
   \hat{\rho}^{\FD}_m(\candidateposterior) = \hat{\rho}^\FD_m\left(\tilde{\Pi}^{\lambda^\text{sup}_{m}}\right) - \hat{\rho}^\FD_m \left(\tilde{\Pi}^{\lambda^\text{inf}_{m}}\right) 
\end{align}
where $\lambda^{\text{sup}}_m  \in \arg\sup_{\lambda \in \Gamma} \hat{\rho}^{\FD}_m(\candidateposterior)$ and $\lambda^{\text{inf}}_m \in \arg\inf_{\lambda \in \Gamma} \hat{\rho}^{\FD}_m(\candidateposterior)$.
The most direct approach is therefore to solve the two optimisation problems numerically with global optimisation algorithms. Given outputs $\lambda^\text{sup}_{m,t}$ and $\lambda^\text{inf}_{m,t}$ after $t \in \mathbb{N}$ iterations of such an optimiser, we can estimate sensitivity with $\widehat{S}_{m,t}^\FD(\Gamma) := \hat{\rho}^\FD_m (
    \tilde{\Pi}^{\lambda^\text{sup}_{m,t}}) - \hat{\rho}^\FD_m (\tilde{\Pi}^{\lambda^\text{inf}_{m,t}})$.
Direct optimisation will only be feasible practically when $d_\Lambda$ is relatively small and $\lambda \mapsto\candidateposterior$ is smooth, in which case most non-convex optimisation methods such as Bayesian optimisation 
or simulated annealing 
should perform well. However, the approach may not be able to scale to large $d_{\Lambda}$ due to the difficulties associated with high-dimensional non-convex optimisation.

\subsection{Convex Quadratic Formulation}

Interestingly, our next result shows that in a broad range of scenarios, we can make the optimisation problem much more tractable.
\begin{assumption}\label{assump:minimal-exp-fam}
The hyperparameter space $\Lambda = \Lambda_L \times \Lambda_\pi$, the loss is linear in $\lambda_L \in \lambda_L$, and the prior is in natural exponential family form with parameters $\lambda_\pi \in \Lambda_\pi$; i.e. for some loss $l(\cdot;x_{1:n}):\Theta \rightarrow\mathbb{R}^{d_{\Lambda_L}}$, sufficient statistic $T:\Theta \rightarrow \mathbb{R}^{d_{\Lambda_\pi}}$, and base density $g:\Theta \rightarrow [0,\infty)$:
\begin{align*}
    \mathcal{P}_\Gamma := \Big\{\candidateposterior:  &\; \lambda = [  \lambda_L, \lambda_\pi]^\top \in \Gamma = \Gamma_L \times \Gamma_\pi \text{ where } \Gamma_L \in \mathbb{R}^{\natparamdimloss}, \Gamma_\pi \in \mathbb{R}^{\natparamdimprior}, \\
    & L(\theta;x_{1:n},\lambda) = \lambda_L^\top l(\theta;x_{1:n}) \quad \text{and} \quad
       \pi(\theta|\lambda) \propto \exp\left(\lambda_\pi^\top T(\theta) + \log g(\theta)\right)\Big\}.
\end{align*}
\end{assumption}
\Cref{assump:minimal-exp-fam} covers a very broad range of problems. For example, linearity of the loss is satisfied when measuring sensitivity to the learning rate hyperparameter in generalised Bayesian inference. It is also satisfied when considering sensitivity to hyperparameters of a natural exponential family likelihood, or sensitivity to the weights in weighted log-likelihood approaches. Exponential family priors are extremely common and include Gaussian, Gamma, and Beta priors amongst many others. Under this assumption, the FD is a convex quadratic form.
\begin{proposition}[The FD as a convex quadratic form]\label{prop:FD_convex_quadratic}
    Suppose \Cref{assump:minimal-exp-fam} holds. Then, 
    \begin{align*}
\hat{\rho}^{\FD}_m(\candidateposterior)  =  \lambda^\top A \lambda + b^\top \lambda +  c,
    \end{align*}
    where for $J(\theta)
    :=
    [\,
    -\nabla_\theta l(\theta;x_{1:n}),
    \;
    \nabla_\theta T(\theta)^\top
    ]$, the matrix $A:=\frac{1}{m}\sum_{i=1}^m J(\theta_i)^\top J(\theta_i)$, the vector $b :=
        -\frac{2}{m}
        \sum_{i=1}^m
        J(\theta_i)^\top
        (
        s_{\tilde{\pi}_{\mathrm{ref}}}(\theta_i)
        -
        s_g(\theta_i)
        )$, and $c
:=
\frac{1}{m}
\sum_{i=1}^m
\|
s_{\tilde{\pi}_{\mathrm{ref}}}(\theta_i)
-
s_g(\theta_i)
\|_2^2$.
    Furthermore, $A$ is positive semi-definite, and thus the quadratic form is convex in $\lambda \in \Lambda$. 
\end{proposition}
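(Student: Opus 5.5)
The plan is to write the score of the candidate posterior as an affine function of $\lambda$, substitute into the empirical FD, and expand the square. Under \Cref{assump:minimal-exp-fam}, $\log \tilde{\pi}^\lambda(\theta|x_{1:n}) = -\lambda_L^\top l(\theta;x_{1:n}) + \lambda_\pi^\top T(\theta) + \log g(\theta) - \log Z(\lambda)$. Here $Z(\lambda)$ collects both the prior normaliser and the posterior normaliser, and it does not depend on $\theta$. Differentiating in $\theta$ therefore removes it and gives
\begin{align*}
s_{\tilde{\pi}^\lambda}(\theta) = -\nabla_\theta l(\theta;x_{1:n})\,\lambda_L + \nabla_\theta T(\theta)^\top \lambda_\pi + s_g(\theta) = J(\theta)\lambda + s_g(\theta).
\end{align*}
In this display, $\nabla_\theta l$ denotes the $\paramdim\times d_{\Lambda_L}$ Jacobian (columns are gradients of the components of $l$), and $\nabla_\theta T(\theta)^\top$ denotes the $\paramdim\times d_{\Lambda_\pi}$ matrix whose columns are the gradients of the components of $T$. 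Fixing these shape conventions is the first step, so that $J(\theta)\in\mathbb{R}^{\paramdim\times\natparamdim}$ matches the block form stated in \Cref{prop:FD_convex_quadratic}.

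Next, set $r(\theta) := s_{\tilde{\pi}_{\mathrm{ref}}}(\theta) - s_g(\theta)$. Then $s_{\tilde{\pi}_{\mathrm{ref}}}(\theta_i) - s_{\tilde{\pi}^\lambda}(\theta_i) = r(\theta_i) - J(\theta_i)\lambda$. Expanding the squared norm gives
\begin{align*}
\|r(\theta_i)-J(\theta_i)\lambda\|_2^2 = \lambda^\top J(\theta_i)^\top J(\theta_i)\lambda - 2 r(\theta_i)^\top J(\theta_i)\lambda + \|r(\theta_i)\|_2^2 .
\end{align*}
Averaging over $i=1,\dots,m$ in $\estimatedFD$ identifies $A$, $b$ and $c$ exactly as stated; for $b$ we use $r^\top J\lambda = (J^\top r)^\top\lambda$.

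For convexity, note that for any $v\in\mathbb{R}^{\natparamdim}$ we have $v^\top A v = \frac{1}{m}\sum_i \|J(\theta_i)v\|_2^2 \ge 0$, so $A$ is positive semi-definite. A quadratic function with a PSD Hessian $2A$ is convex, and this holds on any convex domain, in particular on $\Lambda$.

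There is no substantive obstacle here; the one point needing care is the first step. We must justify that the $\lambda$-dependent normalising constants contribute nothing to the score. This holds because they are constant in $\theta$, and differentiability is guaranteed because $l$, $T$ and $g$ are $C^1$ on the support, in line with \Cref{assump:fd}. Beyond that, the remaining care is keeping the sign conventions and the transposes in $J$ consistent with the stated definition of $b$.
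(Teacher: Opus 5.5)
Your proposal is correct and follows essentially the same route as the paper's proof. Both write the candidate score as $J(\theta)\lambda + s_g(\theta)$, expand the squared residual to read off $A$, $b$ and $c$, and establish positive semi-definiteness via $v^\top A v = \frac{1}{m}\sum_{i=1}^m \|J(\theta_i)v\|_2^2 \ge 0$. Your explicit fixing of the shape conventions for $\nabla_\theta l$ and $\nabla_\theta T$ is a useful clarification that the paper leaves implicit.
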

See \Cref{app:proof_quadratic_form} for the proof.
The matrix $A=(A_{ij})_{i,j\in\{L,\pi\}}$ has blocks given by
    $
    A_{LL}
    := \frac{1}{m} \sum_{i=1}^m
          \nabla_\theta l(\theta_i;x_{1:n})^\top  \nabla_\theta l(\theta_i;x_{1:n})$,
    $A_{L \pi}
    := A_{\pi L}^\top
       = -\frac{1}{m}  \sum_{i=1}^m
       \nabla_\theta l(\theta_i;x_{1:n})^\top
       \nabla_\theta T(\theta_i),
    $  
        $A_{\pi\pi}
    := \frac{1}{m}  \sum_{i=1}^m
       \nabla_\theta T(\theta_i)^\top\nabla_\theta T(\theta_i)$. 
The sensitivity measure also remains a convex quadratic form if we consider only sensitivity to either the prior or the loss; see \Cref{appsec:div_derivation_quadratic_form}. The complexity of computing the quadratic form is $\mathcal{O}\bigl(m \natparamdim^2 \paramdim\bigr)$ which has a favourable scaling in $m$ and $\paramdim$. However, the main computational advantage of this convex objective is that, depending on the geometry of the neighbourhood $\Gamma$, both optimisation problems can be solved efficiently. This is the case for bounded convex polytopes \citep{Henk2017}, such as hyperrectangles, simplices or $l_1$-balls, which can be represented through the convex hull of a finite set of vertices:
\begin{align}\label{eq:polytope-constrained-neighbourhood}
    \Gamma_\text{poly} := \left\{ \lambda \in \Lambda : \lambda = \sum_{k=1}^K \gamma_k v_k  \text{ for } v_1,\ldots,v_K \in \Lambda, \gamma_1,\ldots,\gamma_K \geq 0, \sum_{k=1}^K \gamma_k = 1\right\}.
\end{align}
Since $\Gamma_\text{poly}$ is convex and compact, we know by Bauer’s maximum principle that the supremum of $\hat{\rho}^\FD_m(\tilde{\Pi}^\lambda)$ will be achieved at an extreme point \citep[see][Theorem 7.42]{beck2014introduction}. Thus, we can solve the supremum problem with an $\mathcal{O}(K)$ cost by evaluating the FD at the vertices $v_1,\ldots,v_K$ through simple enumeration. 
Furthermore, the infimum problem is a convex minimisation problem over a convex and compact set, which can be solved by 
projected gradient descent.
This carries a cost driven by the maximum of  $\mathcal{O}(\natparamdim^2)$, the cost of performing a gradient step, and the cost of the projection on $\Gamma_{\text{poly}}$, and converges 
at a rate $
\mathcal{O}( \kappa^t)$ for some $\kappa \in (0,1)$ when $A$ is strongly convex (see Section 7.3.3 from \citet{wright2022optimization}).

A special case of a bounded convex polytope is obtained through box-type constraints (also called hyperrectangles), where the boundaries for each hyperparameter are expressed through the vectors $\lambda',\lambda'' \in \Lambda$ so that $\Gamma_\text{box} := \{ \lambda \in \Lambda : \lambda'_j \leq \lambda_j \leq \lambda''_j \text{ for } j \in \{1,\ldots,\natparamdim\} \}$. Since $\Gamma_\text{box}$ has at most $K=2^{\natparamdim}$ vertices, computing the supremum for box-constrained sensitivity analysis is computationally inexpensive  for small to moderate $\natparamdim$, but may be infeasible for large $\natparamdim$. When computing the infimum through projected gradient descent, projections reduce to coordinate-wise clipping, which carries a cost of $\mathcal{O}(\natparamdim)$. 

Beyond computational tractability,  \Cref{assump:minimal-exp-fam} also allows us to  extend the pointwise result of \Cref{prop:mcmc-complexity} to get a root-$m$ finite sample complexity result for $S^{\FD}(\Gamma)$.
\begin{theorem}[Finite-sample complexity for the global sensitivity measure]\label{prop:mcmc-complexity-uniform}
Suppose Assumptions \ref{assump:fd}, \ref{assumption:MCMC} and \ref{assump:minimal-exp-fam} hold, $\Gamma$ is compact and the following growth condition holds
\begin{equation*}
\sup_{\theta\in\Theta}V(\theta)^{-1/4}
\max\left\{
\|\nabla_\theta l(\theta;x_{1:n})\|_{F},
\|\nabla_\theta T(\theta)\|_{F},
\|s_{\widetilde\pi_{\mathrm{ref}}}(\theta)\|_2,
\|s_g(\theta)\|_2
\right\} 
<\infty.
\end{equation*}
Then, there exists $0 < C_{\Gamma,\xi} < \infty$ independent of $m$ such that, for any $\delta \in (0,1)$, we have with probability $1-\delta$ that
\begin{align*}
     \widehat{S}^{\FD}_m(\Gamma) - \frac{C_{\Gamma,\xi}}{\sqrt{m \delta}} \leq S^{\FD}(\Gamma) \leq  \widehat{S}^{\FD}_m(\Gamma) +  \frac{C_{\Gamma,\xi}}{\sqrt{m\delta}}.
\end{align*}
\end{theorem}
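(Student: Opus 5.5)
The plan is to exploit the quadratic structure from \Cref{prop:FD_convex_quadratic}, which reduces a uniform-in-$\lambda$ statement to a concentration statement about finitely many matrix, vector and scalar averages. Under \Cref{assump:minimal-exp-fam}, the same algebra with empirical averages replaced by expectations gives the population counterpart
\begin{equation*}
\rho^{\FD}(\candidateposterior)=\lambda^\top \bar A\lambda+\bar b^\top\lambda+\bar c .
\end{equation*}
Here $\bar A=\mathbb{E}_{\trueposterior}[J^\top J]$, $\bar b=-2\,\mathbb{E}_{\trueposterior}[J^\top(s_{\tilde\pi_{\mathrm{ref}}}-s_g)]$ and $\bar c=\mathbb{E}_{\trueposterior}\|s_{\tilde\pi_{\mathrm{ref}}}-s_g\|_2^2$. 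These are finite because the growth condition makes every entry of $J^\top J$, $J^\top(s_{\tilde\pi_{\mathrm{ref}}}-s_g)$ and $\|s_{\tilde\pi_{\mathrm{ref}}}-s_g\|^2$ bounded by a constant times $V^{1/2}$. Under $V$-geometric ergodicity, $\mathbb{E}_{\trueposterior}[V]<\infty$ (or at least $\mathbb{E}_{\trueposterior}[V^{1/2}]<\infty$), so these expectations exist.

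Next, use the elementary inequality $|\sup f-\sup g|\le\sup|f-g|$, and the same for the infimum. This gives
\begin{equation*}
|\widehat S^{\FD}_m(\Gamma)-S^{\FD}(\Gamma)|\le 2\sup_{\lambda\in\Gamma}|\hat\rho^{\FD}_m(\candidateposterior)-\rho^{\FD}(\candidateposterior)| .
\end{equation*}
Since $\Gamma$ is compact, set $R:=\sup_{\lambda\in\Gamma}\|\lambda\|_2<\infty$. The quadratic form then yields
\begin{equation*}
\sup_{\lambda\in\Gamma}|\hat\rho^{\FD}_m-\rho^{\FD}|\le R^2\|A-\bar A\|_F+R\|b-\bar b\|_2+|c-\bar c| .
\end{equation*}
It therefore suffices to control each of the finitely many scalar averages $\frac1m\sum_i f_k(\theta_i)-\mathbb{E}_{\trueposterior}f_k$. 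Here $f_k$ ranges over the $O(\natparamdim^2+\natparamdim)+1$ entries of $J^\top J$, $J^\top(s_{\tilde\pi_{\mathrm{ref}}}-s_g)$ and $\|s_{\tilde\pi_{\mathrm{ref}}}-s_g\|_2^2$.

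For each $f_k$, the growth assumption with exponent $1/4$ on each factor gives $\sup_\theta V^{-1/2}|f_k|<\infty$. This is exactly the hypothesis used in \Cref{prop:mcmc-complexity}. The mechanism behind that theorem is the variance bound for additive functionals of $V$-uniformly ergodic chains in \citet{durmus2024probability}, applied to $V^{1/2}$-dominated functions. It gives $\mathbb{E}\bigl[(\frac1m\sum_i f_k(\theta_i)-\mathbb{E}_{\trueposterior}f_k)^2\bigr]\le C_k(\xi)/m$, with $C_k(\xi)$ depending on $\mathbb{E}_\xi V$ and the ergodicity constants. I will reuse this second-moment bound rather than the tail bound itself.

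To combine the $N$ terms, bound the weighted sum above by $Z:=\sum_k w_k|\Delta_k|$, with weights $w_k\in\{R^2,R,1\}$. By Cauchy--Schwarz, $\mathbb{E}Z^2\le N\sum_k w_k^2C_k(\xi)/m=:C'/m$. Chebyshev's inequality then gives $\mathbb{P}(Z\ge \sqrt{C'/(m\delta)})\le\delta$. Setting $C_{\Gamma,\xi}=2\sqrt{C'}$ yields both inequalities simultaneously with probability $1-\delta$.

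The main obstacle is justifying the second-moment (not only tail) version of the MCMC bound for $V^{1/2}$-dominated functions under a non-stationary start $\xi$. I would first try to extract it directly from the proof of \Cref{prop:mcmc-complexity}, presumably Chebyshev applied to a mean-squared-error bound via the Poisson equation/martingale decomposition. A second, more minor issue is checking that the population quadratic representation is legitimate, i.e.\ that $\rho^{\FD}$ is finite on all of $\Gamma$, which follows from the integrability established above.
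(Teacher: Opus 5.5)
Your proposal is correct and follows essentially the same route as the paper: the quadratic representation, the reduction via $|\sup f-\sup g|\le\sup|f-g|$ to coefficient errors weighted by $R^2$ and $R$, an $O(1/m)$ second-moment bound for each $V^{1/2}$-dominated coordinate, and then Chebyshev. The step you flag as the main obstacle is exactly what the paper packages as an auxiliary vector-valued lemma, using Theorem~2 (with $q=1$) and Lemma~23 of \citet{durmus2024probability}; the only cosmetic differences are that the paper drops the constant term $c$, since it cancels in $\sup-\inf$, and combines the $A$- and $b$-errors with Minkowski rather than Cauchy--Schwarz over all coordinates.
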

The proof is in \Cref{appsubsec:mcmc-complexity-uniform}. The theorem guarantees that the difference between estimated and exact sensitivity measures decreases as $O(m^{-\frac{1}{2}})$, guaranteeing that neither over- nor under-estimation of sensitivity is large. The upper bound is particularly helpful since it guarantees we are unlikely to significantly underestimate global sensitivity. Here, the growth condition is stated in terms of the gradient of the loss and sufficient statistic, as well as the scores of the reference posterior and $g$, which makes the condition easier to check.

Before concluding, we emphasise that the choice of parametrisation for $\Lambda$ is important here. For example, although natural exponential family priors lead to a convex quadratic form, this may not be the case for other parametrisations of an exponential family model. For exponential families in minimal form, this is not a problem as there always exists a smooth and invertible mapping between parametrisations (see e.g. Section 8.1 in \citet{barndorff2014information}). However, specifying a neighbourhood in natural parameter space may be challenging, and it is not always possible to guarantee that a neighbourhood specified in the original parametrisation will remain a bounded convex polytope once mapped to the natural parametrisation. One exception is affine transformations, which maintain the required geometry of the domain.

\section{STRENGTH AND INTERPRETABILITY}\label{subsec:interpretability_of_results}\label{subsec:interpretability}

We have now proposed a sensitivity measure which is computationally tractable in that it can be estimated without suffering from a curse of dimensionality, and where the optimisation problem can be solved efficiently. The remaining questions are whether the measure is meaningful and whether it is interpretable. 

\subsection{Strength of the Measure of Sensitivity}

We have seen in \Cref{prop:fd-div} that $S^{\text{FD}}(\Gamma)$ is based on a divergence, which makes it a stronger measure than those based on finitely many moments. Interestingly, it is also possible to relate $S^{\text{FD}}(\Gamma)$ to measures based on alternative divergences given slightly stronger assumptions on $\Gamma$.
\begin{assumption}
\label{assump:candidate-prior-regularity} \;
\begin{enumerate}
    \item[(A4.1)] For all $\lambda \in \Gamma$,
$\log \tilde{\pi}^\lambda$ has strongly concave tails, i.e. $\exists K_\lambda, R_\lambda \geq 0$, such that $- \nabla_\theta^2 \log \tilde{\pi}^\lambda(\theta) \succeq K_\lambda I_{\paramdim}$ for all $\theta \in \text{supp}(\tilde{\pi}_{\text{ref}})$ with $\|\theta\|_2 \ge R_\lambda$. Additionally, $K_{\Gamma} := \inf_{\lambda \in \Gamma} K_\lambda > 0$, $R_{\Gamma} := \sup_{\lambda \in \Gamma} R_\lambda < \infty$.
    \item[(A4.2)] The candidates are uniformly bounded from above and below near the center of $\Theta$; i.e. for $\tilde{\pi}_{\text{min}}^{R_\Gamma} := \inf_{\lambda \in \Gamma} \inf_{\|\theta\|_2 \leq 3R_\lambda} \tilde{\pi}^\lambda(\theta)$ and $ \tilde{\pi}_{\text{max}}^{R_\Gamma} := \sup_{\lambda \in \Gamma} \sup_{\|\theta\|_2 \leq 3R_\lambda} \tilde{\pi}^\lambda(\theta)$, we have $0 < \tilde{\pi}^{R_\Gamma}_{\text{min}} \leq \tilde{\pi}^{R_\Gamma}_{\text{max}} < \infty$.
\end{enumerate}
\end{assumption}
The first assumption corresponds to requiring that candidates are uniformly strongly log-concave \citep{Saumard2014} outside of some region near the origin. This holds if both $L$ and $\Pi^\lambda$ are strongly log-concave outside a compact region. This is satisfied by many light-tailed models, including Gaussians and more generally densities whose negative log-density has Hessian bounded below by a positive constant outside a compact set. However, it excludes some commonly used distributions including the Student-t, Laplace and Gamma distributions, whose tail curvature vanishes asymptotically.
The second assumption  requires the candidate posteriors to be bounded above and below on the centre of the distribution, which is relatively mild and will hold when the candidate priors and likelihoods are also bounded above and below over that region. Under these assumptions, and assuming for simplicity that $\trueposterior \in \mathcal{P}_{\Gamma}$, the FD sensitivity is strong in that it controls sensitivity in three widely studied divergences. 
\begin{proposition}\label{prop:dominate_otherdivergences}
   Suppose that $\lambda_{\text{ref}} \in \Gamma$, Assumptions \ref{assump:fd} and \ref{assump:candidate-prior-regularity} hold, 
    and let $\text{W}_p$, $\text{TV}$ and $\text{KL}$ denote
    the $p$-Wasserstein, total variation distance and Kullback-Leibler divergence respectively. 
    Then, $\exists \alpha_{\Gamma} > 0$ depending on $K_{\Gamma}$, $R_\Gamma$, $\tilde{\pi}^{R_\Gamma}_{\text{min}}$ and $\tilde{\pi}^{R_\Gamma}_{\text{max}}$ such that:
    \begin{align*}
        \sup_{\lambda \in \Gamma} W_p(\trueposterior, \candidateposterior) := S^{\text{W}_p}(\Gamma) \leq \frac{1}{\alpha_{\Gamma}} \sqrt{S^{\text{FD}}(\Gamma)} \quad \text{and}\quad
        \sup_{\lambda \in \Gamma} \text{TV}(\trueposterior \| \candidateposterior) := S^{\text{TV}}(\Gamma) \leq \frac{1}{\sqrt{2\alpha_{\Gamma}}} \sqrt{S^{\text{FD}}(\Gamma)}.
    \end{align*}
    for $p \in \{1,2\}$. Moreover, if for all $\lambda \in \Lambda$, $ \exists K^\prime_\lambda > 0$ such that $ \tilde{\pi}^\lambda$ is $K^\prime_\lambda$-strongly log-concave over the whole domain (i.e. $- \nabla_\theta^2 \log \tilde{\pi}^\lambda(\theta) \succeq K_\lambda I_{\paramdim}$  $\forall \theta \in \Theta$) and $\alpha^\prime_{\Gamma} := \inf_{\lambda \in \Gamma} K^\prime_{\lambda} > 0$, then
    \begin{align*}
        \sup_{\lambda \in \Gamma} \text{KL}(\trueposterior \| \candidateposterior) := S^{\text{KL}}(\Gamma) \leq \frac{1}{2\alpha^\prime_{\Gamma}} S^{\text{FD}}(\Gamma).
    \end{align*}
\end{proposition}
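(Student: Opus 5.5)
Since $\lambda_{\text{ref}} \in \Gamma$, \Cref{prop:fd-div} gives $\inf_{\lambda\in\Gamma}\rho^{\FD}(\candidateposterior)=0$, so $S^{\FD}(\Gamma)=\sup_{\lambda\in\Gamma}\FD(\trueposterior\|\candidateposterior)$. It therefore suffices to prove, for each fixed $\lambda\in\Gamma$ and with a constant uniform in $\lambda$, the pointwise inequalities $W_p(\trueposterior,\candidateposterior)\le \alpha_\Gamma^{-1}\sqrt{\FD(\trueposterior\|\candidateposterior)}$, $\text{TV}\le (2\alpha_\Gamma)^{-1/2}\sqrt{\FD}$ and $\text{KL}\le (2\alpha'_\Gamma)^{-1}\FD$, and then take suprema over $\lambda$. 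The key tool is a logarithmic Sobolev inequality (LSI) for each candidate $\candidateposterior$. Writing $h=\tilde\pi_{\text{ref}}/\tilde\pi^\lambda$, an LSI with constant $\alpha$ for $\candidateposterior$, applied to $f=\sqrt h$, reads $\text{KL}(\trueposterior\|\candidateposterior)\le \frac{1}{2\alpha}\mathbb{E}_{\trueposterior}\|\nabla\log h\|^2=\frac{1}{2\alpha}\FD(\trueposterior\|\candidateposterior)$. Here we note that $\nabla \log h = s_{\tilde\pi_{\text{ref}}}-s_{\tilde\pi^\lambda}$, and \Cref{assump:fd} (common support, $C^1$, square integrability) justifies the use of the scores.

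The KL statement is the easy case. Under global $K'_\lambda$-strong log-concavity, Bakry--Émery gives an LSI with constant $K'_\lambda\ge\alpha'_\Gamma$, and this yields the bound directly.

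For the first part, the plan is to obtain an LSI with a uniform constant $\alpha_\Gamma$ from (A4.1)--(A4.2). I would write $-\log\tilde\pi^\lambda = U_\lambda$ and decompose $U_\lambda=\tilde U_\lambda + \psi_\lambda$, where $\tilde U_\lambda$ is $K_\Gamma/2$-strongly convex on all of $\mathbb{R}^{\paramdim}$ and $\psi_\lambda$ is bounded with oscillation controlled. One way to build this is to modify $U_\lambda$ inside the ball of radius $2R_\Gamma$ by a convex interpolation that uses a quadratic $K_\Gamma\|\theta\|^2$ term. The osc bound on the ball of radius $3R_\Gamma$ comes from (A4.2) as $\log(\tilde\pi^{R_\Gamma}_{\max}/\tilde\pi^{R_\Gamma}_{\min})$ plus quadratic terms in $R_\Gamma$. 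Bakry--Émery then gives an LSI for $e^{-\tilde U_\lambda}$, and the Holley--Stroock perturbation lemma transfers it to $\candidateposterior$ with constant $\alpha_\Gamma = \frac{K_\Gamma}{2}e^{-\mathrm{osc}(\psi_\lambda)}$, which is bounded below uniformly in $\lambda$. Alternatively, one can cite the known result that a potential which is strongly convex outside a ball satisfies an LSI with explicit constants (e.g. via Lyapunov/Bakry--Barthe--Cattiaux--Guillin or a Ma--Chen--Jin--Flammarion--Jordan-type argument), which would also absorb the restriction of (A4.1) to $\text{supp}(\tilde\pi_{\text{ref}})$. I expect this step — getting a $\lambda$-uniform LSI constant depending only on $K_\Gamma,R_\Gamma,\tilde\pi^{R_\Gamma}_{\min},\tilde\pi^{R_\Gamma}_{\max}$ — to be the main obstacle, particularly the construction of the convex interpolant and the handling of a support that may not be the whole space.

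Given the LSI with constant $\alpha_\Gamma$, the remaining bounds follow by standard implications. Otto--Villani gives Talagrand's $T_2$ inequality, $W_2(\trueposterior,\candidateposterior)\le\sqrt{2\,\text{KL}/\alpha_\Gamma}\le \alpha_\Gamma^{-1}\sqrt{\FD}$, and $W_1\le W_2$. Pinsker gives $\text{TV}\le\sqrt{\text{KL}/2}\le\sqrt{\FD/(4\alpha_\Gamma)}\le (2\alpha_\Gamma)^{-1/2}\sqrt{\FD}$. Taking $\sup_\lambda$ on both sides, and using monotonicity of $\sqrt{\cdot}$, completes the proof.
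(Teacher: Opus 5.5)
\textbf{How your route differs from the paper's.} The paper never proves a log-Sobolev inequality under (A4.1)--(A4.2). For $W_2$ it cites Theorem 5.3 of Huggins et al., which is the $W_2$--Fisher (WJ) inequality of Bolley, Gentil and Guillin for potentials strongly convex outside a ball. It obtains $\lambda$-uniformity by asserting that the BGG constant depends on $\tilde\pi^\lambda$ only through the $\sup$ and $\inf$ of $-\log\tilde\pi^\lambda$ on the ball of radius $3R_\lambda$. TV comes from Proposition 5.10 of Huggins et al. Only the KL part uses an LSI (Bakry--\'Emery, via Koehler et al.), exactly as you do.

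Your downstream deductions are correct and the constants work out. An LSI with constant $\alpha$ applied to $f=\sqrt h$ gives $\mathrm{KL}\le \FD/(2\alpha)$. Otto--Villani then gives $W_2\le \alpha^{-1}\sqrt{\FD}$. Pinsker, with $\mathrm{TV}=\sup_A|P(A)-Q(A)|$, gives $\mathrm{TV}\le\sqrt{\FD}/(2\sqrt{\alpha})$, which is sharper than the stated constant. A uniform LSI would also buy more than the proposition claims: a KL bound under (A4.1)--(A4.2) alone, without global strong log-concavity.

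\textbf{The gap.} The step carrying all the content is a $\lambda$-uniform LSI constant depending only on $K_\Gamma$, $R_\Gamma$ and the bounds in (A4.2). You do not establish it, and the construction you sketch fails as described. Strong convexity on $\{\|\theta\|_2\ge R\}$ plus bounded values on the ball of radius $3R$ does not stop the gradient at radius $2R$ from pointing inwards.

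Here is a one-dimensional example with $K=R=1$. Take $U(x)=\tfrac12(|x|-2.9)^2$ for $|x|\ge 1$, joined smoothly inside $(-1,1)$ by a bounded even function. Then (A4.1)--(A4.2) hold with modest constants. Yet $U'(-2)=0.9>-0.9=U'(2)$, so no convex function can agree with $U$ outside $[-2,2]$.

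In general, a branch $\tfrac{K}{2}(x-w)^2$ has its well at $w\approx 2R+\mathrm{osc}_{B_{3R}}U/(2KR)$. So the modification must reach a radius of that order, which can lie well outside the ball where (A4.2) gives any upper control of $U$. Consequently $\mathrm{osc}(\psi_\lambda)$ is not simply $\log(\tilde\pi^{R_\Gamma}_{\max}/\tilde\pi^{R_\Gamma}_{\min})$ plus quadratic terms in $R_\Gamma$.

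The off-the-shelf alternatives you mention do not close this either. The Ma et al.\ decomposition uses $L$-smoothness of $U$. The Lyapunov criterion for an LSI (Cattiaux--Guillin--Wu) assumes a lower bound on $\nabla^2 U$. Neither quantity is controlled uniformly in $\lambda$ by (A4.1)--(A4.2).

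To complete your argument you need a genuine construction. One candidate is to take $\tilde U_\lambda$ to be the largest $K'$-convex minorant of $U_\lambda$, with $K'<K_\Gamma$. You would then use convexity along rays to show that $U_\lambda-\tilde U_\lambda$ vanishes wherever the radial derivative exceeds a controlled threshold, and is bounded by controlled quantities elsewhere. Otherwise you are left, like the paper, relying on a cited result whose constant must be checked to depend only on the listed quantities.
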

The proof is in \Cref{app:proof_dominate_divergences}. We note that the conditions for controlling the KL divergence are much stronger since we require strong log concavity over the entire domain, rather than just in the tails.
Interestingly, under \Cref{assump:candidate-prior-regularity}, $S^{\FD}(\Gamma)$ also becomes much more interpretable, in the sense that it controls differences in the first two moments of $\trueposterior$ and $\candidateposterior$. This is a direct corollary of the result involving the Wasserstein distance; see \Cref{appsubsec:interpretability-through-moment-control} for the proof.
\begin{corollary}\label{prop:moment-bounds}
Suppose that $\lambda_{\text{ref}} \in \Gamma$, Assumptions \ref{assump:fd} and \ref{assump:candidate-prior-regularity} hold, and for any $\tilde{\Pi} \in \mathcal{P}_{\text{FD}}(\Theta)$ with finite second moments, denote by $\mu_{\tilde{\Pi}}, \Sigma_{\tilde{\Pi}}$ the corresponding mean vector and covariance matrix. 
Then $\exists \alpha_{\Gamma} > 0$ depending on $K_{\Gamma}, R_{\Gamma}, \tilde{\pi}^{R_\Gamma}_{\text{min}}$ and $\tilde{\pi}^{R_\Gamma}_{\text{max}}$ such that:
\begin{align*}
S^{\mathrm{mean}}(\Gamma)
& :=
\sup_{\lambda \in \Gamma}
\left\|\mu_{\trueposterior} -\mu_{\candidateposterior} \right\|_2
\le 
\frac{1}{\alpha_{\Gamma}}
\sqrt{S^\mathrm{FD}(\Gamma)}, \quad \text{and } \\
S^{\mathrm{cov}}(\Gamma) &
:=
\sup_{\lambda \in \Gamma}
\|\Sigma_{\trueposterior}-\Sigma_{\candidateposterior}\|_2
\le
\frac{3}{\alpha_{\Gamma}}\,
\min\left(
\sqrt{\|\Sigma_{\trueposterior}\|_2},
\ \sup_{\lambda\in\Gamma}\sqrt{\|\Sigma_{\candidateposterior}\|_2}
\right)
\sqrt{S^{\FD}(\Gamma)}
+\frac{5.25}{\alpha_{\Gamma}^2}\,
S^{\FD}(\Gamma).
\end{align*}
\end{corollary}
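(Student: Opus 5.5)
This corollary follows from the Wasserstein bound in \Cref{prop:dominate_otherdivergences} together with two elementary couplings. Fix $\lambda \in \Gamma$ and write $W := W_2(\trueposterior,\candidateposterior)$. By \Cref{prop:dominate_otherdivergences} (with $p=2$), and since $\lambda_{\text{ref}} \in \Gamma$ makes the infimum term of $S^{\FD}(\Gamma)$ vanish, we have $W \le \alpha_\Gamma^{-1}\sqrt{\FD(\trueposterior\|\candidateposterior)} \le \alpha_\Gamma^{-1}\sqrt{S^{\FD}(\Gamma)}$ uniformly in $\lambda$. Finite second moments of the candidates follow from the strongly log-concave tails in (A4.1). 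So it suffices to bound the mean and covariance differences by $W$ for a single fixed pair and then take the supremum over $\Gamma$.

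For the mean, let $(X,Y)$ be an optimal $W_2$ coupling of $\trueposterior$ and $\candidateposterior$. By Jensen,
\[
\|\mu_{\trueposterior}-\mu_{\candidateposterior}\|_2 = \|\mathbb{E}[X-Y]\|_2 \le \mathbb{E}\|X-Y\|_2 \le W,
\]
which gives the first bound.

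For the covariance, centre the variables: $\bar X = X-\mu_X$, $\bar Y = Y-\mu_Y$, and $D=\bar X-\bar Y$. Then $\mathbb{E}\|D\|_2^2 = \mathbb{E}\|X-Y\|_2^2 - \|\mu_X-\mu_Y\|_2^2 \le W^2$. Expanding,
\[
\Sigma_X-\Sigma_Y = \mathbb{E}[\bar Y D^\top] + \mathbb{E}[D\bar Y^\top] + \mathbb{E}[DD^\top].
\]
For unit vectors $u,v$, Cauchy--Schwarz gives $|\mathbb{E}[(u^\top \bar Y)(D^\top v)]| \le \sqrt{u^\top\Sigma_Y u}\,\sqrt{\mathbb{E}\|D\|_2^2} \le \sqrt{\|\Sigma_Y\|_2}\,W$, and $\|\mathbb{E}[DD^\top]\|_2 \le W^2$. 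Hence
\[
\|\Sigma_X-\Sigma_Y\|_2 \le 2\sqrt{\|\Sigma_Y\|_2}\,W + W^2 .
\]
Expanding around $\bar X$ instead (via $\bar Y=\bar X-D$) gives the same bound with $\Sigma_X$ in place of $\Sigma_Y$. Taking the minimum of the two and the supremum over $\lambda$, then substituting $W\le\alpha_\Gamma^{-1}\sqrt{S^{\FD}(\Gamma)}$, yields a bound of the stated form with constants $2$ and $1$. These are smaller than the stated $3$ and $5.25$, so the claim follows.

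The only real obstacle is bookkeeping: moving $\sup_\lambda$ inside the minimum while keeping the $\sup_\lambda\sqrt{\|\Sigma_{\candidateposterior}\|_2}$ term, which is handled by bounding each $\lambda$ by that supremum, and confirming that the $\alpha_\Gamma$ of \Cref{prop:dominate_otherdivergences} is the same constant used here. The looser constants $3$ and $5.25$ in the statement suggest the authors bound $\sqrt{\|\Sigma_{\candidateposterior}\|_2}$ via a triangle inequality. Since our sharper constants imply theirs, no further work is needed.
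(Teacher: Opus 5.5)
Your proposal is correct, and its outer structure matches the paper's. Both start from the per-$\lambda$ bound $W_2(\trueposterior,\candidateposterior)\le \alpha_\lambda^{-1}\FD(\trueposterior\|\candidateposterior)^{1/2}\le\alpha_\Gamma^{-1}\sqrt{S^{\FD}(\Gamma)}$ from the proof of \Cref{prop:dominate_otherdivergences}. Both convert it into moment bounds, then take the supremum over $\Gamma$ using monotonicity of $x\mapsto\min(a,x)$.

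The difference is in the middle step. The paper does not derive the moment bounds. It imports them from Theorem 4.1 of \citet{huggins2018practical}, which says that $W_2\le\varepsilon$ implies a mean error of at most $\varepsilon$ and a covariance error of at most $3\min(\|\Sigma_{\trueposterior}\|_2^{1/2},\|\Sigma_{\candidateposterior}\|_2^{1/2})\varepsilon+5.25\varepsilon^2$. So the constants $3$ and $5.25$ are simply inherited from that result, not produced by an extra triangle-inequality step as you guessed.

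You instead prove the conversion directly on an optimal coupling. For the mean you use Jensen. For the covariance you centre, use the identity $\Sigma_X-\Sigma_Y=\mathbb{E}[\bar Y D^\top]+\mathbb{E}[D\bar Y^\top]+\mathbb{E}[DD^\top]$ with $\mathbb{E}\|D\|_2^2\le W^2$, and apply Cauchy--Schwarz.

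This route is self-contained and elementary, and it gives the sharper constants $2$ and $1$. You therefore establish a slightly stronger inequality that implies the stated one. The paper's route is shorter but relies on an external result and carries its looser constants. The only thing you use implicitly is the existence of an optimal $W_2$ coupling. That is standard for measures with finite second moments, or you can use an $\epsilon$-optimal coupling and let $\epsilon\to 0$.
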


\subsection{Interpretability of the Measure of Sensitivity}\label{subsubsec:interpretability-through-factorisation}

Before concluding, we also consider the interpretability of $S^{\text{FD}}(\Gamma)$.
As for all divergence-based global sensitivity measures, it is hard to interpret raw values 
, and we therefore propose to focus on relative comparisons. Given two neighbourhoods $\Gamma_1,\Gamma_2 \subset \Lambda$, we can use $S^{\text{FD}}(\Gamma_1)-S^{\text{FD}}(\Gamma_2)$ to measure the relative sensitivity to $\Gamma_1$ and $\Gamma_2$. In addition, for a single neighbourhood, the sensitivity measure typically simplifies through various decompositions which can shed more light on what is being measured, including the relative sensitivity to the prior and loss function, or the relative sensitivity to different hyperparameters. 

\paragraph{Decomposition 1: Impact of the loss/prior} We can expand the squares in \Cref{eq:FD_definition}:
\begin{align}\label{eq:FD_decomposition}
    \rho^{\FD}(\tilde{\Pi}) & = \mathbb{E}_{\theta \sim \tilde{\Pi}_{\text{ref}}}\left[\left\| \nabla_{\theta} L_{\text{ref}}(\theta;x_{1:n}) - \nabla_{\theta} L(\theta;x_{1:n})\right\|^2_2 \right] + \mathbb{E}_{\theta \sim \tilde{\Pi}_{\text{ref}}}\left[\left\| s_{\pi_{\text{ref}}}(\theta) - s_{\pi}(\theta) \right\|^2_2 \right] \nonumber \\
    & \quad + 2\mathbb{E}_{\theta \sim \tilde{\Pi}_{\text{ref}}}\left[ \left( \nabla_{\theta} L_{\text{ref}}(\theta;x_{1:n}) - \nabla_{\theta} L(\theta;x_{1:n})\right)^\top \left(s_{\pi_{\text{ref}}}(\theta) - s_{\pi}(\theta)\right)\right]
\end{align}
This decomposition is particularly interpretable: the first term is non-negative and measures changes in the loss, the second term is also non-negative and measures changes in the prior, whereas the third term could be positive or negative and measures interactions between differences in the loss and prior. These three terms could be estimated separately through MCMC and used, for example, to compare the relative impact of changes in $\lambda$ on the loss and prior. 

\Cref{eq:FD_decomposition} is particularly informative in several special cases. For example, if we assume the reference posterior is an element of the neighbourhood (i.e. $\tilde{\Pi}_{\text{ref}} = \tilde{\Pi}^{\lambda_{\text{ref}}}$ for some $\lambda_{\text{ref}} \in \Gamma$) and consider only hyperparameters $\lambda$ of the loss, then the last two terms are zero and $
    S^\FD(\Gamma) = \sup_{\lambda \in \Gamma} \mathbb{E}_{\theta \sim \tilde{\Pi}_{\text{ref}}}[\| \nabla_{\theta} L(\theta;x_{1:n},\lambda_{\text{ref}}) - \nabla_{\theta} L(\theta;x_{1:n},\lambda)\|^2_2]$.
A similar expression holds when considering only prior hyperparameters $\lambda$: $
    S^\FD(\Gamma) = \sup_{\lambda \in \Gamma} \mathbb{E}_{\theta \sim \tilde{\Pi}_{\text{ref}}}[\| s_{\pi(\cdot|\lambda_{\text{ref}})}(\theta) - s_{\pi(\cdot|\lambda)}(\theta)\|^2_2 ]$.
In both special cases, we note that the expression can be further reduced to a simplified quadratic form under \Cref{assump:minimal-exp-fam}; see \Cref{appsec:div_derivation_quadratic_form} for more details.

\paragraph{Decomposition 2: Independence across dimensions} Suppose that the reference and candidate posteriors factorise due to independence across subsets of dimensions; that is, $\Theta = \Theta_1 \times \ldots \times \Theta_J$, $\tilde{\pi}_{\mathrm{ref}}(\theta)= \prod_{j=1}^J \tilde{\pi}_{\mathrm{ref},j}(\theta_j)$ and $\tilde{\pi}(\theta)= \prod_{j=1}^J \tilde{\pi}_{j}(\theta_j)$. Then, we have $
\rho^{\FD}(\tilde{\Pi})
= \FD(\tilde{\Pi}_{\text{ref}}\|\tilde\Pi) = 
\sum_{j=1}^J
\FD(\tilde{\Pi}_{\text{ref},j} \| \tilde{\Pi}_j)$.
If the neighbourhood can also be written as $\Gamma = \Gamma_1 \times \ldots \times \Gamma_J$ with hyperparameter $\lambda_j \in \Gamma_j$ only entering through the $j^{\text{th}}$ factor $\tilde{\pi}^{\lambda_j}_j(\theta_j)$ so that $\tilde{\pi}^\lambda(\theta)= \prod_{j=1}^J \tilde{\pi}_{j}^{\lambda_j}(\theta_j)$, then we obtain a sum of FD measures over lower-dimensional spaces:
\begin{align}\label{eq:optimisation-decomposition-composite-prior}
S^{\FD}(\Gamma)
=
\sum_{j=1}^J
\sup_{\lambda_j \in \Gamma_j} \FD(\tilde{\Pi}_{\text{ref},j} \| \tilde{\Pi}_j)
-
\inf_{\lambda_j \in \Gamma_j}
\FD(\tilde{\Pi}_{\text{ref},j} \| \tilde{\Pi}_j)
=: \sum_{j=1}^J S^{\FD}(\Gamma_j),
\end{align}
which allows us to examine individual terms to assess how much sensitivity depends on each of the subsets of dimensions. This posterior-factorisation assumption is relatively strong, since dependence across parameters is typically induced by the likelihood. However, a related broadly applicable special case arises when we consider prior sensitivity and only the prior factorises. Suppose $\pi_{\text{ref}}(\theta) = \pi(\theta\mid\lambda_{\text{ref}})$ and $\pi(\theta\mid\lambda)= \prod_{j=1}^J \pi_{j}(\theta_j\mid\lambda_j)$, but $\trueposterior$ and $\candidateposterior$ need not factorise. When considering prior sensitivity, the loss and interaction terms in the FD decomposition vanish and the sensitivity measure depends solely on differences between prior scores. Crucially, each summand depends on $\theta$ only through its $j$-th coordinate $\theta_j$. Since expectations of a function of $\theta_j$ alone reduce to expectations under the corresponding marginal of $\trueposterior$, this is equivalently $
    S^\FD(\Gamma) = \sum_{j=1}^J \sup_{\lambda_j \in \Gamma_j} \mathbb{E}_{\theta_j \sim \tilde{\Pi}_{\text{ref},j}}[\| s_{\pi_j(\cdot|\lambda_{\text{ref},j})}(\theta_j) - s_{\pi_j(\cdot|\lambda_j)}(\theta_j)\|^2_2 ]$,
where $\tilde{\Pi}_{\text{ref},j}$ denotes the marginal law of $\theta_j$ induced by $\trueposterior$.
This decomposition is not only interesting for interpretability, but also for computation. Indeed, it allows us to compute each term of this sum separately, which can significantly lower the dimensionality of the optimisation problems.

\section{EXPERIMENTS}\label{sec:experiments}
We are now ready to evaluate our sensitivity measure, and our code can be found at \url{https://github.com/jularina/fd-sens}. All experiments were run on macOS (Apple Silicon, M1 Pro) using a single CPU core without GPU acceleration. We use the packages \texttt{sbi} \citep{tejero2020sbi}, \texttt{POT} \citep{flamary2021pot}, and \texttt{scipy.optimize} \citep{virtanen2020scipy} for posterior inference, optimal transport computations, and numerical optimisation. 
We also use \texttt{PosteriorDB} \citep{magnusson2025posteriordb} to access benchmark Bayesian models. Additional proofs of the numerical experiment assumptions are provided in \Cref{appsec:complementary_experiments}.

\subsection{Synthetic Conjugate Gaussian Location Models} \label{subsec:toy_gaussian_model}
We first study the sensitivity of synthetic conjugate Gaussian location models to prior hyperparameters. This toy model allows closed-form verification of our method and benchmark comparison with existing approaches. 

\paragraph{Description of the model.}
We consider a Gaussian location model $\mathcal{N}(\theta,\Sigma_l)$ with mean parameter 
$\theta \in \mathbb{R}^{\paramdim}$ and covariance matrix $\Sigma_l$ assumed known.
We place a Gaussian prior $\candidateprior=\mathcal{N}(\mu,\Sigma)$ on $\theta$,  which can be expressed in natural exponential family parametrisation as $\pi(\theta | \lambda)
=
\exp\!\left(
\lambda^\top T(\theta)
-
A(\lambda)
\right)$
with $\lambda := [\lambda_0, \Lambda_1]=[\Sigma^{-1}\mu, -\tfrac12 \Sigma^{-1}]$ and $T(\theta) := [\theta, \theta \theta^\top]$. Given independent observations $x_1,\ldots,x_n$, the posterior is conjugate Gaussian $\candidateposterior=\mathcal{N}(\mu_n,\Sigma_n)$ with parameters $
\Sigma_n
=
(
-2\Lambda_1
+
n\Sigma_l^{-1}
)^{-1}$ and $
\mu_n
=
\Sigma_n
(
\lambda_0
+
n\Sigma_l^{-1}\bar{x}
)$,
where $\bar{x}=n^{-1}\sum_{i=1}^n x_i$. In this example, several divergences can be computed in closed-forms: 
the FD between 
$\tilde{\Pi}^{\lambda_1}=\mathcal{N}(\mu_{1,n},\Sigma_{1,n})$ and
$\tilde{\Pi}^{\lambda_2}=\mathcal{N}(\mu_{2,n},\Sigma_{2,n})$
is $
\FD(
\tilde{\Pi}^{\lambda_1}
\|
\tilde{\Pi}^{\lambda_2})
=
\|
\Sigma_{2,n}^{-1}
(
\mu_{2,n}-\mu_{1,n}
)
\|_2^2
+
\operatorname{tr}
\! \;(
(
\Sigma_{2,n}^{-1}
-
\Sigma_{1,n}^{-1}
)^2
\Sigma_{1,n}
)$, whilst the KL divergence and Wasserstein distance expressions are given in
\Cref{appsubsubsec:toy_gaussian_model_closed_forms}.

\paragraph{Illustrative univariate experiment.}
We begin with a dataset $\{x_i\}_{i=1}^{100}$ generated from
$\mathcal{N}(\theta_\text{true},\sigma_l^2)$ with unknown 
$\theta_\text{true}=3$ and known $\sigma_l=2$. Our reference prior $\trueprior$ is 
$\mathcal{N}(\mu_{\text{ref}},\sigma_{\text{ref}}^2)$
with $\mu_{\text{ref}}=2$ and $\sigma_{\text{ref}}=4$. We draw $m=2000$ independent samples from $\trueposterior$ and study sensitivity to $\mu$ and $\sigma$; see \Cref{fig:toy_gaussian_model_a}. As expected, $\hat{\rho}^\text{FD}_m(\candidateposterior)$ increases the further away the prior mean gets from $\mu_{\text{ref}}=2$. Interestingly, as the prior becomes less informative (i.e., as $\sigma$ increases), the effect of deviations in the prior mean $\mu$ diminishes, showing that the FD correctly identifies that a non-informative prior will be less sensitive to its location.

\begin{figure*}[t!]
    \centering
    \begin{subfigure}[b]{0.32\textwidth}
        \centering
        \includegraphics[width=\linewidth]{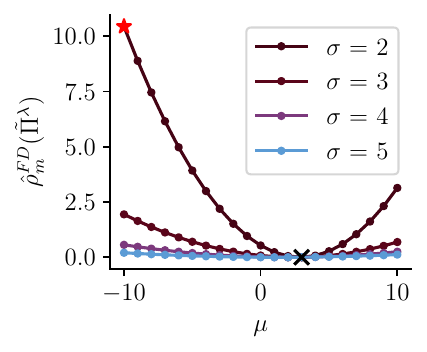}
        \caption{}
        \label{fig:toy_gaussian_model_a}
    \end{subfigure}
    \hfill
    \begin{subfigure}[b]{0.33\textwidth}
        \centering
        \includegraphics[width=\linewidth]{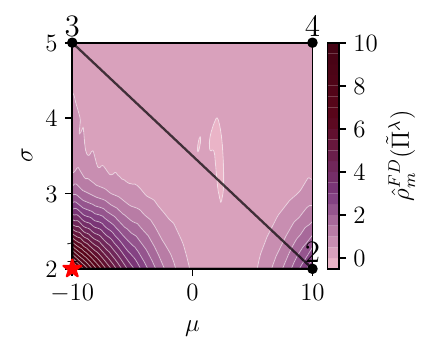}
        \caption{}
        \label{fig:hyperparams_contour_from_corners}
    \end{subfigure}
    \hfill
    \begin{subfigure}[b]{0.32\textwidth}
        \centering
        \includegraphics[width=\linewidth]{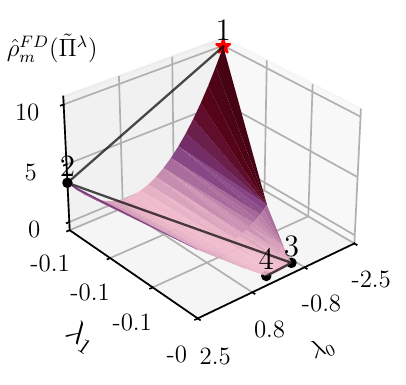}
        \caption{}        \label{fig:toy_gaussian_model_optimisation_nonconvex}
    \end{subfigure}
    \caption{\emph{Synthetic Gaussian location model.}
    \textcolor{red}{★} denotes the optimal candidate identified via box-constrained optimisation that maximises $\hat{\rho}_m^{\FD}(\tilde{\Pi}^\lambda)$.
    (a) The FD as a function of the prior hyperparameters $\mu$ and $\sigma$. The symbol \textbf{$\times$} marks the minimum.
    (b) The FD is non-convex in the $\mu,\sigma$ parametrisations. (c) In contrast, reparameterising the FD in terms of the natural parameters $\lambda_0 = \mu/\sigma^2$ and $\lambda_1 = -0.5/\sigma^2$ yields a convex function. Corner points are numbered, and the trajectory indicates the optimisation exploration path.
    }
    \label{fig:toy_gaussian_model}
\end{figure*}
The remainder of the figure illustrates the importance of the choice of parametrisation for optimisation. Over the Gaussian prior neighbourhood
$\{(\mu,\sigma)^\top \in [-10,10] \times [2,5]\}$,  the FD is non-convex in $(\mu,\sigma)$  (see  \Cref{fig:hyperparams_contour_from_corners}). However, we recall that \Cref{prop:FD_convex_quadratic} guarantees that the FD is a convex quadratic for natural exponential family models. We can therefore reparameterise the Gaussian prior in terms of natural parameter $\lambda_0 = \frac{\mu}{\sigma^2}$ and $\lambda_1 = -\frac{1}{2\sigma^2}$, in which case  the objective is indeed quadratic (see \Cref{fig:toy_gaussian_model_optimisation_nonconvex}). 
In this new parametrisation, the same neighbourhood can be expressed as $\Gamma = \{ (\lambda_0,\lambda_1)^\top: \lambda_1 \in [-\frac{1}{8},-\frac{1}{50}] \text{ and } |\lambda_0| \leq 20 |\lambda_1| \}$, 
which is a bounded convex polytope, and 
computation of the supremum boils down to maximum of the FD at the four vertices, which is attained at $\mu=-10$ and $\sigma=2$. This example therefore illustrates the importance of working in natural exponential family parametrisations, and we will hence focus on this case in the remainder of the paper.

\paragraph{Strength of the FD measure.}

\begin{figure}[t!]
    \centering
    \begin{subfigure}[b]{0.32\linewidth}
        \centering
        \includegraphics[width=\linewidth]{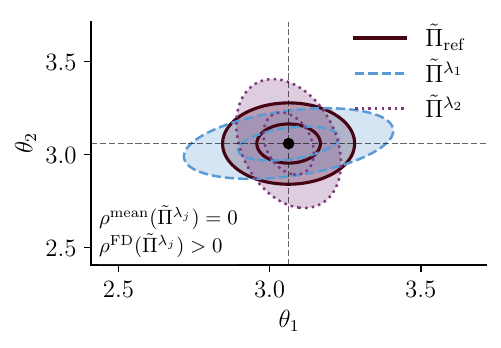}
        \caption{}
        \label{fig:comparison_same_mean_diff_cov}
    \end{subfigure}
    \begin{subfigure}[b]{0.32\linewidth}
        \centering
        \includegraphics[width=\linewidth]{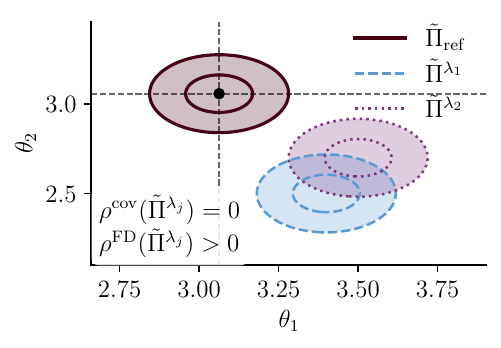}
        \caption{} \label{fig:comparison_same_cov_diff_mean}
    \end{subfigure}
    \hfill
    \begin{subfigure}[b]{0.32\textwidth}
        \centering
        \includegraphics[width=\linewidth]{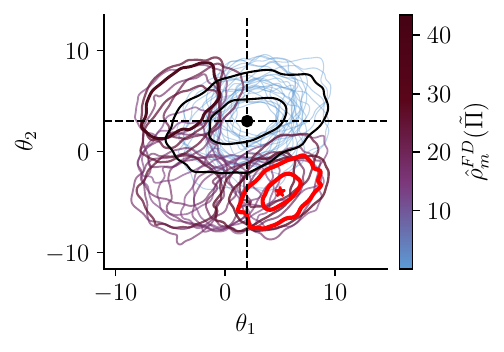}
        \caption{}
        \label{fig:toy_gaussian_model_multivariate}
    \end{subfigure}
    \caption{\emph{Strength of measure for the multivariate Gaussian model.} (a) Posteriors with the same mean, but different covariances, (b) Posteriors with different means, but the same covariances. (c) A set of candidate multivariate Gaussian distributions. \textcolor{black}{\rule[0.5ex]{1.2em}{1.0pt}} is the $\pi_\text{ref}$ and \textcolor{red}{\rule[0.5ex]{1.2em}{1.0pt}} is the optimal $\pi^\lambda$.
    }
    \label{fig:comparison_measure_strength}
\end{figure}
We now extend the example to bivariate Gaussian posteriors, with reference $\trueposterior=\mathcal{N}(\mu_{\mathrm{ref},n},\Sigma_{\mathrm{ref},n})$ and candidates $\tilde{\Pi}^{\lambda_j}=\mathcal{N}(\mu_{j,n},\Sigma_{j,n})$. \Cref{fig:comparison_measure_strength} illustrates how the FD detects distributional changes that sensitivity measures based on individual posterior expectations can miss.
When the candidates have the same mean as the reference but different covariances, $\rho^{\mathrm{mean}}(\widetilde\Pi^{\lambda_j})=0$ whereas $\rho^{\mathrm{FD}}(\tilde{\Pi}^{\lambda_j}) = 
\mathrm{tr}(
(\Sigma_{j,n}^{-1}-\Sigma_{\mathrm{ref},n}^{-1})^\top
(\Sigma_{j,n}^{-1}-\Sigma_{\mathrm{ref},n}^{-1})\,
\Sigma_{\mathrm{ref},n}
) > 0$ for $j\in \{1,2\}$
(see  \Cref{fig:comparison_same_mean_diff_cov}). 
Conversely, when only the means differ, covariance-based sensitivity vanishes whilst
$\rho^\FD(
\tilde{\Pi}^{\lambda_j})
=
\|
\Sigma_{j,n}^{-1} (\mu_{\text{ref},n}-\mu_{j,n} )\|_2^2>0$ (see \Cref{fig:comparison_same_cov_diff_mean}). Although this issue could be remedied through a sensitivity measure based on both moments in this example, most distributions are not characterised through a finite number of moments, and using a statistical divergence is therefore advantageous.
We show this by considering the worst-case sensitivity obtained through 
box constraints in $\Gamma
=
\{
[\lambda_0,\Lambda_1]:
\lambda_0=\Sigma^{-1}\mu,
\;
\Lambda_1=-\tfrac{1}{2}\Sigma^{-1},
\mu_{0},\mu_{1}\in[-4,5],
\sigma_{00},\sigma_{11}\in[2,4],
\sigma_{01}=\sigma_{10}\in[0,2]
\}$ in
\Cref{fig:toy_gaussian_model_multivariate}, where the worst-case prior identified by the FD consists of changes in both mean and covariance as expected.

\paragraph{Comparison with alternative sensitivity measures.}
We now move on to multivariate Gaussians and study the impact of $d_\Theta$ and $m$ on the estimation error and computational cost of different sensitivity methods based on the posterior mean, the Wasserstein-2 distance and the KL divergence. Although closed-form KL and Wasserstein distances are available for this Gaussian experiment, we intentionally compare sample-based estimators that mirror the estimators one would need in non-conjugate settings. Our results are given in \Cref{fig:comparison_measure_sample_complexity}.

We begin by comparing the FD and mean measures. These both have similar estimation errors of $\mathcal{O}(m^{-1/2})$, and relatively small computational cost of $O(m d_{\Theta}^2)$ and $\mathcal{O}(m d_\Theta)$ respectively. Although the FD will have a higher cost for large $d_\Theta$, this has to be balanced with the fact that it controls changes in the entire posterior. 

In contrast, the two other divergence-based measures are able to control changes in the entire posterior, but have estimation errors and computational costs which are orders of magnitude larger than that of the FD. Firstly, the KL divergence is representative of a broad class of divergences requiring access to normalised densities, which are unavailable for many posteriors. We therefore estimate it by applying kernel density estimation to samples from $\trueposterior$ and $\candidateposterior$, then plugging in these estimated densities to approximate the divergence through a Monte Carlo estimator. This costs $O(m^2 d_{\Theta})$ and deteriorates with dimension: in \Cref{fig:comparison_measure_sample_complexity}, convergence is not observed for $d_\Theta=25$ and $d_\Theta=100$ for $m < 5000$. Secondly, we consider the Wasserstein-2 distance, which is typically estimated by optimal transport between empirical samples from $\trueposterior$ and $\candidateposterior$. For $d_{\Theta}>1$, this costs $\mathcal{O}(m^3+m^2 d_{\Theta})$, while its convergence rate is $O(m^{-1/d_{\Theta}})$ for $d_{\Theta}>2$ \citep{Fournier2015}, illustrating its poor scaling with $m$ and $d_\Theta$.

\begin{figure}[t!]
    \centering
    \begin{subfigure}[b]{0.24\linewidth}
        \centering
        \includegraphics[width=\linewidth]{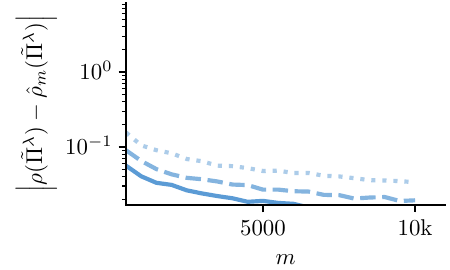}
    \end{subfigure}   
    \begin{subfigure}[b]{0.24\linewidth}
        \centering
        \includegraphics[width=\linewidth]{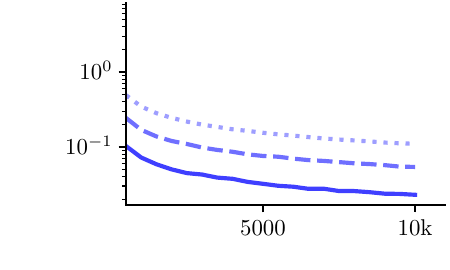}
    \end{subfigure}
    \begin{subfigure}[b]{0.24\linewidth}
        \centering
        \includegraphics[width=\linewidth]{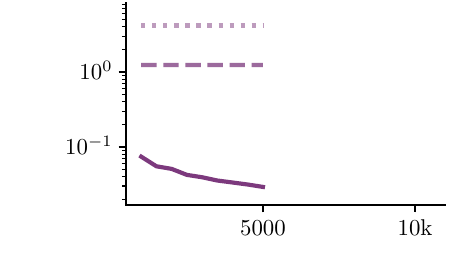}
    \end{subfigure}
        \begin{subfigure}[b]{0.24\linewidth}
        \centering
        \includegraphics[width=\linewidth]{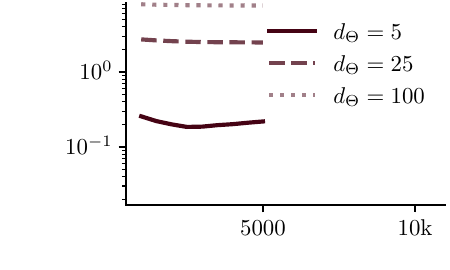}
    \end{subfigure}

    \vspace{0.4em}

    \setcounter{subfigure}{0}

        \begin{subfigure}[b]{0.24\linewidth}
        \centering
        \includegraphics[width=\linewidth]{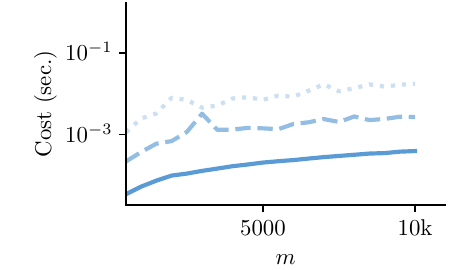}
        \caption{$\widehat{\rho}_m^{\text{FD}}(\candidateposterior)$}
    \end{subfigure}
        \begin{subfigure}[b]{0.24\linewidth}
        \centering
        \includegraphics[width=\linewidth]{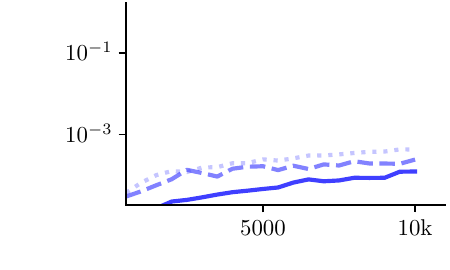}
        \caption{$\widehat{\rho}_m^{\text{mean}}(\candidateposterior)$}
    \end{subfigure}
        \begin{subfigure}[b]{0.24\linewidth}
        \centering
        \includegraphics[width=\linewidth]{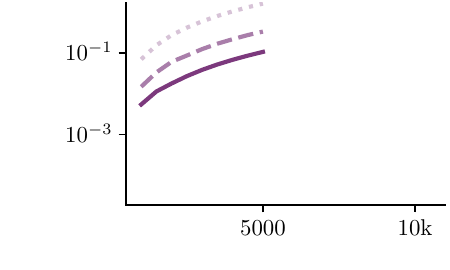}
        \caption{$\widehat{\rho}_m^{\text{KL}}(\candidateposterior)$}
    \end{subfigure}
    \begin{subfigure}[b]{0.24\linewidth}
        \centering
        \includegraphics[width=\linewidth]{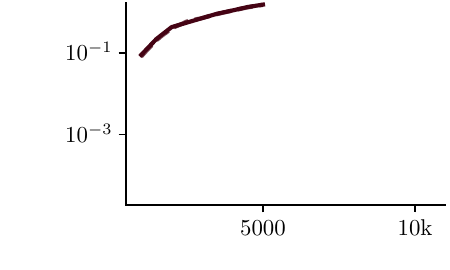}
        \caption{$\widehat{\rho}_m^{\text{W}_2}(\candidateposterior)$}
    \end{subfigure}
    \caption{\emph{Finite-sample complexity (top row) and computational cost (bottom row) across $\paramdim$ for the multivariate Gaussian model.} 
    Results are averaged over 500 runs. The experiments for the Wasserstein-2 distance and the KL divergence were too expensive to run up to $m=10000$, so were run for only $m=5000$ samples.}
    \label{fig:comparison_measure_sample_complexity}
\end{figure}

Before concluding, we note that these computational differences are amplified in global sensitivity analysis, where the measure must be evaluated repeatedly during optimisation. Mean-, KL-, and Wasserstein-based measures require samples from every candidate posterior, which explains why these are typically considered with discrete and finite neighbourhoods. In contrast, the FD reuses samples from $\trueposterior$ and requires only the candidate scores, making optimisation over continuous neighbourhoods considerably more tractable.

\subsection{Measuring Sensitivity to Learning Rate Estimation}\label{subsec:gen-bayes-ising}

We are now ready to move on to more realistic sensitivity problems. Choosing the learning rate, which controls the relative weight of the loss and prior, is a central challenge in generalised Bayesian inference, with no consensus among existing methods \citep{wu2023comparison}. It is also well established that generalised posteriors can be highly sensitive to the learning rate, and that some of these estimators can be somewhat unstable in low data regimes. Interestingly, our proposed FD sensitivity measure can be used to verify this sensitivity numerically. 

We illustrate this using the Ising model and dataset of \citet{matsubara2024generalized}. This is a challenging inference problem since the likelihood is only known in unnormalised form, which makes standard posteriors doubly intractable. The data comprise $n=1000$ binary configurations on a $6 \times 6$ grid graph, so that $\datadim = 36$, and we use a $\chi^2$ prior with $3$ degrees of freedom. We consider two generalised posteriors based on the pseudo-likelihood (PL; \cite{besag1974spatial}) and on the discrete Fisher divergence (DFD;  \cite{matsubara2024generalized}).
For each posterior, we used $m=5000$ samples from a reference posterior obtained via Hamiltonian Monte Carlo with $1000$ burn-in samples.

We compare three learning rate estimation methods by \citet{syring2019calibrating}, \citet{lyddon2019general} and \citet{matsubara2024generalized}, which all aim to ensure the right frequentist coverage for credible regions of the generalised posterior. Each produces a reference learning rate $\lambda_{\text{ref}}$ and corresponding $\trueposterior$, and we measure sensitivity over $\Gamma = \{\lambda >0 :|\lambda - \lambda_{\text{ref}}| \leq \epsilon \}$ for some $\epsilon > 0$. Measuring sensitivity is particularly important here because all three of these methods depend on algorithmic choices which can substantially affect the resulting estimate of the learning rate, such as a number of bootstrap replications or the number of steps of a numerical optimiser. The FD is particularly computationally convenient for this task; using decomposition 1 in \Cref{subsubsec:interpretability-through-factorisation}, we see that $
    \rho^{\text{FD}}(\tilde{\Pi}^{\lambda}) = (\lambda - \lambda_\mathrm{ref})^2 \mathbb{E}_{\theta \sim \tilde{\Pi}_{\text{ref}}}\![\| \nabla_{\theta} l(\theta;x_{1:n}) \|_2^2 ]$, $S^\FD(\Gamma) = \epsilon^2  \mathbb{E}_{\theta \sim \tilde{\Pi}_{\text{ref}}}\![\| \nabla_{\theta} l(\theta;x_{1:n})\|^2 ]$, and therefore no optimisation is needed.

\begin{figure}[t!]
    \centering
    \captionsetup[subfigure]{labelformat=empty}

    \begin{subfigure}{0.32\textwidth}
        \centering
        \includegraphics[width=\linewidth]{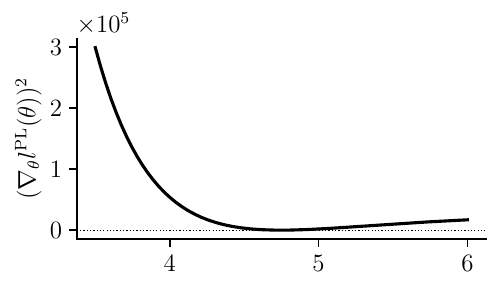}   
        \caption{}
        \label{fig:ising-loss-gradient-pseudolikelihood-1000}
    \end{subfigure}
    \hfill
    \begin{subfigure}{0.32\textwidth}
        \centering
        \includegraphics[width=\linewidth]{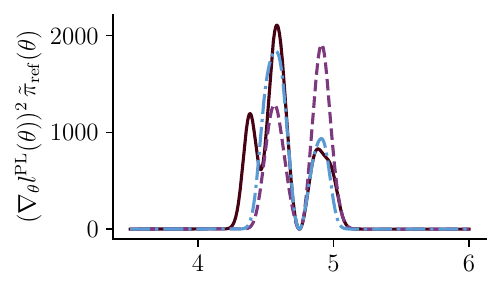}
        \caption{}
        \label{fig:ising-loss-gradient-times-density-pseudolikelihood-1000}
    \end{subfigure}
    \hfill
    \begin{subfigure}{0.32\textwidth}
        \centering
        \includegraphics[width=\linewidth]{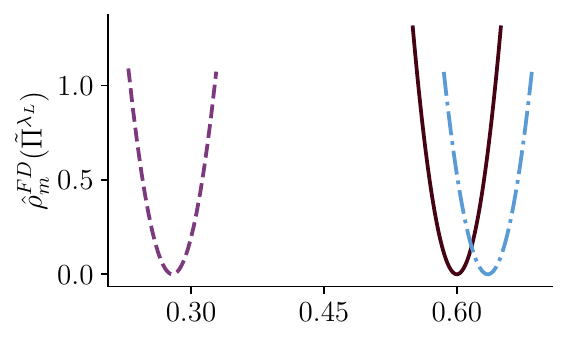}
        \caption{}
        \label{fig:ising-lr-comparison-pseudolikelihood-1000}
    \end{subfigure}
    
    \captionsetup[subfigure]{labelformat=empty}
    \vspace{-0.5cm}

    \begin{subfigure}{0.32\textwidth}
        \centering
        \includegraphics[width=\linewidth]{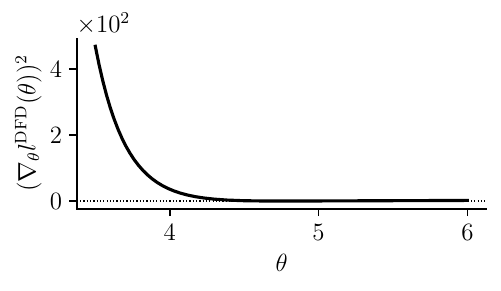}   
        \caption{}
        \label{fig:ising-loss-gradient-dfd-1000}
    \end{subfigure}
    \hfill
    \begin{subfigure}{0.32\textwidth}
        \centering
        \includegraphics[width=\linewidth]{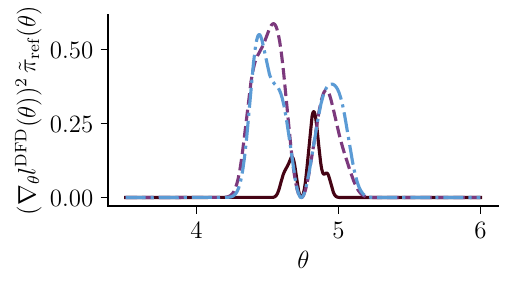}
        \caption{}
        \label{fig:ising-loss-gradient-times-density-dfd-1000}
    \end{subfigure}
    \hfill
    \begin{subfigure}{0.32\textwidth}
        \centering
        \includegraphics[width=\linewidth]{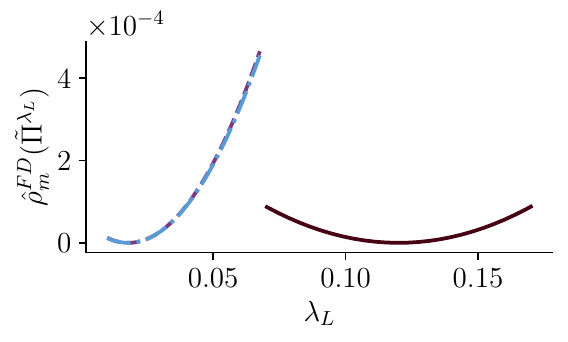}
        \caption{}
        \label{fig:ising-lr-comparison-dfd-1000}
    \end{subfigure}
    \vspace{-0.5cm}

    \caption{\emph{Generalised Bayesian inference for an Ising model}. Top: Sensitivity with PL loss. Bottom: Sensitivity with DFD loss. 
    \textcolor{darkcherry}{\rule[0.5ex]{1.2em}{1.0pt}} is \citet{matsubara2024generalized},
    \textcolor{purple}{\hdashrule[0.5ex]{1.2em}{1pt}{2pt 1pt}} is \citet{syring2019calibrating},
    and 
    \textcolor{blue}{\hdashrule[0.5ex]{1.2em}{1pt}{0.5pt 1.5pt}} is \citet{lyddon2019general}.
    First column: $(\nabla_\theta l^\mathrm{PL}(\theta))^2$.
    Second column: $(\nabla_\theta l^\mathrm{PL}(\theta))^2 \times \tilde{\pi}_\mathrm{ref}(\theta)$.
    Third column: $\hat{\rho}_m^{\FD}(\tilde{\Pi}^{\lambda_L})$ across the neighbourhood $\Gamma$ with $\epsilon=0.05$. 
    }
        
    \label{fig:ising-optimisation} 
\end{figure}

In \Cref{fig:ising-optimisation}, we present the components within the expectation, contributing to the sensitivity measure $\rho^{\FD}(\tilde{\Pi}^{\lambda})$. First, we note that in this setup the absolute sensitivity values for the PL and DFD losses are not directly comparable, since the magnitudes of the squared loss gradients differ. Therefore, we compare the calibration procedures separately for each loss and dataset, i.e. within each row.
For the PL loss, the calibration procedure of \citet{matsubara2024generalized} leads to the largest sensitivity. This occurs because the integrand in \Cref{fig:ising-loss-gradient-times-density-pseudolikelihood-1000} has three pronounced bumps, resulting in a substantially larger area under the curve than for the approaches of \citet{syring2019calibrating} and \citet{lyddon2019general}. The left-most bump appears because $\trueposterior$ contains a small left-tail bump, which is amplified by the rapidly increasing squared loss gradient.
For the DFD loss, the calibration procedures of \citet{syring2019calibrating} and \citet{lyddon2019general} once again behave similarly, whereas the approach of \citet{matsubara2024generalized} provides the least sensitive posterior.

Before concluding, we emphasise the computational efficiency with which such plots can be generated. Thanks to the simplified expression for the FD, both  $\hat{\rho}^\text{FD}_m(\tilde{\Pi}^{\lambda_L})$ and $\widehat S_m^{\FD}(\Gamma)$ can be estimated from a single MCMC estimator using samples from $\trueposterior$, which takes around $10^{-4}$ seconds. This cost is therefore negligible relative to the roughly $7$ seconds it takes to sample. 

\subsection{Autoregressive model of the temperature in Kilpisjärvi}\label{subsec:heavy-tailed-ar-model-kilpisjarvi}

We now consider prior sensitivity analysis in the context of time-series models  \citep{karlsson2013forecasting, giannone2015prior}.
The aim of this experiment is twofold: to illustrate the method on a higher-dimensional posterior with real data, and to assess the performance of our method under different optimisation routines. 
We consider an autoregressive (AR) model for a univariate real-world time series of June temperatures recorded in Kilpisjärvi (Finland) starting in $1952$, with $n=62$ observations; see \Cref{fig:ark_posterior_predictive}. Such temperature records are commonly analysed to study long-term climate dynamics, and Bayesian time-series models are particularly useful here because the amount of data is limited. Prior sensitivity is important since downstream conclusions are based on posterior predictive distributions, and sensitivity to plausible prior perturbations may substantially affect predictive reliability.

The likelihood for an AR model of order $S \in \mathbb{N}$ is conditionally Gaussian and given by $p_\theta(x_{S+1:n}|x_{1:S}) = \prod_{i=S}^n p_\theta(x_i|x_{i-1},\ldots,x_{i-S})$,  $p_\theta(x_i|x_{i-1},\ldots,x_{i-S}) = \mathcal{N}(x_i;\alpha + \sum_{s=1}^S \beta_s\, x_{i-s}, \sigma^2)$,
and the parameters are $\theta = [\alpha,\beta_1,\dots,\beta_S,\sigma] \in \mathbb R^{\paramdim-1} \times (0, \infty)$. 
Following the analysis of this dataset in the \texttt{posteriordb} \citep{magnusson2025posteriordb} package, we take $\trueprior$ encoding independence across parameters: $\pi_\text{ref}(\alpha)= \mathcal N(\alpha;0,\,5^2),
\pi_\text{ref}(\beta_{s}) =  \mathcal N(\beta_s;0,\,5^2)$ for $s\in\{1,\ldots,S\}$, and $\pi_\text{ref}(\sigma) = \mathrm{HalfCauchy}(\sigma;0,\,1)$ where $\sigma>0$ is a scale parameter. We fix $S=5$, so that $\paramdim = 7$, and for each parameter we consider sensitivity to both hyperparameters, so that $\natparamdim = 14$. More precisely, we have
$\pi(\theta|\lambda)
=
\pi(\sigma|\lambda_\sigma)\,\pi(\alpha|\lambda_\alpha)\,\prod_{s=1}^S \pi(\beta_s|\lambda_{\beta_s})$
with $\lambda:=\lambda_\pi = [\lambda_\alpha, \lambda_{\beta_1},\dots,\lambda_{\beta_5},\lambda_\sigma]$ and $\pi(\alpha|\lambda_\alpha)=\mathcal N(\alpha;\mu_\alpha, s_\alpha^2)$, $
\pi(\beta_s|\lambda_{\beta_s}) = \mathcal N(\beta_s;\mu_{\beta_s}, s_{\beta_s}^2)$,
$\pi(\sigma|\lambda_{\sigma}) = \mathrm{InvGamma}(\sigma;a,b)$. We construct a box-constrained neighbourhood by placing upper and lower bounds on each hyperparameter which are the same for $\alpha,\beta_1,\dots,\beta_5$. Due to the composite structure of the prior, the box-constrained neighbourhood admits the decomposition
$
\Gamma = \prod_{j=1}^{\paramdim} \Gamma_j
$,
and hence the sensitivity measure decomposes across dimensions: 
$\widehat S_m^{\FD}(\Gamma) :=
\sum_{j=1}^{\paramdim} \widehat S_m^{\FD}(\Gamma_j)$.
In this case, the candidate prior family for $\alpha,\beta_{1:5}$ coincides with that of $\trueprior$, so $\inf_{\lambda_j\in\Gamma_j}\hat\rho^{\FD}_{m}(\lambda_j)=0$. Thus, we need to calculate the infimum $\inf_{\lambda_\sigma\in\Gamma_\sigma}\hat\rho^{\FD}_{m}(\lambda_\sigma)$ only for $\sigma$. 
To perform inference, we draw $m = 5000$ posterior samples $\{ \theta_i \}_{i=1}^m$ from $\trueposterior$ with $1000$ burn-in samples and $5$ chains, using the No-U-Turn-Sampler (NUTS) in Stan \citep{carpenter2017stan}.  The results are presented in \Cref{fig:ark_param_and_cost} and \Cref{fig:ark_posterior_predictive}. 

One of the main advantages of our approach is that we obtain not only an estimate of $ S_m^{\FD}(\Gamma)$, but also the worst-case prior at which the supremum is obtained. As seen in \Cref{fig:kilpisjarvi_param_three_panel_priors}, the parameter $\lambda^{\mathrm{sup}}_{m}$ leads to highly concentrated priors with mass pushed away from $\trueprior$. 

Due to our use of independent prior, we can also use the per-dimension decomposition to evaluate parameter-wise contributions to the overall sensitivity $\widehat S_m^{\FD}(\Gamma)=605.35$; see \Cref{fig:kilpisjarvi_param_component_sensitivity}. Dominant contributions come from sensitivity to the prior on $\beta_1$, which accounts for approximately $22\%$ of $\widehat S_m^{\FD}(\Gamma)$, followed by sensitivity to the prior on $\beta_2$, contributing around $18\%$. The intercept $\alpha$, the remaining lag coefficients $\beta_3,\beta_4,\beta_5$ and $\sigma$ make noticeably smaller contributions. This is consistent with intuition: the most recent observations have the strongest influence, so higher sensitivity to the first and second lag coefficients is expected. 

Interestingly, the impact of the worst-case prior can also be observed at the level of the posterior predictive distribution; see \Cref{fig:ark_posterior_predictive}. Under $\trueprior$, the predictive trajectory appears relatively smooth and evolves gradually over time, whilst under the worst-case prior, the predictive trajectory becomes more oscillatory. This behaviour is consistent with the sensitivity analysis above: the posterior reacts the most to the perturbations of the short-lag autoregressive coefficients $\beta_1$ and $\beta_2$, and as a result, predictions are being driven more strongly by the first and second lagged observations. The larger predictive uncertainty indicates that the worst-case prior results in autoregressive dynamics that are less stable, providing yet another indication of the importance of sensitivity analysis.

\begin{figure}[t!]
  \centering
  \begin{subfigure}{0.4\textwidth}
    \centering
    \includegraphics[width=\linewidth]{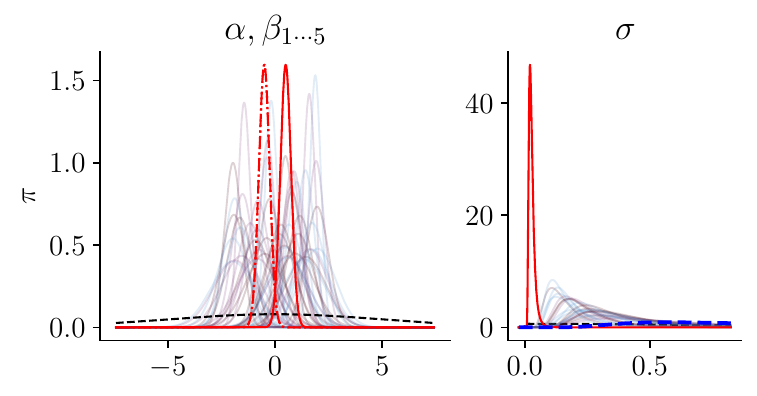}
    \caption{}
    \label{fig:kilpisjarvi_param_three_panel_priors}
  \end{subfigure}
  \begin{subfigure}{0.23\textwidth}
    \centering
    \includegraphics[width=\linewidth]{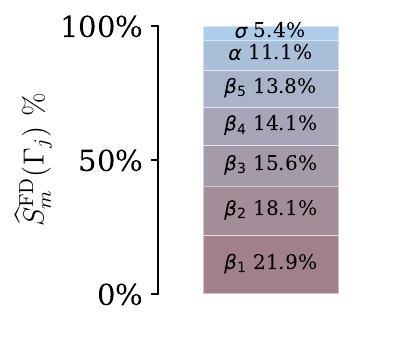}
    \caption{}
    \label{fig:kilpisjarvi_param_component_sensitivity}
  \end{subfigure}
  \begin{subfigure}{0.35\textwidth}
    \centering
    \includegraphics[width=\linewidth]{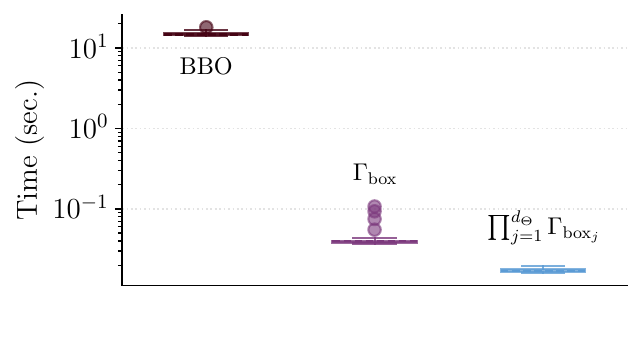}
    \caption{}
    \label{fig:kilpisjarvi_computational_cost}
  \end{subfigure}
  \caption{\emph{Kilpisjärvi AR(5) model optimisation results}. (a) \textcolor{black}{%
    \rule[0.5ex]{0.25em}{1.0pt}\hspace{0.15em}%
    \rule[0.5ex]{0.25em}{1.0pt}\hspace{0.15em}%
    \rule[0.5ex]{0.25em}{1.0pt}%
    } is $\pi_\text{ref}$,
    \textcolor{darkblue}{%
    \rule[0.5ex]{0.25em}{1.0pt}\hspace{0.15em}%
    \rule[0.5ex]{0.25em}{1.0pt}\hspace{0.15em}%
    \rule[0.5ex]{0.25em}{1.0pt}%
    } is $\pi^{\lambda^{\mathrm{inf}}_{m,t}}$
    and 
    \textcolor{red}{\rule[0.5ex]{1.2em}{1.0pt}},
    \textcolor{red}{%
    \rule[0.5ex]{0.25em}{1.0pt}\hspace{0.15em}%
    \rule[0.5ex]{0.25em}{1.0pt}\hspace{0.15em}%
    \rule[0.5ex]{0.25em}{1.0pt}%
    } 
are the optimal $\pi^{\lambda^{\mathrm{sup}}_{m}}$, found among all candidate $\pi^{\lambda_{m}}$ (faint \textcolor{purple}{\rule[0.5ex]{1.2em}{1.0pt}} lines).   (b) $\widehat S_m^{\FD}(\Gamma_j)$ contribution to $\widehat S_m^{\FD}(\Gamma)$.  (c) Optimisation runtime in log-scale (boxplot obtained from 500 repetitions) of our convex box-constrained (\textcolor{darkcherry}{\textbullet}), decomposed box-constrained (\textcolor{purple}{\textbullet}) and black-box (BBO) (\textcolor{darkblue}{\textbullet}) optimisation methods.}
  \label{fig:ark_param_and_cost}
\end{figure}

Before concluding this example, we also comment on the computational cost of running our procedure, with results presented in \Cref{fig:kilpisjarvi_computational_cost}.
We show three versions: a standard black-box dual annealing optimisation, a method which optimises by evaluating the quadratic-form objective in all $2^{\natparamdim} = 2^{14}$ corner points of the box constraints, and finally a method which uses the decomposition of the sensitivity measure as a sum of quadratic forms to consider only $\sum_{j=1}^{\paramdim} 2^{d_{\Lambda_j}} = 28$ corner configurations. 
The black-box dual annealing approach runs for substantially longer than the alternatives, and is primarily used to benchmark the computational advantages of using the quadratic-form structure of our objective. Its runtime is comparable to that of MCMC, which takes approximately $15$ seconds to run for this problem.    In contrast, the most general quadratic form approach is around $500$ times cheaper to run, whilst the sum of quadratic forms is around  $1000$ times cheaper to run. This clearly demonstrates that the computational cost of performing sensitivity analysis is negligible relative to that of MCMC.

\begin{figure}[t!]
    \centering
    \begin{subfigure}{0.49\textwidth}
        \centering
        \includegraphics[width=\linewidth]{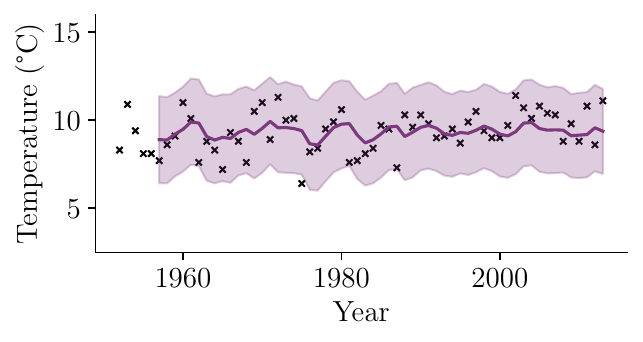}   
        \caption{Posterior predictive under $\trueprior$.}
        \label{fig:kilpisjarvi-posterior-predictive-ref}
    \end{subfigure}
    \begin{subfigure}{0.50\textwidth}
        \centering
        \includegraphics[width=\linewidth]{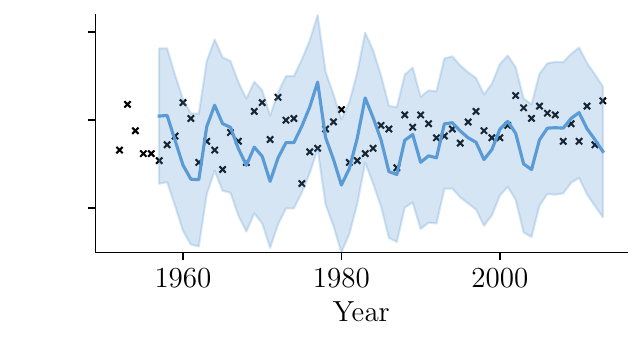}
        \caption{Posterior predictive under the worst-case prior.}
        \label{fig:kilpisjarvi-posterior-predictive-corner}
    \end{subfigure}
    \caption{\emph{Kilpisjärvi model.}
    \textbf{$\times$} are true observations. Mean and 95\% credible regions.}
    \label{fig:ark_posterior_predictive}
\end{figure}

\subsection{Simulation-based inference for radio propagation modelling}\label{subesc:sbi}
Our final example showcases our approach on a highly challenging Bayesian inference problem and demonstrates a possible advantage of reparametrisation. We consider simulation-based inference (SBI) for the Turin radio propagation model \citep{turin1972statistical}. The Turin model is an early stochastic radio channel model commonly used to help engineers evaluate communication systems across different propagation environments without collecting costly real-world measurements. Although it was first proposed in the 1970s, it remains widely used for modern millimetre-wave systems such as 5G \citep{haneda2015statistical, samimi20163}.

The Turing model is a simulator generating complex-valued time-series observations and depending on $\paramdim = 4$ parameters
$\theta = [G_0, T, \nu, \sigma_W^2]$. We consider the setting in \citet{huang2023learning}, where a single time-series of length $801$ and bandwith $4 \text{GHz}$ is observed. We perform SBI through neural likelihood estimation (NLE) with a masked-autoregressive flow and a pre-trained summary network reducing the observation to a six-dimensional statistic. We generate $m = 5000$ posterior samples from $\trueposterior$ using Hamiltonian Monte Carlo with $10$ parallel chains, discarding the first $1000$ iterations per chain as burn-in.

Although the majority of the literature \citep{bharti2019estimator,Bharti2021,huang2023learning} works with independent uniform priors due to the simplicity of specifying upper and lower bounds per parameter, this may not be a good choice as some combination of parameters might be unrealistic from a physical perspective. For instance, when communication takes place in a room \citep{pedersen2018modeling},  $\nu$ and $T$ get positively coupled: an increase in $\nu$ corresponds to more reverberation of the signal inside the room, which suggests slower decay of the signal and hence larger $T$. Conversely, $\nu$ and $G_0$ are negatively correlated as a large signal variance in the observations can be explained by either increasing $G_0$ or decreasing $\nu$.

We therefore perform sensitivity analysis to the choice of prior. Following \citet{huang2023learning}, we start with reference uniform priors
$G_0 \sim \text{Unif}(10^{-9}, 10^{-8})$,
$T \sim \text{Unif}(10^{-9}, 10^{-8})$,
$\nu \sim \text{Unif}(10^{7}, 5 \times 10^{9})$, and
$\sigma_W^2 \sim \text{Unif}(10^{-10}, 10^{-9})$. Given two parameters $\theta_i,\theta_j$ that we suspect of being correlated, we then consider candidate priors $
\pi(\theta\mid\lambda) =
\pi_\text{ref}(\theta)
c_{\lambda}(\theta_i, \theta_j)$. This candidate family preserves the marginal distributions while introducing dependence through a copula density with hyperparameter $\lambda$:
\begin{align*}
c_{\lambda}(\theta_i, \theta_j)
=
\frac{1}{\sqrt{1-\lambda^2}}
\exp\!\left(
\frac{
2\lambda z_i z_j
-
\lambda^2(z_i^2+z_j^2)
}{
2(1-\lambda^2)
}
\right),
\end{align*}
where $z_i = \Phi^{-1}\!\bigl(F_i(\theta_i)\bigr)$, and $F_i(\theta_i)=\frac{\theta_i-a_i}{b_i-a_i}$ is the CDF of the reference prior. We consider neighbourhoods of the form $\Gamma = \{\lambda:|\lambda - \lambda_{\text{ref}}| \le \epsilon \}$, where $\lambda_{\mathrm{ref}}=0$. 

We examine FD sensitivity when introducing dependence between $(G_0, \nu)$ or $(T, \nu)$ with $\epsilon = 0.2$ in \Cref{fig:sbi-copula-full-axis} for physically natural directions, i.e. positive correlation between $(T, \nu)$ and negative correlation between $(G_0, \nu)$. We see that the FD is not convex in $\lambda$ and looking at a smaller neighbourhood (e.g. $\epsilon=0.1$) would lead to a worst-case which is not on the boundary. This illustrates a significant limitation of the informal approach to sensitivity analysis \citep{zhu2011bayesian,kurtek2015bayesian,giacomini2021robustsetidentified,ghaderinezhad2022wasserstein}: evaluating sensitivity over a discrete set of candidate posteriors $\mathcal{P}_{\Gamma} := \{\candidateposterior \in \mathcal{P}(\Theta) : \lambda \in  \Gamma\}$, arising from $\Gamma = \{ \lambda_{1}, 
\ldots, \lambda_{K} \}$ could lead us to miss the worst-case. 
\Cref{fig:sbi_turin_gaussian_copula_fd_grid_marked} illustrates this pitfall: with 
$\Gamma = \{-0.02, 0.05, 0.10, 0.14\}$, the informal procedure selects the local maximum $\lambda = 0.05$, while the global supremum at $\lambda = 0.2$ is missed entirely. Thankfully, unlike for other divergences, the low computational cost of the FD makes solving this one-dimensional optimisation problem tractable. This takes around $0.696$ seconds, which is negligible relative to the approximately $180$ seconds needed to run MCMC.

Before concluding, we note that \Cref{assump:fd} is actually violated as $s_{\pi} \notin L^2(\tilde{\Pi}_{\text{ref}})$. As a result, our previous analysis was in fact not based on a valid statistical divergence (see \Cref{appsubesc:sbi} for details). To satisfy \Cref{assump:fd}, one approach consists of reparametrising the model with $z$ instead of $\theta$. This reparametrisation is not only desirable from a theoretical point of view, but \Cref{fig:sbi_turin_gaussian_copula_normalized} also shows that it leads to near-convexity over our neighbourhood, and the worst-case is attained at a boundary point. 

\begin{figure}[t!]
    \centering
    \begin{subfigure}{0.3\textwidth}
        \centering
        \includegraphics[width=\linewidth]{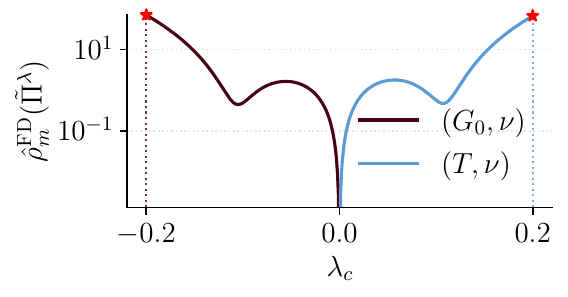}   
        \caption{}
        \label{fig:sbi-copula-full-axis}
    \end{subfigure}
    \begin{subfigure}{0.3\textwidth}
        \centering
        \includegraphics[width=\linewidth]{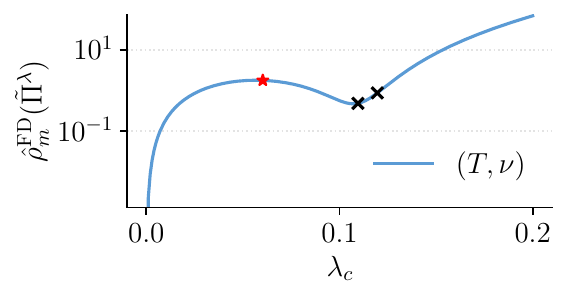}
        \caption{}
        \label{fig:sbi_turin_gaussian_copula_fd_grid_marked}
    \end{subfigure}
    \begin{subfigure}{0.3\textwidth}
    \centering
    \includegraphics[width=\linewidth]{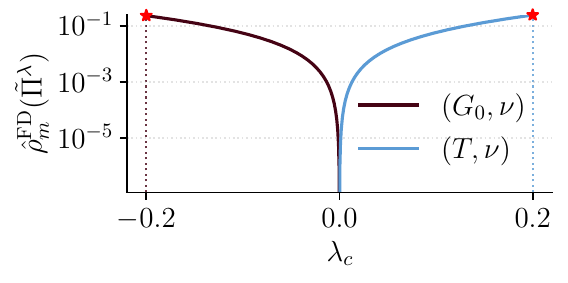}
    \caption{}
    \label{fig:sbi_turin_gaussian_copula_normalized}
    \end{subfigure}
    \caption{\emph{Prior sensitivity analysis for the Turin model with introduced prior dependence.} \textcolor{red}{$\star$} denotes the worst-case candidate identified via (a) FD-based optimisation, (b) informal sensitivity analysis and (c) FD-based optimisation with reparametrized $\theta$. $\times$ marks the evaluated candidates. The $y$-axis is on a log scale.}
    \label{fig:sbi-copula}
\end{figure}

\section{CONCLUSION}
This paper introduces a computationally tractable method for global Bayesian sensitivity analysis based on the FD. By requiring only a single set of posterior samples and operating through score functions, our method directly addresses the key computational limitations that have hindered the practical use of these types of methods. In many cases of practical relevance, we showed that the FD is a convex quadratic form, which allows the associated optimisation problems to be solved efficiently. In addition, because the FD considers discrepancies in the entire posterior, rather than merely summary statistics, it provides a strong measure of sensitivity which has high potential for challenging Bayesian inference problems. 

We believe there are many exciting extensions that could be considered. The method could be readily applied across a wide range of Bayesian inference problems, including for hierarchical models, as well as in multi-task \citep{guo2011sparse}, federated \citep{cao2023bayesian}, and transfer learning \citep{suder2025bayesian} settings. These settings are particularly relevant because model parameters are organised across levels (e.g., global and task-specific), and the FD objective can therefore be simplified under a per-level decomposition. This could enable sensitivity analysis of different levels of the model, allowing the practitioner to identify, for instance, whether posterior sensitivity is driven by shared hyperparameters or task-specific components, or whether heterogeneity across tasks amplifies sensitivity. 

Beyond this, several methodological extensions could be pursued in future work. A first direction would be to address settings involving multimodal posteriors with well-separated modes, a scenario for which the FD is known to perform poorly \citep{wenliang2020blindness, zhang2022towards}. One potential remedy would be to construct the divergence using a tempered version of the reference posterior. A second direction would be to extend our measure beyond hyperparameter sensitivity. For instance, \citet{berger1986robust, moreno1992bands, kurtek2015bayesian, ho2023global} study robustness under neighbourhoods defined by probability balls centred at a reference prior or likelihood; adapting our approach to these types of neighbourhoods could yield valuable insights. Finally, a third promising avenue would be to generalise the method beyond Euclidean parameter spaces to encompass posteriors defined over discrete domains, for which score functions are not naturally defined. In this case, the log-ratio matching divergence proposed by \citet{laplante2025conjugate} could offer a principled alternative and also leads to a convex quadratic form. Such an extension would be particularly relevant for Bayesian model choice and variable selection problems involving discrete indices, as well as for hidden Markov models with discrete latent states.

\subsubsection*{Acknowledgements}
The authors are thankful to Jeffrey Negrea and Chris Oates for helpful discussions, and to Ayush Bharti for support with the radio propagation example. AO was supported by a UCL EPSRC DTP Mathematical Science scholarship [EP/W524335/1], CD and FXB were supported by the EPSRC grant [EP/Y022300/1], and FXB was supported by [EP/Y011805/1].

\singlespacing
\bibliography{paper_ref}
\doublespacing
\newpage
\appendix

{
\begin{center}
\LARGE
    \textbf{Supplementary Materials}
\end{center}
}

The supplementary materials are as follows. In  \Cref{appsec:proofs}, we provide the proofs of all our theoretical results. In \Cref{appsec:div_derivation_quadratic_form}, we provide simplified assumptions and quadratic forms in various special cases. In \Cref{appsec:complementary_experiments}, we provide additional numerical experiments to complement the results in the main text.

\section{Proofs of Theoretical Results}\label{appsec:proofs}
\subsection{Proof of \Cref{prop:fd-div}} \label{appsubsec:fd-divergence}
\begin{proof}
We prove the result by verifying all conditions in Theorem 1 of \cite{zhang2022towards}: (a) $\tilde{\pi}_{\text{ref}}, \tilde{\pi} \in C^1(\Theta)$, (b) $\tilde{\pi}_{\text{ref}}$ and $ \tilde{\pi}$ have support on a common open connected subset of $\mathbb{R}^{d_\Theta}$ and (c) $s_{\tilde{\pi}_{\text{ref}}} - s_{\tilde{\pi}} \in L^2(\tilde{\Pi}_{\text{ref}})$.
\begin{itemize}
    \item \textbf{(a) Differentiability} Since $\tilde{\Pi} \in \mathcal{P}_{\FD}(\Theta)$, we have that $\tilde{\pi} \in C^1(\Theta)$. Moreover, by \Cref{assump:fd} (ii), $\pi_{\text{ref}}, L_{\text{ref}} \in C^1(\Theta)$. Since products of $C^1(\Theta)$ functions are also in $C^1(\Theta)$, it follows that $\tilde{\pi}_{\text{ref}} \in C^1(\Theta)$. 
\item \textbf{(b) Support} By definition of $\mathcal{P}_{\FD}(\Theta)$, $\operatorname{supp}(\tilde{\pi}) = \operatorname{supp}(\tilde{\pi}_{\text{ref}})$ and this is an open, connected subset of $\mathbb{R}^{d_\Theta}$ by \Cref{assump:fd} (i). 
\item \textbf{(c) Square integrability} By \Cref{assump:fd} (ii) we have that $s_{\pi_{\text{ref}}}, \nabla_\theta L_{\text{ref}}(\cdot; x_{1:n}) \in L^2(\tilde{\Pi}_{\text{ref}})$ and since $L^2(\tilde{\Pi}_{\text{ref}})$ is a vector space we obtain $
    s_{\tilde{\pi}_{\text{ref}}}(\theta)
= \nabla_\theta \log \tilde{\pi}_{\text{ref}}(\theta) 
= - \nabla_\theta L(\theta; x_{1:n}) + s_{\pi_{\text{ref}}}(\theta) \in L^2(\tilde{\Pi}_{\text{ref}})$.
Thus, since both $s_{\tilde{\pi}_{\text{ref}}}, s_{\tilde{\pi}} \in L^2(\tilde{\Pi}_{\text{ref}})$, we have that $s_{\tilde{\pi}_{\text{ref}}} - s_{\tilde{\pi}} \in L^2(\tilde{\Pi}_{\text{ref}})$.
\end{itemize}
\end{proof}

\subsection{Proof of \Cref{prop:mcmc-complexity}}\label{appsubsec:mcmc-complexity} 

The proof is based on Chebyshev's inequality and a bound on the variance of MCMC estimators with $V$-uniformly geometrically ergodic Markov chains due to \cite{durmus2024probability}. 
A stronger bound could be obtained by replacing Chebyshev's inequality with a bound on higher moments. However, this would require the existence of higher-order moments of the scores.

Before providing the proof, we present a preliminary lemma which guarantees convergence of second moments under MCMC sampling. To state this, we recall that
$L_{\sqrt V}
:=
\left\{
g:\Theta\to\mathbb{R} :
\|g\|_{\sqrt V}
:=
\sup_{\theta\in\Theta}
|g(\theta)|V(\theta)^{-1/2}
<\infty
\right\}$.

\begin{lemma}
\label{lem:vector-valued-mcmc}
Suppose \Cref{assumption:MCMC} holds. Let $r\in\mathbb{N}$ and let
$f=(f_1,\ldots,f_r)^\top:\Theta\to\mathbb{R}^r$
be measurable, with $f_j\in L_{\sqrt V}$ for every
$j\in\{1,\ldots,r\}$. Then there exists a constant
$ C(f;\xi)<\infty$, independent of $m$, such that $\mathbb{E}_{\xi}
[
\|
\frac{1}{m}\sum_{i=1}^m f(\theta_i)
-
\mathbb{E}_{\theta\sim\widetilde\Pi_{\mathrm{ref}}}
[f(\theta)]
\|_2^2
]
\leq
\frac{C(f;\xi)}{m}$.
\end{lemma}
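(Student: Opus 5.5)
The plan is to reduce the vector-valued statement to $r$ scalar statements and to invoke, for each coordinate, the variance bound for $V$-uniformly geometrically ergodic chains from \citet{durmus2024probability}. Write $\bar f_m := \frac{1}{m}\sum_{i=1}^m f(\theta_i)$ and $\mu_f := \mathbb{E}_{\theta\sim\trueposterior}[f(\theta)]$. Since the squared Euclidean norm is a sum of squared coordinates,
\begin{align*}
\mathbb{E}_{\xi}\big[\|\bar f_m - \mu_f\|_2^2\big] = \sum_{j=1}^r \mathbb{E}_{\xi}\Big[\Big(\frac{1}{m}\sum_{i=1}^m f_j(\theta_i) - \mathbb{E}_{\trueposterior}[f_j]\Big)^2\Big].
\end{align*}
It therefore suffices to show that each summand is at most $C_j(f_j;\xi)/m$, and then to set $C(f;\xi) := \sum_{j=1}^r C_j(f_j;\xi)$. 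This constant is finite and independent of $m$ because $r$ is fixed.

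For a fixed coordinate $j$, I first check that $\mathbb{E}_{\trueposterior}[f_j]$ is well defined. Under $V$-uniform geometric ergodicity (A1 and A2 in \citet{durmus2024probability}), the invariant law integrates $V$, and hence integrates $\sqrt V$. Since $|f_j|\le \|f_j\|_{\sqrt V}\sqrt V$, the function $f_j$ is integrable, and in fact $f_j^2\le \|f_j\|_{\sqrt V}^2 V$ is integrable too.

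Next I decompose the mean squared error into the squared bias plus the variance of the ergodic average under initial law $\xi$. For the bias, $V$-uniform ergodicity gives $|\mathbb{E}_\xi[f_j(\theta_i)]-\mathbb{E}_{\trueposterior}[f_j]|\le C\|f_j\|_{\sqrt V}\,\xi(\sqrt V)\,\kappa^i$ for some $\kappa\in(0,1)$. This uses that $\sqrt V$ is also a Lyapunov function (by Jensen), so ergodicity holds in $\sqrt V$-norm. Summing the geometric series shows the bias is $O(1/m)$, so its square is $O(1/m^2)$.

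For the variance, I expand it into $\frac{1}{m^2}\sum_{i,k}\mathrm{Cov}_\xi(f_j(\theta_i),f_j(\theta_k))$. Each covariance is controlled by $\|f_j\|_{\sqrt V}^2\,\mathbb{E}_\xi[V(\theta_{\min(i,k)})]\,\kappa^{|i-k|}$. The drift condition bounds $\mathbb{E}_\xi[V(\theta_i)]$ by $\lambda^i\xi(V)+b/(1-\lambda)$, which is uniformly bounded because $\mathbb{E}_\xi[V]<\infty$. Summing over $|i-k|$ then gives $O(1/m)$. Alternatively, I would cite directly the variance or Rosenthal-type bound for additive functionals in \citet{durmus2024probability}, which states exactly $\mathbb{E}_\xi[(\sum_i \{g(\theta_i)-\pi(g)\})^2]\le C\|g\|_{\sqrt V}^2\, m\,\xi(V)$ for $g\in L_{\sqrt V}$.

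I expect the main obstacle to be the covariance bound: pinning down precisely the form of that cited inequality, and in particular that the constant depends on $\xi$ only through $\xi(V)$. This is where the $\sqrt V$ weighting matters, since products $f_j(\theta_i)f_j(\theta_k)$ are then controlled by $V$, and $V$ is integrable under the chain thanks to the drift condition. The remaining steps are bookkeeping.
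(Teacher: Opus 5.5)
Your proof is correct, but your main route differs from the paper's; the paper's argument is exactly your fallback. The paper centres each coordinate, setting $f_j^{\mathrm c}:=f_j-\mathbb{E}_{\widetilde\Pi_{\mathrm{ref}}}[f_j]$, and checks that $f_j^{\mathrm c}$ remains in $L_{\sqrt V}$ using $V\ge e$ and $\mathbb{E}_{\widetilde\Pi_{\mathrm{ref}}}[V]<\infty$. It then applies Theorem~2 (with $q=1$) and Lemma~23 of \citet{durmus2024probability} as a black box. This gives $\mathbb{E}_\xi[(\sum_{i=1}^m f_j^{\mathrm c}(\theta_i))^2]\le C_V\,(m+\mathbb{E}_{\xi}[V]+\mathbb{E}_{\widetilde\Pi_{\mathrm{ref}}}[V])\,\|f_j^{\mathrm c}\|_{\sqrt V}^2$ directly for the non-stationary chain, so no separate bias term appears.

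Your bias--variance decomposition with explicit covariance summation is a self-contained derivation of that inequality. It shows concretely why $L_{\sqrt V}$ is the right class: the product $f_j(\theta_i)f_j(\theta_k)$ is dominated by $V(\theta_{\min(i,k)})$, whose expectation the drift keeps bounded. It also needs no explicit centring, since covariances are shift-invariant.

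Its one non-trivial input is $\sqrt V$-uniform geometric ergodicity, and this is not mere bookkeeping. With only $V$-uniform ergodicity you get $|P^n f_j(x)-\mathbb{E}_{\widetilde\Pi_{\mathrm{ref}}}[f_j]|\lesssim V(x)\kappa^n$ rather than $\sqrt{V(x)}\,\kappa^n$, where $P^n$ is the $n$-step kernel. Your covariance bound would then need $\mathbb{E}_\xi[V(\theta_i)^{3/2}]$, which the assumptions do not control. Jensen gives the drift inequality for $\sqrt V$, but you also need a minorisation on a suitable sublevel set of $\sqrt V$. This holds because, under A1--A2, sublevel sets of $V$ are small for some iterate of the kernel (standard Meyn--Tweedie theory), and you should state it as an input rather than as a consequence of Jensen alone.

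In short, the paper's citation buys brevity and an explicit constant depending on $\xi$ only through $\mathbb{E}_\xi[V]$. Your route buys an elementary argument whose only external ingredient is this ergodicity transfer.
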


\begin{proof}
Define the centred functions $f_j^{\mathrm{c}}(\theta)
:=
f_j(\theta)
-
\mathbb{E}_{\theta\sim\widetilde\Pi_{\mathrm{ref}}}
[f_j(\theta)]$ where $\mathbb{E}_{\theta\sim\widetilde\Pi_{\mathrm{ref}}}
[f_j^{\mathrm{c}}(\theta)]
=0$ for $j\in\{1,\ldots,r\}$.
We first verify that $f_j^{\mathrm{c}}\in L_{\sqrt V}$. 
Since $V(\theta)\geq e$ under \Cref{assumption:MCMC},
\begin{equation}
\begin{aligned}
\|f_j^{\mathrm{c}}\|_{\sqrt V}
&=
\sup_{\theta\in\Theta}
\frac{\left|
f_j(\theta)
-
\mathbb{E}_{\vartheta\sim\widetilde\Pi_{\mathrm{ref}}}
[f_j(\vartheta)]
\right|}{\sqrt{V(\theta)}}\leq
\|f_j\|_{\sqrt V}
+
e^{-1/2}
\left|
\mathbb{E}_{\vartheta\sim\widetilde\Pi_{\mathrm{ref}}}
[f_j(\vartheta)]
\right|.
\end{aligned}
\end{equation}
Furthermore,
\begin{equation}
\begin{aligned}
\left|
\mathbb{E}_{\vartheta\sim\widetilde\Pi_{\mathrm{ref}}}
[f_j(\vartheta)]
\right| \leq
\mathbb{E}_{\vartheta\sim\widetilde\Pi_{\mathrm{ref}}}
[|f_j(\vartheta)|]
\leq
\|f_j\|_{\sqrt V}
\mathbb{E}_{\vartheta\sim\widetilde\Pi_{\mathrm{ref}}}
[\sqrt{V(\vartheta)}]
\
&\leq
\|f_j\|_{\sqrt V}
\sqrt{\mathbb{E}_{\vartheta\sim\widetilde\Pi_{\mathrm{ref}}}
[V(\vartheta)]}
<\infty.
\end{aligned}
\end{equation}
The last inequality follows because the drift condition in
\Cref{assumption:MCMC} implies $\mathbb{E}_{\theta\sim\widetilde\Pi_{\mathrm{ref}}}
[V(\theta)]<\infty$.
Hence $f_j^{\mathrm c}\in L_{\sqrt V}$. For $m\in\mathbb{N}$, define
$S_{m,j}:=\sum_{i=1}^m f_j^{\mathrm c}(\theta_i)$.
By Theorem~2 of \citet{durmus2024probability}, specialised to $q=1$, and
Lemma~23 of \citet{durmus2024probability}, there exists a constant
$0<C_V<\infty$, depending only on the Markov chain and the Lyapunov
function $V$, such that
\begin{equation}
\mathbb{E}_{\xi}[S_{m,j}^2]
\leq
C_V
\left[
m
+
\mathbb{E}_{\theta\sim\xi}[V(\theta)]
+
\mathbb{E}_{\theta\sim\widetilde\Pi_{\mathrm{ref}}}[V(\theta)]
\right]
\|f_j^{\mathrm c}\|_{\sqrt V}^2.
\end{equation}
Dividing by $m^2$ gives
\begin{equation}
\begin{aligned}
\mathbb{E}_{\xi}
\left[
\left(
\frac{1}{m}S_{m,j}
\right)^2
\right]
&\leq
C_V
\left[
\frac{1}{m}
+
\frac{
\mathbb{E}_{\theta\sim\xi}[V(\theta)]
+
\mathbb{E}_{\theta\sim\widetilde\Pi_{\mathrm{ref}}}[V(\theta)]
}{m^2}
\right]
\|f_j^{\mathrm c}\|_{\sqrt V}^2
\\
&\leq
\frac{C_V}{m}
\left[
1
+
\mathbb{E}_{\theta\sim\xi}[V(\theta)]
+
\mathbb{E}_{\theta\sim\widetilde\Pi_{\mathrm{ref}}}[V(\theta)]
\right]
\|f_j^{\mathrm c}\|_{\sqrt V}^2,
\end{aligned}
\end{equation}
where the final inequality uses $m^{-2}\leq m^{-1}$ for
$m\geq1$. Therefore,
\begin{equation}
\mathbb{E}_{\xi}
\left[
\left(
\frac{1}{m}\sum_{i=1}^m f_j(\theta_i)
-
\mathbb{E}_{\theta\sim\widetilde\Pi_{\mathrm{ref}}}
[f_j(\theta)]
\right)^2
\right]
\leq
\frac{C(f_j;\xi)}{m},
\end{equation}
where $C(f_j;\xi)
:=
C_V
(
1
+
\mathbb{E}_{\theta\sim\xi}[V(\theta)]
+
\mathbb{E}_{\theta\sim\widetilde\Pi_{\mathrm{ref}}}[V(\theta)]
)
\|f_j^{\mathrm c}\|_{\sqrt V}^2$.
This constant is finite by \Cref{assumption:MCMC}, is independent of $m$, and
depends on $\xi$ only through
$\mathbb{E}_{\theta\sim\xi}[V(\theta)]$.
Finally,
\begin{equation}
\begin{aligned}
&\mathbb{E}_{\xi}
\left[
\left\|
\frac{1}{m}\sum_{i=1}^m f(\theta_i)
-
\mathbb{E}_{\theta\sim\widetilde\Pi_{\mathrm{ref}}}
[f(\theta)]
\right\|_2^2
\right]
\\
&\quad=
\sum_{j=1}^r
\mathbb{E}_{\xi}
\left[
\left(
\frac{1}{m}\sum_{i=1}^m f_j(\theta_i)
-
\mathbb{E}_{\theta\sim\widetilde\Pi_{\mathrm{ref}}}
[f_j(\theta)]
\right)^2
\right] \leq
\frac{1}{m}\sum_{j=1}^r C(f_j;\xi).
\end{aligned}
\end{equation}
The result follows upon setting
$
C(f;\xi):=\sum_{j=1}^r C(f_j;\xi)<\infty$.
\end{proof}
We are now ready to move on to the proof of \Cref{prop:mcmc-complexity}.
\begin{proof}
Define $f_\lambda(\theta)
:=
\left\|
s_{\widetilde\pi_{\mathrm{ref}}}(\theta)
-
s_{\widetilde\pi^\lambda}(\theta)
\right\|_2^2$. The growth condition is precisely the requirement that
$f_\lambda\in L_{\sqrt V}$. Applying
Lemma~\ref{lem:vector-valued-mcmc} with $r=1$ therefore gives a
constant $C(\widetilde\Pi^\lambda;\xi)<\infty$, independent of $m$,
such that
\begin{equation}
\mathbb{E}_{\xi}
\left[
\left|
\widehat{\operatorname{FD}}_m
\bigl(
\widetilde\Pi_{\mathrm{ref}}
\Vert
\widetilde\Pi^\lambda
\bigr)
-
\operatorname{FD}
\bigl(
\widetilde\Pi_{\mathrm{ref}}
\Vert
\widetilde\Pi^\lambda
\bigr)
\right|^2
\right] = \mathbb{E}_{\xi}
\left[
\left|
\frac{1}{m}\sum_{i=1}^m f_\lambda(\theta_i)
-
\mathbb{E}_{\theta\sim\widetilde\Pi_{\mathrm{ref}}}
[f_\lambda(\theta)]
\right|^2
\right]
\leq
\frac{C(\widetilde\Pi^\lambda;\xi)}{m},
\end{equation}
where $\frac{1}{m}\sum_{i=1}^m f_\lambda(\theta_i)
=
\widehat{\operatorname{FD}}_m
(
\widetilde\Pi_{\mathrm{ref}}
\Vert
\widetilde\Pi^\lambda
)$
and
$
\mathbb{E}_{\theta\sim\widetilde\Pi_{\mathrm{ref}}}
[f_\lambda(\theta)]
=
\operatorname{FD}
(
\widetilde\Pi_{\mathrm{ref}}
\Vert
\widetilde\Pi^\lambda
)$.
Hence, by Chebyshev's inequality,
for every $\tau>0$,
\begin{equation}
\begin{aligned}
&\mathbb{P}_{\xi}
\left(
\left|
\widehat{\operatorname{FD}}_m
\bigl(
\widetilde\Pi_{\mathrm{ref}}
\Vert
\widetilde\Pi^\lambda
\bigr)
-
\operatorname{FD}
\bigl(
\widetilde\Pi_{\mathrm{ref}}
\Vert
\widetilde\Pi^\lambda
\bigr)
\right|
\geq \tau
\right)
\\
&\quad\leq
\frac{1}{\tau^2}
\mathbb{E}_{\xi}
\left[
\left|
\widehat{\operatorname{FD}}_m
\bigl(
\widetilde\Pi_{\mathrm{ref}}
\Vert
\widetilde\Pi^\lambda
\bigr)
-
\operatorname{FD}
\bigl(
\widetilde\Pi_{\mathrm{ref}}
\Vert
\widetilde\Pi^\lambda
\bigr)
\right|^2
\right] \leq
\frac{C(\widetilde\Pi^\lambda;\xi)}
     {m\tau^2}.
\end{aligned}
\end{equation}
\end{proof}

\subsection{Proof of \Cref{prop:FD_convex_quadratic}}\label{app:proof_quadratic_form}
\begin{proof}
First, recall that score of $\candidateposterior$ is $s_{\tilde{\pi}^\lambda}(\theta)
=
- \lambda_L^\top \nabla_\theta l(\theta;x_{1:n},\lambda) + s_{\pi(\cdot|\lambda)}(\theta)$. In particular, defining $
J(\theta)
:=
[\,
-\nabla_\theta l(\theta_i;x_{1:n}),
\;
\nabla_\theta T(\theta)^\top
]$ and $\lambda
:= [
\lambda_L, 
\lambda_\pi]^\top$,
we can show that this score is linear in $\lambda$: $s_{\tilde{\pi}^\lambda}(\theta)
=
J(\theta)\,\lambda
+
s_g(\theta)$. Hence, the score difference between reference and candidate is also linear in $\lambda$, 
and:
\begin{align}
\hat{\rho}^{\FD}_m(\candidateposterior) = \estimatedFDposteriors
=
\frac{1}{m}
\sum_{i=1}^m
\Big\|
\big(
s_{\tilde{\pi}_{\mathrm{ref}}}(\theta_i)
-
s_g(\theta_i)
\big)
-
J(\theta_i)\lambda
\Big\|_2^2.
\label{appeq:fd_ref_only}
\end{align}
Expanding the square yields a quadratic form $
\hat{\rho}^{\FD}_m(\candidateposterior)
=
\lambda^\top A \lambda
+
b^\top \lambda
+
c$,
with $A$, $b$ and $c$ as defined in the statement of the result.

It now remains to check that this quadratic form is convex. For any vector $v \in \mathbb{R}^{d_\Lambda}$, 
\begin{align}
    v^\top A v = \frac{1}{m} \sum_{i=1}^m v^\top J(\theta_i)^\top J(\theta_i) v = \frac{1}{m} \sum_{i=1}^m \| J(\theta_i) v \|_2^2 \geq 0,
\end{align}
hence $A$ is positive semi-definite and quadratic form is convex, which completes the proof.
\end{proof}

\subsection{Proof of \Cref{prop:mcmc-complexity-uniform}}\label{appsubsec:mcmc-complexity-uniform} 

\begin{proof}
The proof proceeds in two steps. First, we show that the error in the estimated sensitivity is controlled by the estimation error of the quadratic coefficients. We then apply \Cref{lem:vector-valued-mcmc} to obtain $O(m^{-1})$ bounds for the second moments of the estimation error of these coefficients.

We start with step 1. Under \Cref{assump:minimal-exp-fam}, \Cref{prop:FD_convex_quadratic} implies that $
\widehat{\rho}^{\mathrm{FD}}_m(\widetilde\Pi^\lambda)
=
\lambda^\top A\lambda+b^\top\lambda+c$ and similarly $
\rho^{\mathrm{FD}}(\widetilde\Pi^\lambda)
=
\lambda^\top A^\star\lambda+b^{\star\top}\lambda+c^\star$,
where $A^\star=\mathbb{E}_{\theta \sim \trueposterior}[J(\theta)^\top J(\theta)]$, $b^\star =- 2 \mathbb{E}_{\theta \sim \trueposterior}[
        J(\theta)^\top
        (
        s_{\tilde{\pi}_{\mathrm{ref}}}(\theta)
        -
        s_g(\theta)
        )]$ and $c^\star = \mathbb{E}_{\theta \sim \trueposterior}[\|
s_{\tilde{\pi}_{\mathrm{ref}}}(\theta)
-
s_g(\theta)
\|_2^2]$. Define the coefficient estimatio errors
$
A_{\mathrm{diff}}
:=
A-A^\star
$
and
$
b_{\mathrm{diff}}
:=
b-b^\star.
$
Since adding a constant does not change the range of a function,
$c$ and $c^\star$ play no role in the sensitivity. Using the elementary inequality
$|\sup f-\sup g| \le \sup|f-g|$,
together with the analogous inequality for infima, we get
\begin{equation}
\left|
\widehat S^{\mathrm{FD}}_m(\Gamma)
-
S^{\mathrm{FD}}(\Gamma)
\right|
\leq
2
\sup_{\lambda\in\Gamma}
\left|
\lambda^\top A_{\mathrm{diff}}\lambda
+
b_{\mathrm{diff}}^\top\lambda
\right|.
\end{equation}
Since $\Gamma$ is compact, 
$
\exists R_\Gamma
=
\sup_{\lambda\in\Gamma}\|\lambda\|_2<\infty
$,
and therefore using $
\|A_{\mathrm{diff}}\|_{\mathrm{op}}
\leq
\|A_{\mathrm{diff}}\|_F
$ we get
\begin{equation}
\left|
\widehat S^{\mathrm{FD}}_m(\Gamma)
-
S^{\mathrm{FD}}(\Gamma)
\right|
\leq
2(R_\Gamma^2
\|A_{\mathrm{diff}}\|_{\mathrm{op}}
+
R_\Gamma
\|b_{\mathrm{diff}}\|_2 ) \leq
2(R_\Gamma^2
\|A_{\mathrm{diff}}\|_{F}
+
R_\Gamma
\|b_{\mathrm{diff}}\|_2).
\label{eq:sensitivity-bound}
\end{equation}
Applying Minkowski's inequality then gives
\begin{equation}
\left(\mathbb E_\xi
\left[
\left|
\widehat S^{\mathrm{FD}}_m(\Gamma)
-
S^{\mathrm{FD}}(\Gamma)
\right|^2
\right]\right)^{\frac{1}{2}}
 \leq 2 \left(R_{\Gamma}^2 \mathbb{E}_{\xi}\left[\|A_{\text{diff}}\|_{F}^2\right]^{\frac{1}{2}} + R_{\Gamma}\mathbb{E}_{\xi}\left[\| b_\text{diff}\|_2^2\right]^{\frac{1}{2}}\right).
\label{eq:sensitivity-mse}
\end{equation}
Hence it remains only to prove that $A_{\rm diff}$ and $b_{\rm diff}$ have $O(m^{-1})$ second moments.

First, define
$
f_A(\theta)
=
\operatorname{vec}(J(\theta)^\top J(\theta))
$, and we note that $\operatorname{vec}(A_{\rm diff}) =
\frac{1}{m}
\sum_{i=1}^m
f_A(\theta_i) - \mathbb{E}_{\xi}[f_A(\theta)]$.
The growth assumption in \Cref{prop:mcmc-complexity-uniform} implies
$
\|J(\theta)\|_F
\lesssim
V(\theta)^{1/4},
$
so
$
\|J(\theta)^\top J(\theta)\|_F \leq \|J(\theta)\|_F^2
\lesssim
V(\theta)^{1/2}.
$
Hence every coordinate of
$f_A$
belongs to
$L_{\sqrt V}$.
Applying \Cref{lem:vector-valued-mcmc} gives a constant
$C_A(\xi)<\infty$,
independent of $m$, such that
$
\mathbb E_\xi
\!\left[
\|A_{\mathrm{diff}}\|_F^2
\right]
=
\mathbb E_\xi
\!\left[
\|
\operatorname{vec}(A_{\mathrm{diff}})
\|_2^2
\right]
\leq
\frac{C_A(\xi)}{m}$.

Similarly, define
$
f_b(\theta)
=
-2
J(\theta)^\top
(s_{\widetilde\pi_{\mathrm{ref}}}(\theta)-s_g(\theta)).
$
Again the growth assumption implies
$
\|f_b(\theta)\|_2 \leq 2 \|J(\theta)\|_F ( \|s_{\widetilde\pi_{\mathrm{ref}}}(\theta)\|_2+\|s_g(\theta)\|_2)
\lesssim
V(\theta)^{1/2},
$
so every coordinate belongs to
$L_{\sqrt V}$.
A second application of \Cref{lem:vector-valued-mcmc} therefore yields a constant
$C_b(\xi)<\infty$,
independent of $m$, satisfying
$
\mathbb E_\xi[\|b_{\mathrm{diff}}\|_2^2]
\leq
\frac{C_b(\xi)}{m}.
$

Combining these bounds gives
$
\mathbb E_\xi
[
|
\widehat S^{\mathrm{FD}}_m(\Gamma)
-
S^{\mathrm{FD}}(\Gamma)
|^2
] \leq
\frac{C_{\Gamma,\xi}^2}{m}$ 
for some constant
$C_{\Gamma,\xi} = 2(R^2_{\Gamma} \sqrt{C_A(\xi)}+R_{\Gamma}\sqrt{C_b(\xi)})<\infty$.
The result now follows from Chebyshev's inequality, which gives $ \mathbb{P}_{\xi}(|\widehat S^{\mathrm{FD}}_m(\Gamma)-S^{\mathrm{FD}}(\Gamma)|>\frac{C_{\Gamma,\xi}}{\sqrt{m\delta}})\leq \delta$, and is equivalent to the two-sided bound in the statement of the theorem.
\end{proof}

\subsection{Proof of \Cref{prop:dominate_otherdivergences}}\label{app:proof_dominate_divergences}

\begin{proof}
   Fix any $\lambda\in\Gamma$. We will first consider the Wasserstein distance with $p=2$, which for 
$\trueposterior$ and $\candidateposterior$ is given by
$W_2(\trueposterior,\candidateposterior)
    :=
    \inf_{\mu \in \mathcal{H}(\trueposterior,\candidateposterior)}
    \left(
        \mathbb{E}_{(\theta,\theta')\sim \mu}\big[\|\theta-\theta'\|_2^2\big]
    \right)^{1/2}$, where $\mathcal{H}(\trueposterior,\candidateposterior)$ 
denotes the set of all couplings between $\trueposterior$ and $\candidateposterior$.
By Theorem 5.3 of \citet{huggins2018practical}, 
under part 1 of \Cref{assump:candidate-prior-regularity} there exists a constant
$\alpha_\lambda=\alpha(K_\lambda,R_\lambda,\tilde{\pi}^\lambda)>0$ such that
\begin{align} \label{eq:wass-fd-bound}
W_2(\trueposterior,\candidateposterior)
    \le 
    \frac{1}{\alpha_\lambda}\,\sqrt{\FD(\trueposterior\|\candidateposterior)}.
\end{align}
To complete the proof, we need to check that this bounds hold uniformly over the neighbourhood $\Gamma$.
Before taking supremums, we need to verify that the constant $\alpha_\lambda$ is uniformly bounded away from zero. By Remark 3.3 and proof of Proposition 3.4 of \cite{bolley2012convergence}, $\alpha_\lambda$ (C in their notation) depends on $\tilde{\pi}^\lambda$ only via $U^\lambda := \sup_{\|\theta\|\leq 3R_\lambda} -\log \tilde{\pi}^\lambda(\theta)$ and $L^\lambda := \inf_{\|\theta\|\leq 3R_\lambda} -\log \tilde{\pi}^\lambda(\theta)$ through coefficients of the form $\exp\left( \pm U^\lambda\right)$ and $\exp\left( -L^\lambda\right)$.
Therefore, as long as these are uniformly bounded above and below in $\Gamma$ we can ensure $\alpha_\lambda$ is uniformly bounded away from zero.
Recall that by part 2 of \Cref{assump:candidate-prior-regularity} we have that 
\begin{align}
0 < \tilde{\pi}^{R_\Gamma}_{\text{min}} := \inf_{\lambda \in \Gamma} \inf_{\|\theta\|_2\leq 3R_\lambda} \tilde{\pi}^\lambda(\theta) \leq \tilde{\pi}_{\text{max}}^{R_\Gamma} := \sup_{\lambda \in \Gamma} \sup_{\|\theta\|_2 \leq 3R_\lambda} \tilde{\pi}^\lambda(\theta) < \infty. 
\end{align}
Therefore, for every $\lambda \in \Gamma$ and $\| \theta\|_2 \leq 3R_\Gamma, -\log \tilde{\pi}^{R_\Gamma}_{\text{max}} \leq - \log \tilde{\pi}^\lambda(\theta)\leq - \log \tilde{\pi}^{R_\Gamma}_{\text{min}}$ and in particular for all $\lambda \in \Gamma$, $- \log \tilde{\pi}^{R_\Gamma}_{\text{max}} \leq L^\lambda \leq U^\lambda \leq - \log \tilde{\pi}^{R_\Gamma}_{\text{min}}$. Therefore, 
\begin{align}
    0 &< \frac{1}{\tilde{\pi}^{R_\Gamma}_{\text{max}}} \leq \inf_{\lambda \in \Gamma} \exp(L^\lambda) \leq \sup_{\lambda \in \Gamma}\exp(U^\lambda) \leq \frac{1}{\tilde{\pi}^{R_\Gamma}_{\text{min}}}  < \infty, \\
0 &< \tilde{\pi}^{R_\Gamma}_{\text{min}} \leq \inf_{\lambda} \exp(- U^\lambda)  \leq \sup_{\lambda \in \Gamma} \exp(-L^\lambda) \leq \tilde{\pi}^{R_\Gamma}_{\text{max}} < \infty.
\end{align}
Therefore, $\alpha_{\lambda}$ is uniformly 
bounded away from zero over $\Gamma$, i.e. $\alpha_{\Gamma} := \inf_{\lambda \in \Gamma} \alpha_\lambda > 0$. 
Taking supremum over $\lambda \in \Gamma$ in \Cref{eq:wass-fd-bound}, we obtain $
   S^{\text{W}_2}(\Gamma) \leq \frac{1}{\alpha_{\Gamma}} \sqrt{S^{\text{FD}}(\Gamma)}$.
The result for $p=1$ follows directly from the fact that $W_1(\Pi_1, \Pi_2) \leq W_2(\Pi_1, \Pi_2)$ for all $\Pi_1, \Pi_2 \in \mathcal{P}(\Theta)$ with finite second moments. As a result, we have $
    S^{\text{W}_1}(\Gamma) \leq S^{\text{W}_2}(\Gamma) \leq \frac{1}{\alpha_{\Gamma}} \sqrt{S^{\text{FD}}(\Gamma)}$.
    
We now move on to the total variation result. By Proposition 5.10 of \cite{huggins2018practical} and \Cref{eq:wass-fd-bound}, it follows that for all $\lambda \in \Gamma$,
    $\text{TV}(\trueposterior,\candidateposterior) \le \frac{1}{\sqrt{2\alpha_{\lambda}}} \sqrt{\FD(\trueposterior\|\candidateposterior)}$
and hence,  $
    S^{\text{TV}}(\Gamma) \leq \frac{1}{\sqrt{2\alpha_{\Gamma}}}  \sqrt{S^{\text{FD}}(\Gamma)}$.
    
Finally, we consider the KL sensitivity measure. It follows from \cite{koehler2022statistical} Proposition 1 and Theorem 1 that if $\tilde{\pi}^\lambda$ is $\alpha^\prime_\lambda$-strongly log-concave for all $\lambda \in \Gamma$ then it satisfies the log-Sobolev inequality with constant $C_{\text{LS}} \leq 1/(2 \alpha^\prime_\lambda)$ and $
    \text{KL}(\trueposterior \| \candidateposterior) \leq \frac{1}{2 \alpha^\prime_{\lambda}} \FD(\tilde{\Pi}_{\text{ref}}\| \tilde{\Pi}^\lambda)$. 
Taking supremum over $\lambda \in \Gamma$ and using the assumption that $\alpha^\prime_{\Gamma} := \inf_{\lambda \in \Gamma} K^\prime_{\lambda} > 0,$ we obtain 
$S^{\text{KL}}(\Gamma) \leq \frac{1}{2 \alpha^\prime_{\Gamma}} S^{\text{FD}}(\Gamma)$.
\end{proof}

\subsection{Proof of \Cref{prop:moment-bounds}}\label{appsubsec:interpretability-through-moment-control}
\begin{proof}
From Theorem 4.1 of \citet{huggins2018practical}, it follows that for any $\varepsilon \ge 0$ , if $W_2(\trueposterior,\candidateposterior)\le \varepsilon$, then 
$\|\mu_{\trueposterior}-\mu_{\candidateposterior}\|_2
    \le \varepsilon$ and $
    \|\Sigma_{\trueposterior}-\Sigma_{\candidateposterior}\|_2
    \le
3\,\min\!\big(\|\Sigma_{\trueposterior}\|_2^{1/2},\,\|\Sigma_{\candidateposterior}\|_2^{1/2}\big)\,\varepsilon
    + 5.25\,\varepsilon^2$, 
where $\|\cdot\|_2$ is the operator norm. 
Selecting $\varepsilon := \frac{1}{\alpha_{\lambda}}\,
\FD(\trueposterior\|\candidateposterior)^{1/2}$, it follows by \Cref{eq:wass-fd-bound} that 
\begin{align}
    \|\mu_{\trueposterior}-\mu_{\candidateposterior}\|_2
    &\le \frac{1}{\alpha_{\lambda}}\,
    \FD(\trueposterior\|\candidateposterior)^{1/2},\\
    \|\Sigma_{\trueposterior}-\Sigma_{\candidateposterior}\|_2
    &\le
    \frac{3}{\alpha_{\lambda}}\,
    \min\!\big(\|\Sigma_{\trueposterior}\|_2^{1/2},\,\|\Sigma_{\candidateposterior}\|_2^{1/2}\big)\
    \FD(\trueposterior\|\candidateposterior)^{1/2}
    + \frac{5.25}{\alpha_{\lambda}^2}\,
\FD(\trueposterior\|\candidateposterior).
\end{align}

Taking the supremum over $\lambda\in\Gamma$ and using \Cref{prop:dominate_otherdivergences},
we obtain $
S^{\mathrm{mean}}(\Gamma)
\le
\frac{1}{\alpha_{\Gamma}}
\sqrt{S^{\FD}(\Gamma)}$.
Similarly, we have $
S^{\mathrm{Cov}}(\Gamma)
\le
\frac{3}{\alpha_{\Gamma}}\,
\sup_{\lambda\in\Gamma}
\min(\|\Sigma_{\trueposterior}\|_2^{1/2},\,\|\Sigma_{\candidateposterior}\|_2^{1/2})\,
\sqrt{S^{\FD}(\Gamma)}
+\frac{5.25}{\alpha_{\Gamma}^2}\,
S^{\FD}(\Gamma)$.
Finally, since for any $\lambda \in \Gamma$ we have $\min(\|\Sigma_{\trueposterior}\|_2^{1/2},\,\|\Sigma_{\candidateposterior}\|_2^{1/2}) \le \sup_{\lambda\in\Gamma}
\min(\|\Sigma_{\trueposterior}\|_2^{1/2},\,\|\Sigma_{\candidateposterior}\|_2^{1/2})$ and $x \mapsto \min(a,x)$ is monotone increasing, taking supremum over $\lambda \in \Gamma$ yields: 
\begin{align}
    \sup_{\lambda \in \Gamma} \min\big(\|\Sigma_{\trueposterior}\|_2^{1/2},\,\|\Sigma_{\candidateposterior}\|_2^{1/2}\big) \leq \min\left(
\sqrt{\|\Sigma_{\trueposterior}\|_2},
\ \sup_{\lambda\in\Gamma}\sqrt{\|\Sigma_{\candidateposterior}\|_2}
\right).
\end{align}
Therefore,
\begin{align}
S^{\mathrm{Cov}}(\Gamma)
\le
\frac{3}{\alpha_{\Gamma}}\,
\min\left(
\sqrt{\|\Sigma_{\trueposterior}\|_2},
\ \sup_{\lambda\in\Gamma}\sqrt{\|\Sigma_{\candidateposterior}\|_2}
\right)
\sqrt{S^{\FD}(\Gamma)}
+\frac{5.25}{\alpha_{\Gamma}^2}\,
S^{\FD}(\Gamma).
\end{align}
\end{proof}

\section{Special Cases: Sensitivity to either Prior or Loss}\label{appsec:div_derivation_quadratic_form}

We briefly discuss how our method simplifies if we only consider one of prior sensitivity or loss sensitivity, but not joint sensitivity. In each case, we discuss how \Cref{assump:fd} can be simplified, and derive simplified quadratic form representations for the FD.

\paragraph{Prior sensitivity}
When measuring only prior sensitivity, we have already noted that $S^{\FD}(\Gamma)$ simplifies. Here, the scores of the loss function have been cancelled out, and we only look at differences in the prior scores. For this reason, \Cref{assump:fd} can be simplified to requiring only that  $\pi,\pi_{\text{ref}} \in C^1(\Theta)$ and $s_{\pi},s_{\pi_{\text{ref}}} \in L^2(\tilde{\Pi}_{\text{ref}})$, but we do not require $L, L_{\text{ref}}  \in C^1(\Theta)$ and $\nabla L,\nabla L_{\text{ref}} \in L^2(\tilde{\Pi}_{\text{ref}})$ in this setting. 

The expression for the quadratic form also simplifies. When
$\lambda_L = \lambda_{L_{\mathrm{ref}}}$, the estimated FD becomes a function of the prior hyperparameter
$\lambda_\pi$ only. Thus, $\natparamdim=\natparamdimprior$. Substituting $\lambda_{L_{\mathrm{ref}}}$ into the quadratic form yields $\estimatedFDposteriors
=
\lambda_\pi^\top A_{\pi} \lambda_\pi
+
b_{\pi}^\top \lambda_\pi
+
c_{\pi}$
where the coefficients are given by
$A_{\pi}
:=
\frac{1}{m}
\sum_{i=1}^m
\nabla_\theta T(\theta_i)^\top
\nabla_\theta T(\theta_i)$,
$b_{\pi}
:=
-\frac{2}{m}
\sum_{i=1}^m
\nabla_\theta T(\theta_i)^\top
(
s_{\pi_{\mathrm{ref}}}(\theta_i)
-
s_g(\theta_i)
)$, and
$c_{\pi}
:=
\frac{1}{m}
\sum_{i=1}^m
\|
s_{\pi_{\mathrm{ref}}}(\theta_i)
-
s_g(\theta_i)
\|_2^2$.

\paragraph{Loss sensitivity}

Similarly, $S^{\FD}(\Gamma)$ simplifies when we only consider learning-rate sensitivity; see decomposition 1 in \Cref{subsubsec:interpretability-through-factorisation}. Here, the scores of the priors have been cancelled out, and we only look at differences in the scores of the losses. For this reason, \Cref{assump:fd} can be simplified to requiring only that   $L, L_{\text{ref}}  \in C^1(\Theta)$ and $\nabla L,\nabla L_{\text{ref}} \in L^2(\tilde{\Pi}_{\text{ref}})$ but we do not require $\pi(\cdot|\lambda),\pi_{\text{ref}} \in C^1(\Theta)$ and $s_{\pi(\cdot|\lambda)},s_{\pi_{\text{ref}}} \in L^2(\tilde{\Pi}_{\text{ref}})$ in this setting.

The expression for the quadratic form in \Cref{prop:FD_convex_quadratic} also simplifies. If 
$\lambda_\pi = \lambda_{\pi_{\mathrm{ref}}}$,
the estimated FD reduces to a univariate quadratic form in the learning
rate $\lambda_L$, given by $\estimatedFDposteriors
=
\lambda_L^\top A_{L} \lambda_L
+
b_{L}^\top \lambda_L
+
c_{L}$. The coefficients are
$A_L
:=
\frac{1}{m}
\sum_{i=1}^m
\nabla_\theta l(\theta_i;x_{1:n})
\,
\nabla_\theta l(\theta_i;x_{1:n})^\top$,
$b_L
:=
-2 A_L \lambda_{L_{\mathrm{ref}}}$, and
$c_L :=
\lambda_{L_{\mathrm{ref}}}^\top A_L \lambda_{L_{\mathrm{ref}}}
$.

\section{Additional Details on the Experiments}\label{appsec:complementary_experiments}

This section contains additional results and details of the experimental setups for \Cref{sec:experiments}. For each experiment, we also check that our assumptions are satisfied.

\subsection{Conjugate Gaussian location models}\label{appsubsec:toy_gaussian_model}

\paragraph{\Cref{assump:fd}.}
Firstly, $\trueposterior \equiv \mathcal{N}(\tilde{\mu}_{\text{ref}},\tilde{\Sigma}_{\text{ref}})$ and $
\mathrm{supp}(\tilde{\pi}_{\mathrm{ref}})
 =
\mathbb{R}^{\paramdim}$ is open and connected. 

Secondly, we can verify the integrability and differentiability assumptions using \Cref{appsec:div_derivation_quadratic_form}. We have $\pi_\mathrm{ref} \in C^1(\Theta)$ since the Gaussian density is infinitely differentiable.
For showing $s_{\pi_{\text{ref}}} \in L^2(\tilde{\Pi}_{\mathrm{ref}})$, recall that $s_{\pi_\mathrm{ref}}(\theta)
=-\Sigma_\mathrm{ref}^{-1}(\theta-\mu_\mathrm{ref})$ which is linear in $\theta$, therefore
$\|s_{\pi_{\text{ref}}}\|_2^2$ is quadratic in $\theta$. Since $\trueposterior$ is Gaussian, it has finite second moments, and hence $\|s_{\pi_{\text{ref}}}\|_{L^2(\tilde{\Pi}_{\mathrm{ref}})} < \infty$.

Thirdly, verify the conditions on  $\mathcal{P}_{\Gamma}$, and the argument is near identical.  Since $\candidateprior$ is multivariate Gaussian and the likelihood is Gaussian, both $\pi(\cdot|\lambda)$ and $L(\cdot;x_{1:n},\lambda)$ are $C^1(\Theta)$ and the support is open and connected. Additionally, the  $\|s_{\pi(\cdot|\lambda)}\|_2^2$ and $\|\nabla_\theta L(\cdot;x_{1:n},\lambda)\|^2_2$ are both quadratics in $\theta$, and therefore integrable under a multivariate Gaussian posterior $\tilde{\Pi}_{\mathrm{ref}}$.

\paragraph{\Cref{assump:minimal-exp-fam}.}
We only consider prior sensitivity and therefore verify the conditions on the prior. The candidate prior $\candidateprior$ is multivariate Gaussian and  can be written in natural exponential family form as
shown in the main text.

\paragraph{\Cref{assump:candidate-prior-regularity}.}
Firstly, for every $\lambda \in \Gamma$ we have $\tilde{\Pi}^\lambda=\mathcal{N}(\mu_n,\Sigma_n)$ and hence the negative log density is $-\log \tilde{\pi}^\lambda(\theta) =
\frac{1}{2}(\theta-\mu_n)^\top \Sigma_n^{-1}(\theta-\mu_n)
+C$ for some constant $C \in \mathbb{R}$. The Hessian is therefore given by $- \nabla^2 \log \tilde{\pi}^\lambda(\theta)=\Sigma_n^{-1}$, which does not depend on $\theta$. Consequently, for all $\theta \in \text{supp}(\tilde{\pi}_{\text{ref}})$ (i.e., $R_\lambda=R_\Lambda=0$ for all $\lambda \in \Lambda$), $- \nabla^2 \log \tilde{\pi}^\lambda(\theta)
\succeq
K_\lambda I_{\paramdim}$ with $K_\lambda$ being the smallest eigenvalue of $\Sigma_n^{-1}$. As a result $K_{\Gamma} := \inf_{\lambda \in \Gamma} K_\lambda >0$.  Secondly, since $R_\lambda=0$ for all $\lambda \in \Lambda$, the second part of the assumption reduces to checking that $0 < \inf_{\lambda \in \Gamma}\tilde{\pi}^\lambda({0}) \leq \sup_{\lambda \in \Gamma}\tilde{\pi}^\lambda({0}) < \infty$. As $\lambda \rightarrow \tilde{\pi}^\lambda(0)$ is continuous and $\Gamma$ is compact, both bounds hold.

\paragraph{Closed-form divergences}\label{appsubsubsec:toy_gaussian_model_closed_forms}
We recall that for 
$\tilde{\Pi}^{\lambda_1}=\mathcal{N}(\mu_{1,n},\Sigma_{1,n})$ and
$\tilde{\Pi}^{\lambda_2}=\mathcal{N}(\mu_{2,n},\Sigma_{2,n})$,
the KL divergence and Wasserstein-2 distances admit closed forms:
\begin{align}
\mathrm{KL}\!\left(
\tilde{\Pi}^{\lambda_1}
|| \; 
\tilde{\Pi}^{\lambda_2}
\right)
& =
\frac12
\Bigg[
\mathrm{tr}\!\big(\Sigma_{2,n}^{-1}\Sigma_{1,n}\big)
+
(\mu_{2,n}-\mu_{1,n})^\top
\Sigma_{2,n}^{-1}
(\mu_{2,n}-\mu_{1,n})
-
\paramdim
+
\log \left(\frac{\det \Sigma_{2,n}}{\det \Sigma_{1,n}}\right)
\Bigg],\\
W_2^2\!\left(
\tilde{\Pi}^{\lambda_1}
,
\tilde{\Pi}^{\lambda_2}
\right)
& =
\|\mu_{1,n}-\mu_{2,n}\|_2^2
+
\mathrm{tr}\!\left(
\Sigma_{1,n}
+
\Sigma_{2,n}
-
2\left(
\Sigma_{2,n}^{\frac{1}{2}}
\Sigma_{1,n}
\Sigma_{2,n}^{\frac{1}{2}}
\right)^{\frac{1}{2}}
\right).
\end{align}

\subsection{Measuring Sensitivity to Learning Rate Estimation Methods}\label{appsubsec:gen-bayes-ising}

We now consider generalised Bayesian inference for the Ising model following \Cref{subsec:gen-bayes-ising}. The likelihood is $p_\theta(x) \propto \exp(\frac{1}{2 \theta} \sum_{i=1}^{\datadim} \sum_{j \in \mathcal{N}_i} x_i x_j)$, where $\mathcal{N}_i$ denotes the neighbours of node $i$ in some undirected graph $G$. The conditional distribution of node $i$ given its neighbours has mass function:
$
p_\theta(x_{r,i} \mid x_{r,j} : j \in \mathcal{N}_i )
=
\exp(\frac{x_{r,i} \, m_i(x_r)}{\theta})/
(1 + \exp(\frac{m_i(x_r)}{\theta})),
$ 
where $r = 1,\ldots,n$ indexes the datapoint, $i = 1,\ldots,d_\mathcal{X}$ 
indexes the node on the grid, $j$ ranges over the neighbours 
$\mathcal{N}_i$ of node $i$ and $m_i(x_r) := \sum_{j \in \mathcal{N}_i} x_{r,j}$.  The PL loss is given by
\begin{align}
\begin{aligned}
 l^\mathrm{PL}(\theta;x_{1:n}) & = - \frac{1}{n}\sum_{r=1}^n \sum_{i=1}^{\datadim}
\log p_\theta(x_{r,i} \mid x_{r,j} : j \in \mathcal{N}_i)\\
& = 
\frac{1}{n}\sum_{r=1}^n\sum_{i=1}^{d_\mathcal{X}}
\left[
\log\left(1+\exp \left(\frac{m_i(x_r)}{\theta}\right)\right)
-
x_{r,i}\frac{m_i(x_r)}{\theta}
\right].
\end{aligned}
\end{align}
and the DFD loss is
\begin{equation}
\begin{aligned}
 l^\mathrm{DFD}(\theta;x_{1:n}) & = 
 \frac{1}{n}\sum_{r=1}^n \sum_{i=1}^{\datadim}
\left[
\left(\frac{p_\theta(x_r^{i})}{p_\theta(x_r)}\right)^2
-2 \frac{p_\theta(x_r)}{p_\theta(x_r^{i})}
\right] \\
& = 
 \frac{1}{n}\sum_{r=1}^n \sum_{i=1}^{\datadim}
\left[
\exp\left(\frac{\Delta_i H(x_r)}{\theta}\right)
-2 \exp\left(-\frac{\Delta_i H(x_r)}{2\theta}\right)
\right].     
\end{aligned}
\end{equation}
Since $x \in \{0,1\}^{\datadim}$, each coordinate $i$ has exactly one available flip; we write $x_r^i$ for the state obtained by flipping the $i$-th coordinate of $x_r$, and define the energy difference $\Delta_i H(x) := H(x^i) - H(x)$ for $H(x) := \sum_{i=1}^{\datadim} \sum_{j \in \mathcal{N}_i} x_i x_j$.

\paragraph{\Cref{assump:fd}.}
Firstly, we verify the support condition for  $\trueposterior$. The parameter space is $\Theta=(0,\infty)$, which is an open connected subset of $\mathbb{R}$, and $\trueprior=\chi^2(3)$, which has full support on $\Theta$. For the PL loss, each 
$p_\theta \left(x_{r,i} \mid x_{r,j} : j \in \mathcal{N}_i \right)$ is bounded above by $1$ for every $\theta \in \Theta$, so each $\log p_\theta \left(x_{r,i} \mid x_{r,j} : j \in \mathcal{N}_i \right) < \infty$ and hence $l^{\mathrm{PL}}(\theta;x_{1:n})<\infty$. For the DFD loss, each term in the sum is finite for every $\theta \in \Theta$ since $\Delta_i H(x_r)$ is bounded on the finite state space $\{ 0,1 \}^{\datadim}$. Therefore, $l^{\mathrm{DFD}}(\theta;x_{1:n})<\infty$. Thus, $\exp(-L(\theta;x_{1:n}, \lambda_\mathrm{ref})) > 0$ for every $\theta \in \Theta$, where $L(\theta;x_{1:n}, \lambda_\mathrm{ref}) = \lambda_{L,\mathrm{ref}} l(\theta;x_{1:n})$ for $l \in 
\{l^\mathrm{PL}, l^\mathrm{DFD}\}$, resulting in $\mathrm{supp}(\tilde{\pi}_\mathrm{ref}) = \Theta$.

Secondly, we verify differentiability and integrability assumptions. Using \Cref{appsec:div_derivation_quadratic_form}, it suffices to verify these conditions for $L$. First, we verify the differentiability condition. For the PL loss,
Each $p_\theta \left(x_{r,i} \mid x_{r,j} : j \in \mathcal{N}_i \right)$ is obtained as a ratio of strictly positive smooth functions of $\theta$ on $\Theta$. Hence, it is $C^1(\Theta)$, and so is its logarithm. Since $l^{\mathrm{PL}}(\theta;x_{1:n})$ is a finite sum of such terms, it follows that $l^{\mathrm{PL}}(\cdot;x_{1:n}) \in C^1(\Theta)$.
For the DFD loss,
each summand of $l^{\mathrm{DFD}}(\theta;x_{1:n})$, the ratios of $p_\theta(\cdot)$ are well-defined and are quotients of $C^1(\Theta)$ functions, as $p_\theta(x)>0$ for every fixed $x$. Therefore, each term is $C^1(\Theta)$, and the finite sum
$l^{\mathrm{DFD}}(\cdot;x_{1:n}) \in C^1(\Theta)$.
Hence, $L^\mathrm{PL}(\cdot;x_{1:n}, \lambda_\mathrm{ref}), \;
L^\mathrm{DFD}(\cdot;x_{1:n}, \lambda_\mathrm{ref}) \in C^1(\Theta)$. 

Now, we check the integrability conditions. For the PL loss, differentiating gives
\begin{equation}
    \nabla_\theta l^{\mathrm{PL}}(\theta;x_{1:n})
    = -
    \frac{1}{n\theta^2}
    \sum_{r=1}^n \sum_{i=1}^{\datadim} m_i(x_r)
    \left[
        x_{r,i}
        -
        \sigma\left(
            \frac{m_i(x_r)}{\theta}
        \right)
    \right],
    \qquad
    \left|
        \nabla_\theta l^{\mathrm{PL}}(\theta;x_{1:n})
    \right|
    \leq \frac{C}{\theta^{2}},
    \label{eq:PL-gradient-bound}
\end{equation}
where $\sigma(t)=e^t/(1+e^t)$. We now split the proof into two cases depending on the observed dataset. In the first case,
$x_{r_0,i_0}=0$ and $m_{i_0}(x_{r_0})>0$ for some
$(r_0,i_0)$, so non-negativity of the PL summands and
$\log(1+e^t)\geq t$ imply
$l^{\mathrm{PL}}(\theta;x_{1:n})\geq c/\theta$ for some
$c>0$. Otherwise, every non-zero gradient term has
$x_{r,i}=1$ and $m_i(x_r)\geq 1$, which gives
$\lvert\nabla_\theta l^{\mathrm{PL}}(\theta;x_{1:n})\rvert
\leq C\theta^{-2}\exp(-\theta^{-1})$.
Using the $\chi^2(3)$ prior density and
$l^{\mathrm{PL}}\geq 0$, respectively, these two cases yield
\begin{equation}
    \mathbb{E}_{\theta\sim\widetilde{\Pi}_{\mathrm{ref}}}
    \left[
        \left|
            \nabla_\theta l^{\mathrm{PL}}(\theta;x_{1:n})
        \right|^2
    \right]
    \lesssim
    \begin{cases}
        \displaystyle
        \int_0^\infty
        \theta^{-\frac{7}{2}}
        \exp\left(
            -\frac{\lambda_{\mathrm{ref}}c}{\theta}
            -\frac{\theta}{2}
        \right)
        \,\mathrm{d}\theta,
        \\[3mm]
        \displaystyle
        \int_0^\infty
        \theta^{-\frac{7}{2}}
        \exp\left(
            -\frac{2}{\theta}
            -\frac{\theta}{2}
        \right)
        \,\mathrm{d}\theta,
    \end{cases}
    <\infty.
\end{equation}
Hence,
$\nabla_\theta l^{\mathrm{PL}}
\in L^2(\widetilde{\Pi}_{\mathrm{ref}})$
for every observed dataset.

We now consider integrability conditions for the DFD loss.
In this case, we impose the following sufficient condition on the observed dataset, which can be verified by a single pass through the data: for the observed dataset $x_{1:n}$, the largest number of $1-$valued neighbours of any $0-$valued node is more than half the largest number of $1-$valued neighbours of any $1-$valued node. That is, $2\max\{m_i(x_r):x_{r,i} = 0\} > \max\{m_i(x_r): x_{r,i} =1\}$ where the maxima are taken over all $r \in \{1,\dots,n\}$ and $i \in \{1,\dots, d_{\mathcal{X}}\}$ and with the convention $\max \emptyset = 0$ for the neighbour counts.
In particular, for our $6\times6$ grid, this condition is satisfied whenever at least one observed configuration contains a $0-$valued node with at least three $1-$valued neighbours.

With this condition in place, we can proceed to show that $\nabla_\theta l^{\mathrm{DFD}} \in L^2(\widetilde{\Pi}_{\mathrm{ref}})$. It suffices to show that the reference posterior is well-defined, i.e. $0< Z_{\text{ref}} := \int_0^\infty
\exp\{-\lambda_{\mathrm{ref}}l^{\mathrm{DFD}}(\theta;x_{1:n})\}
\pi_{\mathrm{ref}}(\theta)\,\mathrm{d}\theta
< \infty$ and that $\int_0^\infty I(\theta)\,d\theta<\infty$, where, using $\pi_{\mathrm{ref}}(\theta)\propto \theta^{1/2}e^{-\theta/2}$ for the $\chi^2(3)$ density,
\begin{align} \label{eq:i}
I(\theta)
:=
\left|
\nabla_\theta l^{\mathrm{DFD}}(\theta;x_{1:n})
\right|^2
\exp\left\{
-\lambda_{\mathrm{ref}}
l^{\mathrm{DFD}}(\theta;x_{1:n})
\right\}
\theta^{1/2}e^{-\theta/2}.
\end{align}
We first study $I(\theta)$; the bounds obtained below will also imply that $l^{\text{DFD}}$ is bounded from below, and hence that $Z_{\text{ref}} < \infty$.

The function $I$ is continuous on $(0,\infty)$, since
$l^{\mathrm{DFD}}(\cdot; x_{1:n})\in C^1((0,\infty))$. It is therefore
integrable on every compact subset of $(0,\infty)$. Thus it
suffices to establish integrability at the two boundaries,
$\theta\to0$ and $\theta\to\infty$. 

To establish that the integral does not diverge as $\theta \to \infty$, we bound each component of $I(\theta)$ separately on $[1,\infty)$, since $\int_{0}^\infty I(\theta)d\theta = \int_{1}^\infty I(\theta)d\theta + \int_{0}^1 I(\theta)d\theta$. First, note that 
    \begin{align}
    \left| \nabla_\theta l^{\text{DFD}}(\theta; x_{1:n}) \right|^2 
    &= \left|-
    \frac{1}{n}\sum_{r=1}^n\sum_{i=1}^{\datadim}
    \left[
    \frac{\Delta_i H(x_r)}{\theta^2}
    \exp\left(\frac{\Delta_i H(x_r)}{\theta}\right)
    +
    \frac{\Delta_i H(x_r)}{\theta^2}
    \exp\left(-\frac{\Delta_i H(x_r)}{2\theta}\right)
    \right]
    \right|^2 \nonumber \\
    &\leq \frac{C}{\theta^4} \exp\left(\frac{B}{\theta} \right)
    \end{align}
 for $B = 4\max_{r,i}|\Delta_i H(x_r)| < \infty$ and $C < \infty$ 
    absorbs constants from the finite sum. For $\theta \geq 1$, $\exp(B/\theta) \leq \exp(B)$ and hence $|\nabla_\theta l^{\text{DFD}}(\theta;x_{1:n})|^2 \leq C \exp(B) \theta^{-4}$. We now turn to the next term. 
    For $\theta\geq1$, $\exp(-\frac{\Delta_iH(x_r)}{2\theta})
\leq \exp (\max_{r,i}|\Delta_iH(x_r)|)$. 
Since the first term in each DFD summand is non-negative, it follows that 
\begin{align} \label{eq:lower-bound-big-theta}
l^{\mathrm{DFD}}(\theta;x_{1:n})
\geq
-\frac{2}{n}\sum_{r=1}^n\sum_{i=1}^{\datadim}\exp (\max_{r,i}|\Delta_iH(x_r)|)
=
-2\datadim \exp (\max_{r,i}|\Delta_iH(x_r)|).
\end{align}
Hence $l^{\mathrm{DFD}}$ is bounded from below on $[1,\infty)$ and $\exists C_1<\infty$ such that $\exp(-\lambda_{\mathrm{ref}}
l^{\mathrm{DFD}}(\theta;x_{1:n}))\leq C_1$. Therefore, 
substituting these bounds into (\ref{eq:i}) gives $I(\theta) \le
CC_1 \exp(B)\theta^{-7/2}e^{-\theta/2}$ for $ \theta\ge1$ and it follows that
\begin{align} \label{eq:ithetamorethanone}
\int_1^\infty I(\theta)\,d\theta<\infty.
\end{align}

We now examine the behaviour as $\theta\to0$.
Each term in the DFD loss is of the form $c \exp(\frac{a}{\theta})$, where $a$ determines whether the term decays or grows as $\theta \to 0$. We refer to $a$ as its exponential rate. Since the dataset is finite, there are only finitely many distinct rates. Collecting terms with the same rate, we may therefore write:
\begin{align}  \label{eq:dfd-form}
l^{\mathrm{DFD}}(\theta;x_{1:n})
=
\sum_{a\in\mathcal A} c_a e^{a/\theta},
§\end{align}
where $\mathcal A\subset\mathbb R$ is the finite set of exponential rates having non-zero net coefficient $c_a$ with
$
c_a
:=\frac{1}{n}\left[\#\{(r,i):\Delta_i H(x_r)=a\}-2\#\{(r,i):-\frac{\Delta_i H(x_r)}{2}=a\}\right].
$
Differentiating gives
\begin{align} \label{eq:dfd-grad}
\nabla_\theta l^{\mathrm{DFD}}(\theta;x_{1:n})
=
-\theta^{-2}
\sum_{a\in\mathcal A}
a c_a e^{a/\theta}.
\end{align}
Notice there are three scenarios for the exponential term in (\ref{eq:dfd-grad}): (i) for $a<0$, $e^{a/\theta}\to0$ as $\theta\to0$, (ii) for $a=0$, $e^{a/\theta}=1$, and (iii) for $a>0$, it diverges.
We therefore focus on the exponentially growing terms. Define
$ \mathcal A_+ := \{a\in\mathcal A:a>0\}$, the set collecting all the positive exponential rates.
If $\mathcal{A}_{+} = \varnothing$, all exponential terms appearing in the gradient are non-growing as $\theta \to 0$, which would make controlling integrability straightforward. However, this case is ruled out by our dataset condition, which guarantees the existence of a positive exponential rate. We must therefore consider the more challenging case $\mathcal{A}_{+} \neq \varnothing$, in which the gradient may grow exponentially as $\theta \to 0$. Crucially, as we show below, the dataset condition also ensures that the largest such rate arises from a positive contribution to the loss, so that its coefficient is positive. The resulting growth of the loss then produces sufficient decay in the posterior density to dominate the growth of the gradient.

The key point is that exponential growth plays two opposing roles in the integrand $I(\theta)$. Growth of $|\nabla_\theta l^{\mathrm{DFD}}(\theta;x_{1:n})|$ can be detrimental to integrability, whereas growth of $l^{\mathrm{DFD}}(\theta;x_{1:n})$ to $+\infty$ is beneficial, since the loss enters through the factor
$\exp\{-\lambda_{\mathrm{ref}}l^{\mathrm{DFD}}(\theta;x_{1:n})\}$.
We therefore upper-bound the growth of the gradient while lower-bounding the growth of the loss.
As shown below, both are governed by the largest positive exponential rate, which we denote as  $a_\star:=\max_{a\in\mathcal A_+} a>0$, but the posterior factor induced by the growth of the loss decays double-exponentially, which dominates both the exponential and polynomial growth appearing in the gradient.

We now show how the dataset condition ensures that the largest exponential rate has a positive coefficient. Recall that for a $0$-valued node we have $\Delta_i H(x_r) = m_i(x_r)$, so that positive contributions to the loss give exponential rates $m_i(x_r)$. In contrast, for a $1$-valued node we
have $\Delta_i H(x_r)=-m_i(x_r)$, so that the corresponding negative contributions to the loss give positive exponential rates
$m_i(x_r)/2$. Hence, the dataset condition
$2\max\{m_i(x_r):x_{r,i}=0\} > \max\{m_i(x_r):x_{r,i}=1\}$ ensures that the largest positive exponential rate arising from a positive
contribution is strictly larger than the largest positive exponential rate arising from a negative contribution. Therefore, $a_\star$ is attained by at least one positive contribution of the form $\Delta_i H(x_r)$ and is strictly larger than every positive rate arising from a negative contribution of the form $-\frac{\Delta_i H(x_r)}{2}$. Therefore, $\#\{(r,i): - \frac{\Delta_i H(x_r)}{2} = a_\star\} = 0$ and since $a_\star$ is attained, we have $\#\{(r,i): \Delta_iH(x_r) = a_\star \} \geq 1$. It therefore follows from the definition of $c_a$ that 
$c_{a^\star} = \frac{1}{n} \#\{(r,i):  \Delta_iH(x_r) = a_\star\} > 0$.

Separating the term corresponding to
the largest rate gives
\begin{align}
l^{\mathrm{DFD}}(\theta;x_{1:n})
=
c_{a_\star}e^{a_\star/\theta}
+
\sum_{a<a_\star}c_a e^{a/\theta}
=
e^{a_\star/\theta}
\left[
c_{a_\star}
+
\sum_{a<a_\star}
c_a e^{(a-a_\star)/\theta}
\right].
\label{eq:dfd-leading-rate}
\end{align}

Now for every $a<a_\star$, we have that $e^{(a -a_\star)/\theta} \to 0$ as $\theta\to0$.
Since the sum in \eqref{eq:dfd-leading-rate} is finite, the expression in brackets then converges to $c_{a_\star}$ as $\theta \to 0$. 
Hence, there exists $\epsilon>0$ such that
\begin{align}
l^{\mathrm{DFD}}(\theta;x_{1:n})
&\ge
\frac{c_{a_\star}}{2}e^{a_\star/\theta},
\qquad 0<\theta<\epsilon.
\label{eq:dfd-loss-lower-bound}
\end{align}
Moreover, since $a_\star$ is the largest exponential rate and
$\mathcal A$ is finite, it follows from \eqref{eq:dfd-grad} that there exists $C_5<\infty$ such that:
\begin{align}
\left|
\nabla_\theta l^{\mathrm{DFD}}(\theta;x_{1:n})
\right|
&\le
C_5\theta^{-2}e^{a_\star/\theta}.
\label{eq:dfd-gradient-upper-bound}
\end{align}
Substituting \eqref{eq:dfd-loss-lower-bound} and
\eqref{eq:dfd-gradient-upper-bound} into the target integrand
$I(\theta)$ and pulling finite constants together into $C_6 < \infty$, yields
\begin{align}
I(\theta)
&\le
C_6\theta^{-7/2}
\exp\left\{
\frac{2a_\star}{\theta}
-
\frac{\lambda_{\mathrm{ref}}c_{a_\star}}{2}
e^{a_\star/\theta}
-
\frac{\theta}{2}
\right\},
\qquad 0<\theta<\epsilon.
\label{eq:dfd-case2-integrand-bound}
\end{align}
As $\theta\to0$, the negative double-exponential term
$-\frac{\lambda_{\mathrm{ref}}c_{a_\star}}{2}
e^{a_\star/\theta}$
dominates both the growth arising from $2a_\star/\theta$ and the
polynomial factor $\theta^{-7/2}$. Therefore, the right-hand side of
\eqref{eq:dfd-case2-integrand-bound} is integrable at zero, and hence $\int_0^{\epsilon} I(\theta)\,d\theta<\infty$.
Together with \eqref{eq:ithetamorethanone} and the
continuity of $I$ on $(0,\infty)$, this gives $\int_0^\infty I(\theta)\,d\theta<\infty$. 
It remains to verify that $Z_{\text{ref}} < \infty$.
Indeed, \eqref{eq:dfd-loss-lower-bound} provides a lower bound for sufficiently small $\theta$, \eqref{eq:lower-bound-big-theta} provides a lower bound for $\theta \geq 1$, and continuity provides a lower bound on the remaining compact interval. Hence, $\inf_{\theta>0} l^{\mathrm{DFD}}(\theta;x_{1:n}) > -\infty$. Using that $\pi_{\mathrm{ref}}$ is a proper prior,
\[
0 < Z_{\mathrm{ref}}
\leq
e^{-\lambda_{\mathrm{ref}}\inf_{\theta>0} l^{\mathrm{DFD}}(\theta;x_{1:n})}
\int_0^\infty \pi_{\mathrm{ref}}(\theta)\,\mathrm{d}\theta
=
e^{-\lambda_{\mathrm{ref}}\inf_{\theta>0} l^{\mathrm{DFD}}(\theta;x_{1:n})}
<\infty.
\]
Thus, $\widetilde{\Pi}_{\mathrm{ref}}$ is well-defined and together with $\int_0^\infty I(\theta)\,d\theta<\infty$, it follows that
$
\nabla_\theta
l^{\mathrm{DFD}}
\in
L^2(\widetilde{\Pi}_{\mathrm{ref}}).
$

Next, we verify that
$\mathcal{P}_{\Gamma}$
is a subset of $\mathcal{P}_\FD(\Theta)$.
The $C^1(\Theta)$ regularity of each candidate loss follows immediately since 
the reference losses are in $C^1(\Theta)$. For the integrability condition, we have for each loss $l \in 
\{l^\mathrm{PL}, l^\mathrm{DFD}\}$,
\begin{align}
\begin{aligned}
\mathbb{E}_{\theta \sim \trueposterior} \left[ |\nabla_\theta L(\theta;x_{1:n},\lambda_L)|^2 \right]
& = \lambda_L^2 \mathbb{E}_{\theta \sim \trueposterior} \left[ |\nabla_\theta l(\theta;x_{1:n})|^2 \right]\\
&\leq (\lambda_{L,\mathrm{ref}} + \epsilon)^2 \mathbb{E}_{\theta \sim \trueposterior} \left[ |\nabla_\theta l(\theta;x_{1:n})|^2 \right]< \infty
\end{aligned}
\end{align}
where the last integral is finite since it was already verified for 
the reference losses above. Therefore,
$\nabla_\theta L^\mathrm{PL},\; 
\nabla_\theta L^\mathrm{DFD} \in L^2(\trueposterior)$
uniformly over $\lambda_L \in \Gamma$.
Finally, since $\candidateprior = \trueprior$ and each loss is finite 
on $(0,\infty)$ for all $\lambda_L \in \Gamma$, the support of each 
candidate posterior satisfies 
$\mathrm{supp}(\tilde{\pi}^\lambda) = \mathrm{supp}(\tilde{\pi}_{\mathrm{ref}})
= (0,\infty)$
for all $\lambda_L \in \Gamma$. 

\paragraph{\Cref{assump:minimal-exp-fam}.}
We are doing learning rate sensitivity and hence the assumption is satisfied.

\subsection{Kilpisjärvi temperature autoregressive model}\label{appsubsec:heavy-tailed-ar-model-kilpisjarvi}
In this experiment, sensitivity is analysed only with respect to the prior, so that $\lambda=\lambda_\pi$.

\paragraph{\Cref{assump:fd}.} We will now show that this assumption holds. Define the design matrix of lagged predictors as: 
\begin{align*}
    Z = \begin{bmatrix}1 & x_S & x_{S-1} &\dots & x_1 \\
    1 & x_{S+1} & x_{S} &\dots & x_2 \\
    \vdots & \vdots & \vdots & & \vdots \\
    1 & x_{n-1} & x_{n-2} &\dots & x_{n-S} 
    \end{bmatrix}.
\end{align*} Throughout, we assume that the prior neighbourhood \(\Gamma\) is a compact subset of the valid hyperparameter space and that the observed response vector \(x=(x_{S+1},\ldots,x_n)^\top\) does not lie in the column space of the lagged design matrix \(Z\), equivalently \(\delta^2:=\inf_{\theta'}\|x-Z\theta'\|_2^2>0\), a condition that can be checked directly from the observed dataset.

We first verify the support condition for $\trueposterior$. The parameter space is $\Theta = \mathbb{R}^{\paramdim-1}\times(0,\infty)$ which is an open connected subset of $\mathbb{R}^{\paramdim}$. 
The autoregressive loss component is a standard log-likelihood which is finite 
for every $\theta \in \Theta$ and $\exp(-L_{\mathrm{ref}}(\theta;x_{1:n}, 
\lambda_\mathrm{ref}))>0$ for all $\theta \in \Theta$. The reference prior is
a product of prior per parameter, each of which have full support, therefore it also has full support. Since both the prior and generalised update have full support, $\trueposterior$ has full support.

We now verify the differentiability conditions.
The reference prior density is a product of Gaussian densities which are in $C^1(\mathbb{R})$ and of a half-Cauchy density which is in $C^1((0,\infty))$, therefore $\pi_{\mathrm{ref}} \in C^1(\Theta)$. 
We now verify the integrability condition; i.e. that $s_{\pi_{\mathrm{ref}}}\in L^2(\tilde{\Pi}_{\mathrm{ref}})$. Applying the Cauchy--Schwarz inequality,
\begin{align*}
\mathbb{E}_{\theta \sim \trueposterior}
[\|s_{\pi_{\mathrm{ref}}}(\theta)\|_2^2
]
& =
\frac{1}{Z_{\mathrm{ref}}}
\mathbb{E}_{\theta \sim \Pi_{\mathrm{ref}}}
\left[
\|s_{\pi_{\mathrm{ref}}}(\theta)\|_2^2
\exp\left(-L_{\mathrm{ref}}(\theta;x_{1:n})\right)
\right] \\
&\le
\frac{1}{Z_{\mathrm{ref}}}
\left(
\mathbb{E}_{\theta \sim\Pi_{\mathrm{ref}}}
\left[\|s_{\pi_{\mathrm{ref}}}(\theta)\|_2^4 \right]
\right)^{\frac{1}{2}}
\left(
\mathbb{E}_{\theta \sim \Pi_{\mathrm{ref}}}
\left[\exp(-2L_{\mathrm{ref}}(\theta;x_{1:n})) \right]
\right)^{\frac{1}{2}}.
 \end{align*}
First, we verify that $\mathbb{E}_{\theta \sim\Pi_{\mathrm{ref}}}
[\exp(-2L_{\mathrm{ref}}(\theta;x_{1:n}))]<\infty$. First,
\begin{align*}
&\mathbb{E}_{\theta \sim\Pi_{\mathrm{ref}}}
\left[\exp(-2L_{\mathrm{ref}}(\theta;x_{1:n})) \right]\\
&= \int_{0}^{+\infty} (2\pi\sigma^2)^{S-n} \pi_\mathrm{ref}(\sigma)
\int_{\mathbb{R}^{S+1}} \exp\left(- \sum_{i=S+1}^n \frac{(x_i - m_i)^2}{\sigma^2} \right) \pi_\mathrm{ref}(\alpha) \pi_\mathrm{ref}(\beta_1) \cdots \pi_\mathrm{ref}(\beta_S) d\alpha  d\beta_1  \cdots  d\beta_S d\sigma,
\end{align*}
where $m_i := \alpha + \sum_{s=1}^S \beta_s x_{i-s}$. We fix $\sigma$ and control the inner integral first. Since in the inner integral both the likelihood term and the prior are Gaussians in $\alpha$ and $\beta_s$ for $s=1\dots S$, the inner integral has a closed form and we write it $C(\sigma)< \infty$
for each fixed $\sigma > 0$. Now, we need to show that
$\int_{0}^{+\infty} (2\pi\sigma^2)^{S-n} \pi_\mathrm{ref}(\sigma) C(\sigma) d\sigma$
is finite. To do that we check the behaviour of the integrand at the two endpoints $\sigma \to \infty$ and $\sigma \to 0^+$. We start with $\sigma \to \infty$. The prior $\pi_\mathrm{ref}(\sigma) \propto (1 + \sigma^2)^{-1}$ decays as $\sigma^{-2}$, $(2\pi\sigma^2)^{S-n}$ as $\sigma^{2(S-n)}$, $C(\sigma) \to C$ with $C < \infty$ and independent of $\sigma$. The integral $\int_{0}^{+\infty} \sigma^{-2-2(n-S)} d\sigma$ is finite if and only if $-2-2(n-S) < -1$, which holds in our case as $n-S = 57$ in our experiment. As $\sigma \to 0^+$, $\pi_\mathrm{ref}(\sigma) \to \frac{2}{\pi}$ is a positive constant. Given $x = (x_{S+1}, x_{S+2}, \dots, x_n)^\top$, and $\theta^\prime = (\alpha, \beta_1, \dots, \beta_S)$, the full model can be re-written as $x | \theta^\prime, \sigma \sim \mathcal{N}(Z \theta^\prime, \sigma^2 I)$ and 
$\sum_{i=1}^{S+1} (x_i - m_i)^2 = \| x - Z \theta^\prime\|^2$. Therefore, as long as 
$\inf_{\theta^\prime} \| x - Z \theta^\prime \|^2 > 0$
we have that $(2\pi\sigma^2)^{S-n}$ blows up polynomially, and 
\begin{align*}
    C(\sigma) \sim C_1 \sigma^{S+1} \exp \left(-\frac{\min_{\alpha,\beta}\sum_{i=S+1}^n(x_i-m_i)^2}{\sigma^2}\right),
\end{align*}
where $C_1<\infty$ is independent of $\sigma$ and goes to $0$ exponentially fast. Therefore, it dominates the polynomial factor $\sigma^{S+1-2(S-n)}$ and the integrand approaches $0$ as $\sigma \to 0^+$.  As a result, $\mathbb{E}_{\theta \sim \Pi_{\mathrm{ref}}}
\left[\exp(-2L_{\mathrm{ref}}(\theta;x_{1:n})) \right] < \infty$.
We now verify that $\mathbb{E}_{\theta \sim\Pi_{\mathrm{ref}}}\left[\|s_{\pi_{\mathrm{ref}}}(\theta)\|_2^4 \right] < \infty$. Recall that
\begin{align*}
\|s_{\pi_{\mathrm{ref}}}(\theta)\|_2^2=\frac{\alpha^2}{5^4}+\sum_{s=1}^S \frac{\beta_s^2}{5^4}+
\left(\frac{2\sigma}{1+\sigma^2}\right)^2.
\end{align*}
Applying the inequality $(\sum_{i=1}^{S+2}a_j)^2 \leq (S+2) \sum_{i=1}^{S+2} a_j^2$, gives
\begin{align*}
\|s_{\pi_{\mathrm{ref}}}(\theta)\|_2^4
\lesssim
\frac{\alpha^4}{5^8}+
\sum_{s=1}^S \frac{\beta_s^4}{5^8}+
\left(\frac{2\sigma}{1+\sigma^2}\right)^4.
\end{align*}
For the Gaussian components, we have
$\mathbb{E}_{\theta \sim\Pi_{\mathrm{ref}}}[\alpha^4] < \infty$
and $\mathbb{E}_{\theta \sim\Pi_{\mathrm{ref}}}[\beta_s^4] < \infty$ for each $s \in \{1,\ldots,S\}$. $1 + \sigma^2 \geq 2\sigma$ for all $\sigma > 0$, so $\left|2\sigma / 1+\sigma^2\right| \leq 1$
and hence $\left(2\sigma / 1+\sigma^2\right)^4 \leq 1$ everywhere, giving
$\mathbb{E}_{\theta \sim \Pi_{\mathrm{ref}}}[(2\sigma / 1+\sigma^2)^4] \leq 1 < \infty$.
Combining, we conclude $\mathbb{E}_{\theta \sim \Pi_{\mathrm{ref}}}\left[\|s_{\pi_{\mathrm{ref}}}(\theta)\|_2^4 \right] < \infty$. Thus, $s_{\pi_{\mathrm{ref}}}\in L^2(\tilde{\Pi}_{\mathrm{ref}})$.

Next, we verify that the posterior neighbourhood $\mathcal{P}_{\Gamma}$ is a subset of $\mathcal{P}_{\FD}(\Theta)$. We begin with the support condition, and first note that $\pi(\theta|\lambda)>0$ is strictly positive for every $\theta \in \Theta$ and $\lambda \in \Gamma$ and the likelihood is the same for the reference and candidate posteriors. Therefore, the candidate posterior density $\tilde{\pi}(\cdot|\lambda)$ has support
$\mathrm{supp}(\tilde{\pi}(\cdot|\lambda))
= \Theta =
\mathrm{supp}(\tilde{\pi}_{\mathrm{ref}})$. For the differentiability condition, we note that $\pi(\theta|\lambda) \in C^1(\Theta)$ as it is a composition of continuously differentiable functions.
The last step consists of verifying the integrability condition; i.e. checking that $s_{\pi} \in L^2(\tilde{\Pi}_{\mathrm{ref}})$ for all $\lambda \in \Gamma$. Applying Cauchy--Schwarz,
$\mathbb{E}_{\theta \sim \tilde{\Pi}_{\mathrm{ref}}}\left[\|s_{\pi}(\theta|\lambda)\|_2^2 \right]
\leq
\left(
\mathbb{E}_{\theta \sim\tilde{\Pi}_{\mathrm{ref}}} \left[\|s_{\pi}(\theta|\lambda)\|_2^4 \right]
\right)^{1/2}$,
it remains to verify
$\mathbb{E}_{\theta \sim\tilde{\Pi}_{\mathrm{ref}}} \left[\|s_{\pi}(\theta|\lambda)\|_2^4 \right]<\infty$ for all $\lambda \in \Gamma$.
Applying $(a+b+c)^2 \leq 3(a^2+b^2+c^2)$ to $\|s_{\pi}(\theta|\lambda)\|_2^2$ gives
\begin{align*}
\|s_{\pi}(\theta|\lambda)\|_2^4
\lesssim
\frac{(\alpha-\mu_\alpha)^4}{s_\alpha^8}
+
\sum_{s=1}^S\frac{(\beta_s-\mu_{\beta_s})^4}{s_{\beta_s}^8}
+
\left(\frac{a+1}{\sigma} - \frac{b}{\sigma^2}\right)^4.
\end{align*}
We first control the parameter $\alpha$. Since $\Gamma$ is compact and only contains valid candidate priors, there exist constants $\underline{s}_\alpha$ and $\bar{\mu}_\alpha$, such that $0 < \underline{s}_\alpha \le s_\alpha$
and $|\mu_\alpha| \le \bar{\mu}_\alpha$
uniformly over $\lambda \in \Gamma$. Therefore,
$(\alpha-\mu_\alpha)^4 / s_\alpha^8 \le
\underline{s}_\alpha^{-8}(\alpha-\mu_\alpha)^4
\le 8\underline{s}_\alpha^{-8} \left(\alpha^4+\bar{\mu}_\alpha^4\right)$.
Since $\trueposterior$ has Gaussian tails in $\alpha$,
$\mathbb{E}_{\theta \sim\tilde{\Pi}_{\mathrm{ref}}}[\alpha^4] < \infty$.
Hence,
$\sup_{\lambda \in \Gamma}
\mathbb{E}_{\theta \sim\tilde{\Pi}_{\mathrm{ref}}}
\left[
(\alpha-\mu_\alpha)^4 / s_\alpha^8
\right]
< \infty$.
The same argument applies to other Gaussian parameters $\beta_s$, giving
$\sup_{\lambda \in \Gamma}
\mathbb{E}_{\theta \sim\tilde{\Pi}_{\mathrm{ref}}}
\left[
(\beta_s-\mu_{\beta_s})^4 / s_{\beta_s}^8
\right]
< \infty$ for $s=1,\ldots,S$.
Now, we move to the control over $\sigma$. Since $\Gamma$ is compact, define
$\bar{a} := \sup_{\lambda \in \Gamma} a < \infty$
and $\bar{b} := \sup_{\lambda \in \Gamma} b < \infty$.
Then,
$
\left(
(a+1) / \sigma
-
b / \sigma^2
\right)^4
\lesssim
\left(
(\bar{a}+1) / \sigma
+
\bar{b} / \sigma^2
\right)^4
\lesssim
\sigma^{-4}+\sigma^{-8}.
$
Thus, it suffices to show that
$\mathbb{E}_{\theta \sim\tilde{\Pi}_{\mathrm{ref}}}
\big[
\sigma^{-4}+\sigma^{-8}
\big]
<\infty$.
The $\tilde{\pi}_{\mathrm{ref}}(\sigma)$ satisfies
\begin{align*}
\tilde{\pi}_{\mathrm{ref}}(\sigma)
\lesssim
\sigma^{S-n}
\exp\left(
-\frac{
\inf_{\alpha,\beta}
\sum_{i=S+1}^n (x_i-m_i)^2
}{2\sigma^2}
\right)
\frac{1}{1+\sigma^2}.
\end{align*}
Therefore, for any $k>0$,
\begin{align*}
\mathbb{E}_{\theta \sim\tilde{\Pi}_{\mathrm{ref}}}[\sigma^{-k}]
\lesssim
\int_0^\infty
\frac{\sigma^{-k+S-n}}{1+\sigma^2}
\exp\left(
-\frac{
\inf_{\alpha,\beta}
\sum_{i=S+1}^n (x_i-m_i)^2
}{2\sigma^2}
\right)
d\sigma.
\end{align*}
As $\sigma \to 0^+$, the exponential term dominates any polynomial factor, provided $\inf_{\alpha,\beta}\sum_{i=S+1}^n (x_i-m_i)^2 > 0$. As $\sigma \to \infty$, the integrand behaves like $\sigma^{-k+S-n-2}$, which is integrable since $n>S$ and $k>0$. Hence $\mathbb{E}_{\theta \sim\tilde{\Pi}_{\mathrm{ref}}}[\sigma^{-k}]<\infty$ for all $k>0$, and in particular
$\mathbb{E}_{\theta \sim\tilde{\Pi}_{\mathrm{ref}}}
\big[
\sigma^{-4}+\sigma^{-8}
\big]
<\infty$.
Combining the bounds for all parameters, we obtain
$\sup_{\lambda\in\Gamma}
\mathbb{E}_{\theta \sim\tilde{\Pi}_{\mathrm{ref}}}
\big[
\|s_{\pi}(\theta\mid\lambda)\|_2^4
\big]
<\infty$.
Consequently, $s_{\pi}(\cdot\mid\lambda)\in L^2(\tilde{\Pi}_{\mathrm{ref}})$
for all $\lambda\in\Gamma$, and therefore $\mathcal{P}_{\Gamma}\subset \mathcal{P}_{\FD}(\Theta)$.

\paragraph{\Cref{assump:minimal-exp-fam}.}
For $\candidateprior$, the components
$\alpha$, $\beta_s$ and $\sigma$ priors admit a natural exponential-family representation with $\lambda_\alpha
=
[
\frac{\mu_\alpha}{s_\alpha^2},
-\frac{1}{2s_\alpha^2}
]^\top$, $
\lambda_{\beta_s}
=
[
\frac{\mu_{\beta_s}}{s_{\beta_s}^2},
\frac{1}{2s_{\beta_s}^2}]^\top$, $\lambda_\sigma = [-a-1,-b]^\top$,
and 
$T_\alpha(\alpha) = [\alpha,\alpha^2]^\top$,
$T_{\beta_s}(\beta_s) = [\beta_s,\beta_s^2]^\top$, and $T_\sigma(\sigma)=
[\log \sigma,\sigma^{-1}]^\top$. Collecting all components, the full prior can be written as a natural exponential family model with
$
\lambda
=
[
\lambda_\alpha,
\lambda_{\beta_1},
\dots,
\lambda_{\beta_S},
\lambda_\sigma
]^\top$, and  
$T(\theta)
=
[
\alpha,\alpha^2,
\beta_1,\beta_1^2,
\dots,
\beta_S,\beta_S^2,
\log\sigma,\sigma^{-1}
]^\top$.

\subsection{Simulation-based inference for radio propagation models}\label{appsubesc:sbi}

\paragraph{\Cref{assump:fd}.} We will now show that this assumption fails in the original parametrisation, but holds in the alternative parametrisation.

We first verify the support condition for $\trueposterior$. The parameter space $\Theta=\prod_{j=1}^{\paramdim}(a_j,b_j)
$ is an open connected subset of $\mathbb R^{\paramdim}$. Since both
$q_{\hat{\phi}}(x_{1:n}\mid\theta)$ and $\pi_{\mathrm{ref}}(\theta)$
are strictly positive on $\Theta$, so is $\tilde{\pi}_{\mathrm{ref}}$.
Hence, $\operatorname{supp}(\tilde{\pi}_{\mathrm{ref}})=\Theta
$.

Since we are performing prior sensitivity, we only need to check the differentiability and integrability conditions for the priors (see \Cref{appsec:div_derivation_quadratic_form}).
Starting with the reference, $\pi_{\mathrm{ref}}(\theta)
=\prod_{j=1}^{\paramdim} \pi_{\mathrm{ref},j}(\theta_j)$ for
$\pi_{\mathrm{ref},j}(\theta_j)= (b_j - a_j)^{-1}$, and $\pi_{\mathrm{ref}}(\theta)$ is therefore a strictly positive constant. Therefore, $\pi_{\mathrm{ref}} \in C^1(\Theta)$, $s_{\pi_{\mathrm{ref}}}(\theta)=0$ and $\mathbb{E}_{\theta \sim \trueposterior}[
\|s_{\pi_{\mathrm{ref}}}(\theta)\|_2^2]
= 0 < \infty$. 

Next, we consider the candidate priors
$\pi(\theta|\lambda) = \pi_\mathrm{ref}(\theta) c_{\lambda}(\theta_i, \theta_j)$ for $\lambda \in (-1,1) \backslash \{0\}$.
The candidate posterior density satisfies
$\tilde{\pi}^\lambda(\theta) >0$
for all $\theta \in \Theta$, since the reference prior density, the copula density
and the neural likelihood estimator are strictly positive on $\Theta$.
Hence $\mathrm{supp}(\tilde{\pi}^\lambda) = \Theta = \mathrm{supp}(\tilde{\pi}_\mathrm{ref})$.
Since $\pi_{\mathrm{ref}} \in C^1(\Theta)$ and the copula density
is a composition of $C^1(\Theta)$ functions on $\Theta$, we have $\pi(\cdot|\lambda) \in C^1(\Theta)$ from the product and chain rule.

It remains to verify the integrability condition.
The score of the candidate prior reduces to the score of the copula: $s_{\pi(\cdot|\lambda)}(\theta)
= s_{\pi_\mathrm{ref}}(\theta) + \nabla_\theta \log c_{\lambda}(\theta_i, \theta_j)
= \nabla_\theta \log c_{\lambda}(\theta_i, \theta_j)$ where $\theta_i,\theta_j$ are the parameters we want to correlate. Recall that $F_i(\theta_i) = (\theta_i - a_i)/(b_i - a_i)$ is the CDF of the $i$-th 
reference prior marginal and $z_i = \Phi^{-1}(F_i(\theta_i))$.
Since the copula density depends on $\theta$ only through
$\theta_i,\theta_j$, all components of
$\nabla_\theta\log c_\lambda(\theta_i,\theta_j)$ other than the
$i$-th and $j$-th are zero
\begin{align}
\frac{\partial}{\partial \theta_i} \log c_{\lambda}(\theta_i, \theta_j)
=
\frac{\lambda z_j - \lambda^2 z_i}{(1-\lambda^2)\,\phi(z_i)\,(b_i - a_i)},
\quad
\frac{\partial}{\partial \theta_j} \log c_{\lambda}(\theta_i, \theta_j)
=
\frac{\lambda z_i - \lambda^2 z_j}{(1-\lambda^2)\,\phi(z_j)\,(b_j - a_j)},
\end{align}
where $\phi$ denotes the standard Gaussian density. 
The squared norm is
\begin{equation}\label{appeq:full-squared-norm}
\|s_{\pi(\cdot\mid\lambda)}(\theta)\|_2^2
=
\underbrace{\left(\frac{\lambda z_j-\lambda^2 z_i}{(1-\lambda^2)\phi(z_i)(b_i-a_i)}\right)^{\!2}}_{=:A_i(\theta)}
+
\underbrace{\left(\frac{\lambda z_i-\lambda^2 z_j}{(1-\lambda^2)\phi(z_j)(b_j-a_j)}\right)^{\!2}}_{=:A_j(\theta)}.
\end{equation}
Here, $A_i(\theta), \;A_j(\theta) \ge 0$ for every $\theta$, so 
$\|s_{\pi(\cdot\mid\lambda)}(\theta)\|_2^2 \ge A_i(\theta)$.
Now, we define a sub-region of $\Theta$ on which $A_i(\theta)$ admits an explicit lower bound. For $C>0$ and $\theta_{-i}=(\theta_k)_{k\neq i}$ define
$K \;:=\; \bigl\{\theta_{-i} \in \textstyle\prod_{k\neq i}(a_k,b_k) \,:\, |z_j| \le C \bigr\}$. Since $z_j$ is a continuous, strictly
increasing bijection of $\theta_j\in(a_j,b_j)$ onto $\mathbb{R}$, the condition $|z_j|\le C$
corresponds to a subinterval of $(a_j,b_j)$; the remaining coordinates
$\theta_k$, $k\neq i,j$, range freely over bounded intervals $(a_k,b_k)$. Hence $K$
has strictly positive and finite Lebesgue measure, $0<\mathrm{Leb}(K)<\infty$.
For $\theta_i$, define the threshold
$R := a_i + (b_i-a_i)\Phi\left(2C / |\lambda|\right)$,
so that $\theta_i>R \iff z_i > 2C/|\lambda|$. For $\theta_{-i}\in K$ and $\theta_i>R$, the
reverse triangle inequality and $|z_j|\le C < |\lambda||z_i|/2$ give
$|\lambda z_j - \lambda^2 z_i| \ge (\lambda^2 / 2)|z_i|$, leading to the following lower bound
\begin{equation}\label{appeq:pointwise-lower-bound}
\|s_{\pi(\cdot\mid\lambda)}(\theta)\|_2^2 \;\ge\; A_i(\theta)
\;\ge\;
\frac{\lambda^4}{4(1-\lambda^2)^2(b_i-a_i)^2}\,\frac{z_i^2}{\phi(z_i)^2}.
\end{equation}
We also define the neural likelihood estimator $q_{\hat{\phi}}(x_{1:n}\mid\theta)$ as a masked  autoregressive flow (MAF) \citep{papamakarios2017masked}
\begin{align*}
q_{\hat{\phi}}(x_{1:n} \mid \theta) 
= (2\pi)^{-d_{\mathcal{X}}/2} 
\exp\!\left(-\frac{1}{2}\|u(x;\theta)\|^2\right) 
\prod_{r=1}^{d_{\mathcal{X}}} \sigma_{\hat{\phi},r}(x_{<r},\theta)^{-1},
\end{align*}
where $u:=(u_1,\dots,u_{\datadim})$, $u_r(x;\theta) = \bigl(x_r - \mu_{\hat{\phi},r}(x_{<r},\theta)\bigr)/\sigma_{\hat{\phi},r}(x_{<r},\theta)$,
the shift $\mu_{\hat{\phi},r}$ and pre-activation scale 
$s_{\hat{\phi},r}$ networks are affine functions of $(x_{<r}, \theta)$ with $\tanh$ activations, and $\sigma_{\hat{\phi},r}(x_{<r},\theta) = \mathrm{softplus}(s_{\hat{\phi},r}(x_{<r},\theta))$.
$q_{\hat\phi}(x_{1:n}\mid\theta)$ is a continuous, strictly positive function on
$\overline{\Theta}$ (compact closure of $\Theta$), and therefore there exist constants $0<m\le M<\infty$ such that
$m \;\le\; q_{\hat\phi}(x_{1:n}\mid\theta) \;\le\; M$ for $\theta \in \Theta$. Restricting the domain of integration to $\{\theta_i > R\}\times K \subset \Theta$ and using the fact that $\pi_{\mathrm{ref}}$ is constant on $\Theta$, we obtain
\begin{align}
\mathbb{E}_{\theta\sim\trueposterior}\bigl[\|s_{\pi(\cdot\mid\lambda)}(\theta)\|_2^2\bigr]
&\;\ge\;
\int_{\theta_i>R}\int_{K}
\frac{\lambda^4}{4(1-\lambda^2)^2(b_i-a_i)^2}\,\frac{z_i^2}{\phi(z_i)^2}\,Z_{\text{ref}}^{-1}
\pi_{\mathrm{ref}}(\theta)\,q_{\hat\phi}(x_{1:n}\mid\theta)\;d\theta_{-i}\,d\theta_i \notag \\
&\;\ge\; C_\lambda \int_R^{b_i} \frac{z_i^2}{\phi(z_i)^2}\,d\theta_i \overset{(1)}{=} 
C_\lambda (b_i-a_i)\int_{2C/|\lambda|}^{\infty} \frac{z_i^2}{\phi(z_i)}\,dz_i \overset{(2)}{=} \infty,
\end{align}
where $C_\lambda \;:= m \pi_{\mathrm{ref}} \mathrm{Leb}(K)\lambda^4 / (4(1-\lambda^2)^2(b_i-a_i)^2)Z_{\text{ref}}^{-1} >0$ and $Z_{\text{ref}}$ is the normalisation constant of the reference posterior. $(1)$ holds due to the change of variables and $(2)$ holds as $\phi(z_i)$ decays faster than any polynomial growth as $z_i\to\infty$. 
Thus, for every \(\lambda\in(-1,1)\setminus\{0\}\),
\(s_{\pi(\cdot\mid\lambda)}\notin L^2(\widetilde\Pi_{\mathrm{ref}})\) and, in fact,
\(\mathrm{FD}(\widetilde\Pi_{\mathrm{ref}}\Vert\widetilde\Pi^\lambda)=\infty\). Hence \Cref{assump:fd} is violated in the original parametrization.

For \Cref{assump:fd} to hold, we reparametrise the model in terms of $z$ and write the new posterior as $\trueposterior^z$.
The reference prior becomes 
$\pi^z_{\mathrm{ref}}(z) = \mathcal{N}(z;0, I_{\paramdim \times \paramdim})$ 
and the candidate prior is 
$\pi^z(z|\lambda) = \pi^z_\mathrm{ref}(z)\, c_{\lambda}(z_i, z_j) 
= \mathcal{N}(z;0, \Sigma)$, 
where $\Sigma$ has $1$-s on the main diagonal and $\lambda$ in the $(i,j)$-th 
off-diagonal coordinates. The score of the candidate prior is
$s_{\pi^z(\cdot\mid \lambda)}(z) = -\Sigma^{-1} z$,
which is affine in $z$. Therefore, $\mathbb{E}_{z \sim \trueposterior^z}\left[\|s_{\pi^z(\cdot\mid \lambda)}(z)\|_2^2\right]
\leq \|\Sigma^{-1}\|_{\mathrm{op}}^2\, \mathbb{E}_{z \sim \trueposterior^z}\left[\|z\|_2^2\right]$. $\|\Sigma^{-1}\|_{\mathrm{op}} < \infty$ for $|\lambda| < 1$, and we have
$\mathbb{E}_{z\sim\trueposterior^z}\bigl[\|z\|_2^2\bigr]
\le  \paramdim M / Z_{\text{ref}}
\;<\; \infty$,
using $\int\|z\|^2\phi(z)\,dz=\paramdim$ for the standard $\paramdim$-dimensional Gaussian.
Hence, $s_{\pi^z(\cdot\mid \lambda)}(z) \in L^2(\tilde{\Pi}_{\mathrm{ref}}^z)$. 
All the other conditions of \Cref{assump:fd} are satisfied in the reparametrised space as well as we work with standard Gaussians (following similar logic to \Cref{appsubsec:toy_gaussian_model}).

\end{document}